\patchcmd\maketitle{\def\@makefnmark{\rlap{\@textsuperscript{\normalfont\@thefnmark}}}}{}{}{}
\def\thanksA#1{
  \footnotemarkA\protected@xdef\@thanks{\@thanks%
        \protect\footnotetextA[\the \c@footnoteA]{#1}}%
}
\def\thanksB#1{%
  \footnotemarkB%
  \protected@xdef\@thanks{\@thanks%
        \protect\footnotetextB[\the \c@footnoteB]{#1}}%
}
\newcommand{\overleftsmallarrow}{\mathpalette{\overarrowsmall@\leftarrowfill@}}
\newcommand{\overarrowsmall@}[3]{%
  \vbox{%
    \ialign{%
      ##\crcr
      #1{\smaller@style{#2}}\crcr
      \noalign{\nointerlineskip}%
      $\m@th\hfil#2#3\hfil$\crcr
    }%
  }%
}
\def\smaller@style#1{%
  \ifx#1\displaystyle\scriptstyle\else
    \ifx#1\textstyle\scriptstyle\else
      \scriptscriptstyle
    \fi
  \fi
}
\newcommand{\rev}{\overleftsmallarrow}
\newtheorem{thm}{\bf Theorem} [section]
\newtheorem{lem}[thm]{Lemma}
\newtheorem{prop}[thm]{Proposition}
\newtheorem{cor}[thm]{Corollary}
\newtheorem{Defi}{Definition}
\newtheorem{Assu}{Assumption}
\numberwithin{Exa}{section}
\numberwithin{equation}{section}
\numberwithin{Lem}{section}
\numberwithin{Defi}{section}
\numberwithin{Theo}{section}
\numberwithin{Rem}{section}
\numberwithin{Coro}{section}
\numberwithin{Fig}{section}
\newtheorem{rem}[thm]{Remark}
\newcommand{\R}{\mathbb R}
\newcommand{\pt}{\partial}
\newcommand{\E}{\mathbf E}
\renewcommand{\P}{\mathbf P}
\newcommand{\ind}{\mathbf 1}
\newcommand{\e}{\epsilon}
\newcommand{\id}{\mathbf{Id}}
\renewcommand{\d}{\mathrm{d}}
\begin{document}

\title{\bf\Large A study of path measures based on second-order Hamilton--Jacobi equations and their applications \\
in stochastic thermodynamics
}

\author{
\bf\normalsize{
Jianyu Hu\thanksA{Division of Mathematical Sciences, School of Physical and Mathematical Sciences, Nanyang Technological University, 21 Nanyang Link, Singapore 637371. Email: \texttt{jianyu.hu@ntu.edu.sg}},
Qiao Huang\thanksA{School of Mathematics, Southeast University, Nanjing 211189, P.R. China. Email: \texttt{qiao.huang@seu.edu.cn}}
$^{,}$\thanksB{Corresponding author},
Yuanfei Huang\thanksA{Asia Pacific Center for Theoretical Physics, Pohang 37673, Korea, and Department of Mathematics, City University of Hong Kong, Kowloon, Hong Kong SAR. Email: \texttt{yuanfei.huang@apctp.org}},
Jean-Claude Zambrini\thanksA{Department of Mathematics, Faculty of Sciences, Group of Mathematical Physics (GFMUL), University of Lisbon, Campo Grande, Edifício C6, 1749-016, Lisbon, Portugal
. Email: \texttt{jczambrini@fc.ul.pt}}
}
}

\maketitle
\vspace{-0.5in}


\begin{abstract}
This paper provides a systematic investigation of the mathematical structure of path measures and their profound connections to stochastic differential equations (SDEs) through the framework of second-order Hamilton--Jacobi (HJ) equations. This approach establishes a unified methodology for analyzing large deviation principles (LDPs), entropy minimization, and entropy production in stochastic systems. Second-order HJ equations are shown to play a central role in bridging stochastic dynamics and measure theory while forming the foundation of stochastic geometric mechanics and their applications in stochastic thermodynamics.

The large deviation rate function is rigorously derived from the probabilistic structure of path measures and proved to be equivalent to the Onsager--Machlup functional of stochastic gradient systems coupled with second-order HJ equations. We revisit entropy minimization problems, including finite time horizon problems and Schr\"{o}dinger's problem, demonstrating the connections with stochastic geometric mechanics. Furthermore, we present a novel decomposition of entropy production for stochastic systems, revealing that thermodynamic irreversibility can be interpreted as the difference of the corresponding forward and backward second-order HJ equations. 
Together, this work establishes a comprehensive mathematical study of the relations between path measures and stochastic dynamical systems, and their diverse applications in stochastic thermodynamics and beyond.
  \bigskip\\
  \textbf{AMS 2020 Mathematics Subject Classification:} 37H05, 82C31, 80M60, 60H10, 70L10, 49L12. \\ 
  \textbf{Keywords and Phrases:} Second-order Hamilton--Jacobi equations, Onsager--Machlup functional, entropy minimization, entropy production, most probable paths, stochastic geometric mechanics.
\end{abstract}

\newpage

\tableofcontents

\section{Introduction}

Classical Hamilton--Jacobi (HJ) theory offers a powerful approach by reformulating Hamiltonian mechanics as a first-order partial differential equation for the action function~\cite{arnol2013mathematical}, while simultaneously providing profound geometric insights into the integration of motion~\cite{giaquinta1996calculus}. The modern stochastic extension of Hamilton--Jacobi formalism, formulated via \emph{second-order Hamilton--Jacobi equations}, enables a rigorous variational approach to stochastic processes with non-differentiable trajectories.
The history of second-order HJ equations originates from the field of stochastic control, where foundational contributions by Bismut \cite{bismut1973conjugate,bismut1976linear}, Peng \cite{peng1992stochastic}, 
Pardoux \cite{pardoux1999bsdes}, and P.-L. Lions \cite{crandall1984some} laid the groundwork for their systematic development. In stochastic control problems, second-order HJ equations naturally emerge as tools for characterizing value functions, particularly through the dynamic programming principle and backward stochastic differential equations \cite{bismut1973conjugate,peng1992stochastic,pardoux1990adapted}. In Euclidean quantum mechanics, a probabilistic analogy with quantum mechanics inspired by Schr\"odinger \cite{Schrodinger1932}, these equations serve as an analytical bridge between stochastic processes and quantum dynamics 
\cite{chung2003introduction}. 
Moreover, second-order Hamilton--Jacobi equations play a central role in stochastic optimal transport problems like Schr\"odinger's problem, where they govern the evolution of cost functions and probability measures in systems driven by stochastic flows \cite{mikami2004monge,mikami2021stochastic,leonard2012schrodinger,leonard2014survey}. More recently, the 2nd-order Hamilton--Jacobi theory has been developed for stochastic geometric mechanics, connecting to stochastic Lagrangian and Hamiltonian systems via canonical transformations of second-order symplectic structures. It derives stochastic Hamilton's equations and variational principles \cite{huang2023second,huang2022hamilton,huang2023gauge}, capturing the interplay between noise and geometry and offering geometric insights into stochastic optimal transport. Second-order HJ equations also act as a bridge connecting stochastic geometric mechanics and statistical mechanics \cite{huang2023stochastic}. 



The perspective of stochastic thermodynamics engages in profound dialogue with probability theory, where stochastic processes are axiomatically constructed through the path space $C_0([0,T]; \R^d)$, the space of continuous paths starting at the origin, with Wiener measure $\mu_0$. In this context, the Wiener measure and its generalizations serve as fundamental dynamical primitives. At the core of this duality lies Girsanov's theorem, a cornerstone of measure theory. This theorem governs measure transformations under absolute continuity conditions, enabling a rigorous analysis of measure-theoretic operations. These include scaling and shift transformations on path spaces, conditional measure reconstruction, and the study of time-marginal densities and time-reversal symmetries in stochastic processes.

In this paper, we focus on a Gibbs measure $\nu_0$ on $C_0([0,T]; \R^d)$, which has a density with respect to the Wiener
measure $\mu_0$ of the form 
\begin{equation}\label{mesuare-nu}
  \frac{\d\nu_0}{\d\mu_0} \varpropto \exp \left( -\Phi \right),
\end{equation}
with some ``energy" functional $\Phi: C_0([0,T]; \R^d) \rightarrow \mathbb{R}$.  
The measure $\nu_0$ arises naturally in a variety of applications, including entropy production along stochastic trajectories \cite{seifert2005entropy}, the Kullback--Leibler (KL) divergence in information projection \cite{todorov2009efficient}, the theory of conditioned diffusions \cite{hairer2011signal}, and the Bayesian approach to inverse problems \cite{stuart2010inverse}.
From a Bayesian perspective, the study \cite{dashti2013map} analyzed the maximum a posteriori (MAP) estimator of the Onsager--Machlup (OM) action functional associated with the distribution $\nu_0$ defined in \eqref{mesuare-nu}. This MAP estimator corresponds to the most probable paths described by the OM functional for stochastic differential equations (SDEs) \cite{durr1978onsager}. Subsequent studies, such as \cite{pinski2015kullback,lu2017gaussian}, investigated Gaussian approximations for transition paths, utilizing the Kullback--Leibler (KL) divergence to quantify these approximations.  More recently, \cite{selk2021information} established a theoretical framework connecting the information projection with the OM action functional in the context of shifted measures, specifically focusing on the laws of SDEs with constant drifts. However, the precise relation between the measure $\nu_0$ in \eqref{mesuare-nu} and the underlying SDE remains unclear. 
From the perspective of stochastic thermodynamics \cite{peliti2021stochastic}, \cite{seifert2005entropy} introduced the concept of entropy production along a single trajectory in nonequilibrium systems, providing a comprehensive framework for understanding thermodynamic quantities in these systems. In \cite{boffi2024deep}, the authors rigorously derived the entropy production rate starting from the stationary state of Langevin systems. However, this derivation remains notably limited and technically intricate, primarily due to the involvement of time-reversed SDEs and the stationary nonequilibrium setting, as well as the lack of a clear geometric interpretation.

The main result of the present paper, in short, is given below; for a precise statement, of Theorem \ref{Cor-sde}. We establish an equivalence between the measure $\nu_0$ in \eqref{mesuare-nu} and the underlying stochastic gradient system 
\begin{align}\label{eqn:SDEintroduction}
\mathrm dX(t)= -\nabla S(t,X(t))\mathrm d t +  \mathrm dB(t), \quad X(0) =0,
\end{align} 
via a second-order Hamilton--Jacobi equation
\begin{equation}\label{HJ-intro}
  \begin{cases}
    \partial_t S(t,x) -\frac{1}{2} |\nabla S(t,x)|^2 + \frac{1}{2}\Delta S(t,x) = -V(t,x), & (t,x)\in [0,T)\times\R^d, \\
    S(T,x) =  g(x), & x\in\R^d,
  \end{cases}
\end{equation}
where we require the functional $\Phi$ to take the form of a cost function (using mass transportation terminology)
\begin{equation}\label{eqn:Phi=intV} 
\Phi(\omega) = \int_0^T V(t, \omega(t)) \, \mathrm{d}t + g(\omega(T)),
\end{equation}
$V: [0,T] \times \mathbb{R}^d \to \mathbb{R}$ and $g: \mathbb{R}^d \to \mathbb{R}$ stand for the running cost and terminal cost, respectively.

The second-order Hamilton--Jacobi equation \eqref{HJ-intro} (known as Hamilton--Jacobi--Bellman (HJB) in stochastic control theory \cite{fleming2006controlled}), establishes a significant connection between path measures \eqref{mesuare-nu} and stochastic dynamical systems \eqref{eqn:SDEintroduction}.

The proof, especially, to derive the 2nd-order HJ equation \eqref{HJ-intro} from $\nu_0$, strongly relies on a stochastic version of the fundamental theorem of calculus, i.e., Lemma \ref{Stoch-FTC}. This lemma highlights a profound rigidity: if the terminal value of one function of Brownian paths equals the time integral of another function along Brownian paths for almost every path, then any spatial dependence that could produce stochastic fluctuations must vanish — the only remaining possibility is that the integrand depends purely on time, which effectively reduces the statement to the classical fundamental theorem of calculus. In other words, echoing It\^o's formula, the presence or absence of the stochastic integral (martingale) term determines whether any spatial dependence can persist. 

Progress in this direction was made by C. L\'eonard in two related works, strongly motivated by Schr\"odinger's problem.
In \cite{leonard2011stochastic}, he analyzed what he called generalized $h$-transforms of a reference Markov process, with the same form as \eqref{mesuare-nu}-\eqref{eqn:Phi=intV}, under a finite relative entropy assumption, and derived the infinitesimal generator of the transformed law, without requiring smoothness of the potentials. This yields explicit formulas for the modified drift both in the diffusion and in the jump process setting.
Later, in \cite{leonard2022feynman}, he established trajectorial versions of the Feynman--Kac and Hamilton--Jacobi--Bellman identities in the same entropy-based framework, showing that the associated Feynman--Kac semigroup remains well-defined and that the transformed process is again a diffusion with the same diffusion matrix but with drift modified by an extended gradient term. These results focus on deriving the drift of the transformed process from the HJB equation, but do not provide the converse direction, namely, deducing HJB equation from the dynamics of the transformed process itself. The latter constitutes a main result of the present work, where we establish such a reverse implication by using the aforementioned stochastic counterpart of the fundamental theorem of calculus. 

This result offers new insights into the dynamical and statistical properties of thermodynamic processes, encompassing concepts such as the Onsager--Machlup functional, large deviations, entropy minimization, entropy production, and most probable paths.
On one hand, the measure-theoretic perspective on path measures allows for concise yet broadly accessible and applicable formulations. 
On the other hand, the viewpoint of stochastic dynamical systems provides richer and more detailed descriptions of the behavior of thermodynamic observables.

Based on the equivalence established via second-order Hamilton--Jacobi equations, we summarize the following key applications:

\begin{enumerate}[(i).]
\item 

Both the path measure and the stochastic gradient system satisfy a large deviation principle (LDP) in the small noise regime, with the associated rate function coinciding, up to a constant, with Onsager--Machlup action functional derived from the potential $\Phi$. This coincidence is nontrivial, as Onsager--Machlup functional for SDEs typically differs from the standard Freidlin--Wentzell rate function by a divergence term involving the drift field (cf. \cite{PrivaultYangZambrini2016}). In our case, however, the two coincide because the SDE is coupled with a 2nd-order HJ equation that includes a small Laplacian term, ensuring equivalence with the path measure. As a result, the Onsager--Machlup functional offers a unified framework for describing the large deviation behavior of both path measures and gradient systems, with the analysis grounded in 2nd-order HJ equations.




\item We revisit the equivalence between entropy minimization problems with path measure constraints and stochastic optimal control problems with SDE constraints. Our main result serves as a natural bridge, allowing one problem to be solved via the other.
This equivalence can be viewed as a generalization of the portmanteau theorem presented in \cite{selk2021information}, extending it from Cameron--Martin (path-independent) shifts to path-dependent shifts.
Applying this equivalence, we recover the solutions to finite-horizon problems by imposing a fixed initial distribution on path measures. The 2nd-order HJ equation, then, determines the optimal drift field associated with the optimal measure.
We further transform Schr\"odinger's problem into the framework of stochastic optimal control by assuming a terminal cost. We explicitly derive the corresponding Schr\"odinger's system and obtain the solution using the 2nd-order HJ equation.
Finally, we establish connections with stochastic geometric mechanics by deriving the associated stochastic Euler--Lagrange equation, thereby providing a comprehensive geometric interpretation of entropy minimization, stochastic optimal control, and Schr\"odinger's problem.

\item 

By composing path space measure $\nu$ (see \eqref{mesuare-nu}) with time-reversal, we obtain the associated time-reversed stochastic process. Our main theorem establishes its SDE representation and induced path measure. Using Girsanov's theorem, we derive the logarithmic path density ratio between forward and reversed measures, decomposing into boundary condition differences for forward/backward Schr\"odinger systems (coupled second-order Hamilton--Jacobi equations). In stochastic thermodynamics, this logarithmic ratio corresponds to total path entropy production $\Delta s_{\text{tot}}$ \cite{seifert2005entropy}, which satisfies $\mathbb{E}[\Delta s_{\text{tot}}] \geq 0$ by the Second Law. This quantifies directionality and provides an integral fluctuation theorem and a detailed fluctuation theorem for irreversibility. Our results are surprisingly universal since they are not only valid for stationary condition but also for arbitrary general nonequilibrium conditions. Within the SDE framework (time-independent drift), we rigorously prove these results using stochastic calculus, showing consistency with measure-theoretic formulations. Theorem \ref{theo:irreversibilityand2ndHJ} reveals that thermodynamic irreversibility originates from the boundary condition mismatch in Schr\"odinger systems.
\end{enumerate}

\section{The setting and the main result}\label{The setting and the main result}

In this section, we introduce the framework for path measures and present our main result, stated in Theorem \ref{Cor-sde}. Specifically, we establish a notable correspondence between path measures $\{\nu^\e,\e>0\}$ in \eqref{mu-tilde} with $\Phi^\epsilon$ in \eqref{phi} and stochastic gradient systems \eqref{SDE-2} using the second-order Hamilton--Jacobi (2nd-order HJ) equation \eqref{2nd-order HJ}. This correspondence is based on a stochastic analogue of the fundamental theorem of calculus, given in Lemma \ref{Stoch-FTC}. Furthermore, we use Feynman–Kac representation and the Cole–Hopf transformation to demonstrate the existence and uniqueness of classical solutions to this 2nd-order HJ equation.

\subsection{The path space}\label{subsec:path}




Let $\mathcal{C}^{d,T} := C([0,T]; \R^d)$ be the \emph{path space} of all continuous functions $\omega: [0,T]\to \R^d$. We equip it with the supremum norm
\begin{equation*}
  \|\omega\|_T := \max_{t\in [0,T]} |\omega(t)|, \quad \omega \in \mathcal{C}^{d,T},
\end{equation*}
where $|\cdot|$ denotes the Euclidean norm in $\R^d$. $\mathcal{C}^{d,T}$ is a Banach space under this norm. Let $\mathcal B(\mathcal{C}^{d,T})$ be the Borel $\sigma$-field generated by the open sets in $\mathcal{C}^{d,T}$. For each $t\in [0,T]$, define a sub-$\sigma$-field of $\mathcal B(\mathcal{C}^{d,T})$ by $\mathcal B_t(\mathcal{C}^{d,T}) := \theta_t^{-1}(\mathcal B(\mathcal{C}^{d,T}))$, where $\theta_t: \mathcal{C}^{d,T} \to \mathcal{C}^{d,T}$ is the truncation map $(\theta_t \omega)(s) = \omega(t \wedge s), t\in [0,T]$. The set $\{ \mathcal B_t(\mathcal{C}^{d,T}) \}_{t\in[0,T]}$ forms a natural increasing filtration of $(\mathcal{C}^{d,T},\mathcal B(\mathcal{C}^{d,T}))$. We denote by $\mathcal{P}$ the set of all probability measures on $(\mathcal{C}^{d,T},\mathcal B(\mathcal{C}^{d,T}))$.

Whenever $X = \{X(t)\}_{t\in[0,T]}$ is a continuous stochastic process on a probability space $(\Omega, \mathcal F, \mathbf P)$, it can be regarded as a random variable on $(\Omega, \mathcal F, \mathbf P)$ with values in $(\mathcal{C}^{d,T},\mathcal B(\mathcal{C}^{d,T}))$, and the pushforward measure $\operatorname{Law} (X) := X_* \mathbf P = \mathbf P \circ X^{-1}$ on $(\mathcal{C}^{d,T},\mathcal B(\mathcal{C}^{d,T}))$ is called the law of $X$.

Denote by $\mathcal{C}^{d,T}_0$ the subspace of $\mathcal{C}^{d,T}$ consisting only of those functions that take the value $0\in\R^d$ at time 0.
The unique probability measure $\mu_0$ on $(\mathcal{C}^{d,T}_0,\mathcal B(\mathcal{C}^{d,T}_0))$, in which the coordinate mapping process 
$$W(t,\omega) := \omega(t), \quad t\in [0,T],$$
is a standard $d$-dimensional Brownian motion, is called Wiener measure. Note that $\mu_0$ can be pushforwarded to the whole path space $\mathcal{C}^{d,T}$ by the embedding $\mathcal{C}^{d,T}_0 \subset \mathcal{C}^{d,T}$.
Conversely, a standard, $d$-dimensional Brownian motion $B$ defined on any probability space can be thought of as a random variable in $(\mathcal{C}^{d,T}_0,\mathcal B(\mathcal{C}^{d,T}_0))$; regarded this way, the Brownian motion $B$ induces the Wiener measure by $\mu_0 = \text{Law}(B)$. Thus, $(\mathcal{C}^{d,T}_0,\mathcal B(\mathcal{C}^{d,T}_0), \mu_0)$ is called the canonical probability space for Brownian
motion.

In the sequel, we denote by $C^{k}_b(U)$ the space of all bounded continuous functions on a subdomain $U \subseteq\R^m$ with bounded and continuous derivatives up to order $k$. Similarly, $C^{k,l}_b([0,T]\times \R^d)$ denotes the space of all bounded continuous functions on $[0,T]\times \R^d$ whose derivatives up to order $k$ in $t$ and up to order $l$ in $x$ are also bounded and continuous.

We now introduce several basic operations and concepts in the path space $\mathcal{C}^{d,T}$ that will be used throughout the paper. More details are provided in Appendix \ref{appendixA}.
\begin{itemize}
    \item{\textbf{Scaling:}} For an $\e>0$, the scaling map $\delta_\e : \mathcal{C}^{d,T} \to \mathcal{C}^{d,T}$ is defined as $\delta_\e \omega = \sqrt\e \omega$. 
We denote by $\mu^\e_0 := (\delta_\e)_* \mu_0 = \mu_0 \circ \delta_\e^{-1}$ the $\e$-scaling of the Wiener measure $\mu_0$. 
The scaling measure $\mu^\e_0$ is the probability distribution of the $\e$-scaled Brownian motion $\sqrt\e W$.

It should be noted that this type of $\e$-scaling is not applied to a general measure $\nu$.

\item{\textbf{Shifts:}} For a path $\gamma \in \mathcal{C}^{d,T}$, the shift map associated with $\gamma$ is the map $T_\gamma: \mathcal{C}^{d,T} \to \mathcal{C}^{d,T}$ defined by $T_\gamma\omega = \omega + \gamma$. For a measure $\nu$ on $\mathcal{C}^{d,T}$, the pushforward $(T_\gamma)_*\nu = \nu \circ T_\gamma^{-1}$ is called the shift measure of $\nu$ by $\gamma$. We denote
\begin{equation}\label{Wiener-scaling-shift}
  \mu_x^\e := (T_x)_* \mu^\e_0 = (T_x)_* (\delta_\e)_* \mu_0.
\end{equation}
For an element $\omega \in \mathcal{C}^{d,T}$, we will denote
\begin{equation}\label{scaling-shift-abbr}
  \omega_x^\e := T_x \delta_\e \omega = x+\sqrt{\e} \omega.
\end{equation}
Clearly, the law of $\omega_x^\e$ under $\mu_0$ is $\mu_x^\e$.

\item{\textbf{Time-marginals:}} For any $t\in [0,T]$, let $\pi_t : \mathcal{C}^{d,T} \to \R^d$ be the projection map at time $t$, given by $\pi_t(\omega) = \omega(t)$. One can regard each $\pi_t$ as a random vector on $(\mathcal{C}^{d,T},\mathcal B(\mathcal{C}^{d,T}))$.
For a measure $\nu$ on $(\mathcal{C}^{d,T},\mathcal B(\mathcal{C}^{d,T}))$, we define its marginal at time $t$ by $\nu|_t := (\pi_t)_* \nu = \nu \circ \pi_t^{-1}$, as a measure on $\R^d$.
The time marginals of the Wiener measure $\mu_0$ have the following Lebesgue densities, known as heat kernels:
\begin{equation*}
  \rho_0(t, x) := \frac{\mathrm{d} \mu_0|_t (x)}{\mathrm{d} x} = \frac{1}{(2\pi t)^{d/2}}e^{-\frac{|x|^2}{2t}}, \quad (t,x) \in [0,T] \times \R^d.
\end{equation*}
The marginal of  $\mu_x^\e$ at time $t$ has the Lebesgue density $\rho_x^\e(t, \cdot) = \rho_0(\e t, \cdot - x)$.

\item{\textbf{Conditioning:}}
A Borel measurable map $f: \mathcal{C}^{d,T} \to \R^d$ can be regarded as a random element on $(\mathcal{C}^{d,T},\mathcal B(\mathcal{C}^{d,T}))$. One can define the conditional expectation of a $\sigma$-finite measure $\nu$ given $f$, denoted as $\E_\nu(\cdot | f) := \E_\nu(\cdot | \sigma(f))$, as well as the regular conditional measure of $\nu$ given $f=x\in\R^d$, denoted as $\nu(\cdot | f = x)$.

\item{\textbf{Kullback--Leibler divergence:}} Given two measures $\nu$ and $\eta$ on $(\mathcal{C}^{d,T}, \mathcal B(\mathcal{C}^{d,T}))$, the Kullback--Leibler (KL) divergence (or relative entropy) of $\nu$ with respect to $\eta$ is defined by the non-negative real number
\begin{equation*}
  D_{\mathrm{KL}}\left(\nu \| \eta\right) := 
  \begin{cases}
    \E_\nu\left[ \log \left(\frac{\mathrm d \nu}{\mathrm d \eta}\right) \right], & \nu \ll \eta, \\
    \infty, & \text{otherwise}.
  \end{cases}
\end{equation*}
It says how different $\nu$ is from $\eta$. Although $D_{\mathrm{KL}}$ is not a metric, it is useful, for instance, in information geometry.
\end{itemize}

\subsection{Path measures}




Let $\{\mu^\e_x: x\in\R^d, \e>0\}$ be the family of shifted and rescaled Wiener measures on $(\mathcal{C}^{d,T}, \mathcal B(\mathcal{C}^{d,T}))$, defined in \eqref{Wiener-scaling-shift}, referred to as reference measures.
Let $\Phi^\e: \mathcal{C}^{d,T} \rightarrow \mathbb{R}$, $\e>0$, be a family of energy functionals that give a family of probability measures $\{\nu^\e_x: x\in\R^d, \e>0\}$ on $\mathcal{C}^{d,T}$, of which each is absolutely continuous w.r.t. $\mu_x^\e$, via the following Radon--Nikodym derivative
\begin{equation}\label{mu-tilde}
  \frac{\mathrm{d} \nu^\e_x}{\mathrm{d}\mu_x^\e}(\omega) = \frac{1}{Z^\e_{\Phi^\e}(x)} \exp \left(-\frac{1}{\e} \Phi^\e(\omega)\right),
\end{equation}
where 
\begin{equation}\label{norm-const}
  Z^\e_{\Phi^\e}(x) := \E_{\mu_x^\e} \left[ e^{-\frac{1}{\e}\Phi^\e} \right]
\end{equation}
is the normalizing constant. Note that each $\nu^\e_x$ is supported in $\mathcal{C}^{d,T}_x$. In the context of statistical mechanics, $\nu^\e_x$ is referred to as a Gibbs measure, and $Z^\e_{\Phi^\e}(x)$ is known as the partition function.

\begin{Assu}\label{asmp-Phi-1}
For each $\e>0$ and every $r > 0$, there exists an $M = M(\e,r)\in\R$, such that for all $\omega \in \mathcal{C}^{d,T}$,  
$$  
\Phi^\e(\omega) \ge M - r \|\omega\|_T^2.
$$
\end{Assu}

The specific form of the lower bound in Assumption \ref{asmp-Phi-1} is designed to ensure that the normalizing constant $Z^\e_{\Phi^\e}(x)$ is finite so that the r.h.s. of expression \eqref{mu-tilde} is normalizable to give the probability measure $\nu_x^\e$. Indeed, as $\mu_x^\e$ is a Gaussian measure, Fernique's theorem (see \cite[Corollary 2.8.6]{bogachev1998gaussian}) says that there exists $\alpha > 0$ such that,
\begin{equation*}
  \E_{\mu_x^\e} \left[ e^{ \alpha \|\omega\|_T^2} \right] < \infty. 
\end{equation*}
Thus, putting $r = \e \alpha$ in Assumption \ref{asmp-Phi-1}, we have
\begin{equation*}
  Z^\e_{\Phi^\e}(x) \le \E_{\mu_x^\e} \left[ e^{\alpha \|\omega\|_T^2 - M / \e} \right] < \infty. 
\end{equation*}

\subsubsection*{The total measures}

It follows from \eqref{disint} that, given a measure $\mu^\e|_{t=0}$ on $\R^d$, one can construct a measure $\mu^\e$ on $(\mathcal{C}^{d,T},\mathcal B(\mathcal{C}^{d,T}))$ with initial measure $\mu^\e|_{t=0}$ and transition measure $\mu^\e_x$, as follows
\begin{equation}\label{mu-tilde-total}
  \mu^\e(\d\omega) = \int_{\R^d} \mu^\e_x(\d\omega) \mu^\e|_{t=0}(\d x).
\end{equation}
By Corollary \ref{cor-0}-(ii) and \eqref{mu-tilde},
\begin{equation*}
  \frac{\d\nu^\e}{\d\mu^\e}(\omega) 
  = \frac{\d\nu^\e|_{t=0}}{\d\mu^\e|_{t=0}}(\omega(0)) \frac{1}{Z^\e_{\Phi^\e}(\omega(0))} \exp \left(-\frac{1}{\e} \Phi^\e(\omega)\right).
\end{equation*}
If we denote
\begin{equation}\label{f-def}
  f^\e(x) := \e\log Z^\e_{\Phi^\e}(x) - \e\log \frac{\d\nu^\e|_{t=0}}{\d\mu^\e|_{t=0}}(x),
\end{equation}
we then have
\begin{equation}\label{nu-mu}
  \frac{\d\nu^\e}{\d\mu^\e}(\omega) = \exp \left\{ -\frac{1}{\e} [f^\e(\omega(0)) + \Phi^\e(\omega)] \right\}.
\end{equation}
Thus, once the initial measure $\mu^\e|_{t=0}$ of $\mu^\e$ and the time zero marginal Radon--Nikodym density $\frac{\d\nu^\e|_{t=0}}{\d\mu^\e|_{t=0}}$ are known, the total measure $\nu^\e$ is fully determined.

\subsubsection*{Potential energy of the cost function form}

To link the somewhat abstract path probability measures $\nu^\e_x$ with concrete SDEs, we consider the following (potential) energy functionals: 
\begin{Assu}\label{asmp-Phi-cost}
  For each $\e>0$, $\Phi^\e: \mathcal{C}^{d,T} \rightarrow \mathbb{R}$ is of the cost function form
\begin{equation}\label{phi}
  \Phi^\e(\omega) = \int_0^T V(t, \omega(t)) \mathrm dt +g^\e(\omega(T)),
\end{equation}
with some bounded below functions $V: [0,T]\times \R^d \to \R$ and $g^\e: \R^d \to \R$.
\end{Assu}

The first term of $\Phi^\e$ is called running cost in optimal control, and the second one the terminal cost.
From \eqref{mu-tilde}, the family of transition measures $\{\nu^\e_x: \e>0 \}$ is given by
\begin{equation}\label{nu-mu-Phi}
  \frac{\d\nu^\e_x}{\d\mu^\e_x}(\omega) = \frac{1}{Z^\e_{\Phi^\e}(x)} \exp \left\{ -\frac{1}{\e} \left[ \int_0^T V(t, \omega(t)) \mathrm dt +g^\e(\omega(T)) \right] \right\}.
\end{equation}
Moreover, the family of total measures $\{\nu^\e: \e>0 \}$ admits also the following more symmetric form, by \eqref{nu-mu},
\begin{equation*}
  \frac{\d\nu^\e}{\d\mu^\e}(\omega) = \exp \left\{ -\frac{1}{\e} \left[ f^\e(\omega(0)) + \int_0^T V(t, \omega(t)) \mathrm dt +g^\e(\omega(T)) \right] \right\},
\end{equation*}
which is sometimes called a generalized $h$-transform \cite{leonard2011stochastic} or the $(e^{-\frac{1}{\e} f^\e}, e^{-\frac{1}{\e} g^\e})$-transform of the reference measure $\mu^\e$ \cite{leonard2014survey}.

\subsection{Correspondence between path measures and SDEs via second-order HJ equations}\label{SDE correspondence}

In this section, we consider the family of transition probability measures $\{\nu^\e_x: x\in\R^d, \e>0\}$ on $\mathcal{C}^{d,T}$ defined in~\eqref{mu-tilde}, where the reference measures $\{\mu^\e_x: x\in\R^d, \e>0\}$ are the shifted scaled Wiener measures in \eqref{Wiener-scaling-shift}, and the potential \(\Phi^\e\) is explicitly given by~\eqref{phi}.

Under this measure structure, we establish a clear correspondence between probability measures $\nu^\e_x$ and SDEs. Specifically, for overdamped Langevin equations (SDEs with additive noise), we show that if the drift term satisfies a nonlinear heat equation, then the distribution of solutions to the SDE in path space coincides exactly with the specified path measure \(\nu_x^\e\).
Furthermore, we demonstrate that for certain stochastic gradient systems, this correspondence between measures and SDE solutions holds if and only if the potential associated with the stochastic gradient system satisfies an appropriate second-order HJ equation.
Finally, we extend this correspondence result to the setting of time-reversed stochastic differential equations, thereby establishing a broader theoretical framework connecting path measures, stochastic systems, and nonlinear PDEs. Recall the shorthand notation $\omega_x^\e := x+\sqrt{\e} \omega$ in \eqref{scaling-shift-abbr}.

\begin{lem}\label{lem-1}
Let Assumptions \ref{asmp-Phi-1} and \ref{asmp-Phi-cost} hold. Fix $\e>0$ and $x\in\R^d$. Let $X^\e_x$, $B$, $(\Omega, \mathcal F, \mathbf P, \{\mathcal F_t\}_{t\in[0,T]})$ be a weak solution of the following functional SDE
\begin{align*}
\mathrm dX^\e_x(t)= b^\e(t,X^\e_x)\mathrm d t + \sqrt\e \mathrm dB(t), \quad X^\e_x (0) =x,
\end{align*}
where $b^\e: [0,T] \times \mathcal{C}^{d,T} \to \R^d$ is an $\{ \mathcal B_t(\mathcal{C}^{d,T}) \}_{t\in[0,T]}$-adapted process satisfying 
\begin{equation}\label{Novikov-3}
   \E_{\mu_x^\e} \left[\exp \left(\frac{1}{2\e} \int_{0}^{T} |b^\e(t, \omega)|^{2} \mathrm{d} t\right)\right]<\infty. 
\end{equation}
Then the law of $X^\e_x$ is $\nu^\e_x$ if and only if $b^\e$ satisfies for $\mu_0$-a.s. $\omega\in \mathcal{C}^{d,T}_0$,
\begin{equation}\label{eqn-13}
  \e \log{Z^\e_{\Phi^\e}(x)} + \Phi^\e(\omega_x^\e) = -\sqrt\e \int_{0}^{T} b^\e(t, \omega_x^\e) \mathrm{d}\omega(t) + \frac{1}{2} \int_{0}^{T} |b^\e(t, \omega_x^\e)|^2 \mathrm{d} t.
\end{equation}
\end{lem}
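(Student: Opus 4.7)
The plan is to apply Girsanov's theorem to express the law of the SDE solution $X^\e_x$ on path space as an explicit Radon--Nikodym density against the reference measure $\mu^\e_x$, and then match this density with the one defining $\nu^\e_x$ in \eqref{mu-tilde}. The Novikov hypothesis \eqref{Novikov-3} is tailored precisely so that the exponential martingale generated by the integrand $b^\e/\sqrt\e$ is a true martingale, which is what makes Girsanov's theorem applicable in both directions and, in particular, pins down the law of $X^\e_x$ uniquely.

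Working on the weak-solution space $(\Omega,\mathcal F,\mathbf P)$, I would absorb the drift via the exponential density $\exp\bigl(-\frac{1}{\sqrt\e}\int_0^T b^\e(t,X^\e_x)\,dB(t) - \frac{1}{2\e}\int_0^T|b^\e|^2\,dt\bigr)$. Under the resulting equivalent measure $Q$, the process $\tilde B(t) := B(t) + \frac{1}{\sqrt\e}\int_0^t b^\e(s,X^\e_x)\,ds$ is a Brownian motion, so that $X^\e_x = x + \sqrt\e\,\tilde B$ under $Q$ and the $Q$-law of $X^\e_x$ is exactly $\mu^\e_x$. Inverting this change of measure and rewriting the stochastic integral via $d\tilde B = dX^\e_x/\sqrt\e$, one obtains for every bounded measurable $F$ on $\mathcal{C}^{d,T}_x$
\begin{equation*}
\mathbf E[F(X^\e_x)] = \int_{\mathcal{C}^{d,T}_x} F(\omega)\,\exp\!\left(\frac{1}{\e}\int_0^T b^\e(t,\omega)\,d\omega(t) - \frac{1}{2\e}\int_0^T|b^\e(t,\omega)|^2\,dt\right) d\mu^\e_x(\omega),
\end{equation*}
which identifies the Radon--Nikodym derivative $d\,\mathrm{Law}(X^\e_x)/d\mu^\e_x$.

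Comparing densities, the law of $X^\e_x$ equals $\nu^\e_x$ if and only if the exponential above coincides $\mu^\e_x$-a.s.\ with $\frac{1}{Z^\e_{\Phi^\e}(x)}\exp(-\Phi^\e(\omega)/\e)$. Equating the two, taking logarithms, multiplying by $-\e$, and finally transporting the resulting identity from $(\mathcal{C}^{d,T}_x,\mu^\e_x)$ to $(\mathcal{C}^{d,T}_0,\mu_0)$ through the measure-preserving substitution $\omega = x + \sqrt\e\,\tilde\omega$ (so $d\omega = \sqrt\e\,d\tilde\omega$) produces exactly \eqref{eqn-13}. Both implications follow at once from the fact that two probability measures absolutely continuous with respect to a common one coincide iff their densities do. The whole argument is essentially a change-of-measure exercise; the one point requiring care is the $\e$-bookkeeping, whereby the factor $1/\sqrt\e$ in the Girsanov drift shift together with the Jacobian $\sqrt\e$ of the substitution must reproduce the $\sqrt\e$ in front of the stochastic integral in \eqref{eqn-13} while leaving $\frac12\int|b^\e|^2\,dt$ unscaled.
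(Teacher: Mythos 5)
Your overall strategy---Girsanov plus a comparison of Radon--Nikodym densities against $\mu^\e_x$---is exactly the paper's, and the final identity you write for $\mathrm d\,\mathrm{Law}(X^\e_x)/\mathrm d\mu^\e_x$ and the subsequent log-and-rescale step are the same. The one substantive issue is the \emph{direction} in which you apply Girsanov. You start from the weak solution $X^\e_x$ under $\mathbf P$ and remove the drift via the exponential $\exp\bigl(-\tfrac{1}{\sqrt\e}\int_0^T b^\e(t,X^\e_x)\,\mathrm dB - \tfrac{1}{2\e}\int_0^T|b^\e(t,X^\e_x)|^2\,\mathrm dt\bigr)$. For that to define a probability measure $Q$ you need this exponential to be a true martingale under $\mathbf P$, and the natural Novikov condition for it is $\E_{\mathbf P}\bigl[\exp\bigl(\tfrac{1}{2\e}\int_0^T|b^\e(t,X^\e_x)|^2\,\mathrm dt\bigr)\bigr]<\infty$, i.e.\ an integrability condition along the \emph{solution}. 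The hypothesis \eqref{Novikov-3} is instead an expectation under $\mu^\e_x$, i.e.\ along the driftless (shifted, scaled) Brownian paths; it does not directly imply the condition you need, since the law of $X^\e_x$ is precisely what is being determined. Closing this requires an extra argument (e.g.\ first performing the forward transformation and deducing mutual absolute continuity, as in Karatzas--Shreve, Prop.~5.3.10), which your sketch does not supply.

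The paper avoids this by going the other way: since the law of $x+\sqrt\e B$ under $\mathbf P$ is $\mu^\e_x$, condition \eqref{Novikov-3} is \emph{verbatim} Novikov's condition for $\beta=\tfrac{1}{\sqrt\e}b^\e(\cdot,x+\sqrt\e B)$ under $\mathbf P$, so one may tilt $\mathbf P$ to a measure $\mathbf Q$ under which $x+\sqrt\e B$ solves the same functional SDE, identify $\mathrm{Law}_{\mathbf Q}(x+\sqrt\e B)$ with $\mathrm{Law}_{\mathbf P}(X^\e_x)$, and read off the density by conditioning the Girsanov factor on the path (this is where the disintegration lemma, Lemma \ref{lem-0}-(i), enters). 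Note that this identification itself quietly uses uniqueness in law for the functional SDE under \eqref{Novikov-3}; both your write-up and the paper's rely on it. Apart from reversing the direction of the change of measure---and supplying the missing martingale justification that reversal entails---your argument coincides with the paper's, and your $\e$-bookkeeping in passing from $(\mathcal{C}^{d,T}_x,\mu^\e_x)$ to $(\mathcal{C}^{d,T}_0,\mu_0)$ is correct.
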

The functional SDE in the above lemma means that the drift can depend on the past; that is, the SDE is non-Markovian \cite{ikeda2014stochastic}.

\begin{proof}
We first note that, as the law of $x+ \sqrt\e B$ is $\mu_x^\e$, condition \eqref{Novikov-3} amounts to
\begin{equation*}
  \E_\P \left[\exp \left(\frac{1}{2\e} \int_{0}^{T} |b^\e(\cdot, x+ \sqrt\e B)|^{2} \mathrm{d} t\right)\right]<\infty.
\end{equation*} 
Under this condition, we apply Lemma \ref{Gir-thm} (Girsanov theorem), by taking 
$\beta$ as $\frac{1}{\sqrt\e} b^\e(\cdot, x+ \sqrt\e B)$. We see that $B - \frac{1}{\sqrt\e} \int_0^\cdot b^\e(s, x+ \sqrt\e B) \mathrm ds$ is a standard Brownian motion under $\mathbf Q$ with density
\begin{equation}\label{Gis-tra}
\begin{split}
  \frac{\mathrm{d} \mathbf Q}{\mathrm{d} \P}(\omega) &= \exp \left(\frac{1}{\sqrt\e} \int_{0}^{T} b^\e(t, x+ \sqrt\e B(\omega)) \mathrm{d}B(t,\omega)-\frac{1}{2\e} \int_{0}^{T} |b^\e(t, x+ \sqrt\e B(\omega))|^{2} \mathrm{d} t\right).
\end{split}
\end{equation}
Thus, the law of $x+ \sqrt\e B$ under $\mathbf Q$ is the same as that of $X^\e_x$ under $\mathbf P$. It follows from Lemma \ref{lem-0}-(i) that, for $\mu_x^\e$-a.s. $\omega \in \mathcal{C}^{d,T}$,
\begin{equation*}
  \begin{split}
    \frac{\d (X^\e_x)_* \mathbf P}{\d \mu_x^\e} (\omega) &= \frac{\d (x+ \sqrt\e B)_* \mathbf Q}{\d (x+ \sqrt\e B)_* \mathbf P} (\omega) = \E_\P \left( \frac{\d \mathbf Q}{\d \mathbf P} \bigg| x+ \sqrt\e B = \omega \right) \\
    &= \E_\P \left[ \exp \left(\frac{1}{\sqrt\e} \int_{0}^{T} b^\e(t, x+ \sqrt\e B) \mathrm{d}B(t)-\frac{1}{2\e} \int_{0}^{T} |b^\e(t, x+ \sqrt\e B)|^{2} \mathrm{d} t\right) \bigg| x+ \sqrt\e B = \omega \right] \\
    &= \exp \left(\frac{1}{\e} \int_{0}^{T} b^\e(t, \omega) \mathrm{d} \omega(t)-\frac{1}{2\e} \int_{0}^{T} |b^\e(t, \omega)|^{2} \mathrm{d} t\right).
  \end{split}
\end{equation*}

This implies, compared with \eqref{mu-tilde}, that the law of $X^\e_x$ under $\P$ is $\nu^\e_x$ if and only if for $\mu_x^\e$-a.s. $\omega\in \mathcal{C}^{d,T}_x$,
\begin{equation*}
\e \log{Z^\e_{\Phi^\e}(x)} + \Phi^\e(\omega) = - \int_{0}^{T} b^\e(t, \omega) \mathrm{d}\omega(t) + \frac{1}{2} \int_{0}^{T} |b^\e(t, \omega)|^2 \mathrm{d} t,
\end{equation*}
or equivalently, \eqref{eqn-13} holds for $\mu_0$-a.s. $\omega\in \mathcal{C}^{d,T}_0$.
\end{proof}

We now present the main theorem of this paper, which establishes the equivalence between some path measures and stochastic gradient systems. A concise statement of this result is also provided in the introduction.

\begin{thm}\label{Cor-sde}
Let Assumptions \ref{asmp-Phi-1} and \ref{asmp-Phi-cost} hold. Fix $\e>0$ and $x\in\R^d$. Let $X^\e_x$, $B$, $(\Omega, \mathcal F, \mathbf P, \{\mathcal F_t\}_{t\in[0,T]})$ be a weak solution of the following SDE
\begin{align}\label{SDE-2}
  \mathrm dX^\e_x(t)= -\nabla S^\e(t,X^\e_x(t))\mathrm d t + \sqrt\e \mathrm dB(t), \quad X^\e_x (0) =x,
\end{align}
where the potential function $S^\e\in C_b^{1,3}([0,T]\times\mathbb{R}^d)$ satisfies 
\begin{equation}\label{Novikov-S}
   \E_{\mu_x^\e} \left[\exp \left(\frac{1}{2\e} \int_{0}^{T} |\nabla S^\e(t, \omega(t))|^{2} \mathrm{d} t\right)\right]<\infty. 
\end{equation}
Suppose $V\in C^{0,1}_b([0,T]\times\mathbb{R}^d)$ and $g^\e\in C^1_b(\R^d)$. 
Then the law of $X^\e_x$ is $\nu^\e_x$ if and only if $S^\e$ is determined (up to a function depending only on time) by the following second-order Hamilton--Jacobi (2nd-order HJ) equation
\begin{equation}\label{2nd-order HJ}
  \begin{cases}
    \partial_t S^\e(t,y) -\frac{1}{2} |\nabla S^\e(t,y)|^2 + \frac{\e}{2}\Delta S^\e(t,y) = -V(t,y), & (t,y)\in (0,T)\times\R^d, \\
    S^\e(T,y) =  g^\e(y), & y\in\R^d, \\
    S^\e(0,x) = -\e\log{Z^\e_{\Phi^\e}(x)}. &
  \end{cases}
\end{equation}
\end{thm}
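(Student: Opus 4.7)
The plan is to apply Lemma~\ref{lem-1} with the adapted drift $b^\e(t,\omega) := \nabla S^\e(t,\omega(t))$. Since $S^\e\in C_b^{1,3}$, assumption~\eqref{Novikov-S} is exactly the Novikov-type bound~\eqref{Novikov-3} required by Lemma~\ref{lem-1}, so the theorem reduces to the pathwise equivalence: identity~\eqref{eqn-13} holds for $\mu_0$-a.s.\ $\omega$ if and only if $S^\e$ satisfies~\eqref{2nd-order HJ}, where ``satisfies'' is meant up to an additive function of time that is invisible to the SDE.

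To extract analytic content from the stochastic integral in~\eqref{eqn-13}, I would apply It\^o's formula to $S^\e(t,\,x+\sqrt\e\,\omega(t))$ under $\mu_0$, where $t\mapsto x+\sqrt\e\,\omega(t)$ is a Brownian semimartingale starting at $x$ and $S^\e\in C_b^{1,3}$ is smooth enough to justify the formula. This gives
\begin{equation*}
\sqrt\e\int_0^T \nabla S^\e(t,x+\sqrt\e\,\omega(t))\,\d\omega(t) = S^\e(T,x+\sqrt\e\,\omega(T)) - S^\e(0,x) - \int_0^T \Big[\partial_t S^\e + \tfrac{\e}{2}\Delta S^\e\Big](t,x+\sqrt\e\,\omega(t))\,\d t.
\end{equation*}
Substituting this together with the cost-function form~\eqref{phi} of $\Phi^\e$ into~\eqref{eqn-13}, all stochastic integrals cancel and~\eqref{eqn-13} is seen to be equivalent to the purely pathwise relation
\begin{equation*}
A(x) + B\big(x+\sqrt\e\,\omega(T)\big) + \int_0^T R\big(t,x+\sqrt\e\,\omega(t)\big)\,\d t = 0, \qquad \mu_0\text{-a.s.},
\end{equation*}
where $A(x) := \e\log Z^\e_{\Phi^\e}(x) - S^\e(0,x)$, $B(y) := S^\e(T,y) + g^\e(y)$, and
\begin{equation*}
R(t,y) := V(t,y) - \partial_t S^\e(t,y) - \tfrac{1}{2}|\nabla S^\e(t,y)|^2 - \tfrac{\e}{2}\Delta S^\e(t,y)
\end{equation*}
is precisely the defect of~\eqref{2nd-order HJ}.

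The sufficiency direction is immediate: if $S^\e$ solves~\eqref{2nd-order HJ}, the three ingredients $A$, $B$ and $R$ all vanish identically, so the pathwise relation holds trivially, and Lemma~\ref{lem-1} yields that the law of $X^\e_x$ is $\nu^\e_x$. The necessity direction is the main obstacle, and it is exactly where the stochastic fundamental theorem of calculus (Lemma~\ref{Stoch-FTC}) is decisive: its rigidity principle says that if a terminal value of one function along Brownian paths equals the time integral of another (plus a constant) almost surely, then any genuinely spatial dependence would produce a non-vanishing martingale contribution and thus a contradiction. Applied to the above pathwise relation, this forces $B(y)\equiv c_T$ to be independent of $y$ and $R(t,y)\equiv r(t)$ to depend only on $t$. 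Replacing $S^\e$ by $\tilde S^\e := S^\e + h(t)$ with $h'(t)=r(t)$ and $h(T)=-c_T$ leaves $\nabla S^\e$ (hence the SDE) unchanged but upgrades the two relations to the exact 2nd-order HJ equation and the exact terminal condition $\tilde S^\e(T,\cdot)=-g^\e$; the residual scalar identity $A(x)+c_T+\int_0^T r(t)\,\d t=0$ then forces $\tilde S^\e(0,x)=\e\log Z^\e_{\Phi^\e}(x)$, closing~\eqref{2nd-order HJ} modulo the announced additive function of time.
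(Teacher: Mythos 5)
Your proposal is correct and follows essentially the same route as the paper: reduce to the pathwise identity \eqref{eqn-13} via Lemma~\ref{lem-1} with $b^\e(t,\omega)=\nabla S^\e(t,\omega(t))$, apply It\^o's formula to $S^\e(t,x+\sqrt\e\,\omega(t))$ to obtain the relation \eqref{eqn-20}, read off sufficiency, and invoke the rigidity of Lemma~\ref{Stoch-FTC} for necessity. The only difference is that you spell out the ``up to a function of time'' normalization (choosing $h$ with $h'=r$ and $h(T)=-c_T$ and checking that the residual identity then pins down $\tilde S^\e(0,x)=\e\log Z^\e_{\Phi^\e}(x)$), which the paper leaves implicit.
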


In stochastic optimal control, Equation \eqref{2nd-order HJ} is known as Hamilton--Jacobi--Bellman \cite{fleming2006controlled}.

\begin{proof}
From Lemma \ref{lem-1}, we see that the law of $X^\e_x$ is $\nu^\e_x$ if and only if for $\mu_0$-a.s. $\omega\in \mathcal{C}^{d,T}$,
\begin{equation*}
\begin{split}
  &\e \log{Z^\e_{\Phi^\e}(x)} + g^\e(\omega_x^\e(T)) + \int_{0}^{T} V(t, \omega_x^\e(t)) \mathrm d t \\
  &=   \sqrt\e \int_{0}^{T} \nabla S^\e(t, \omega_x^\e(t)) \mathrm{d}\omega(t) + \frac{1}{2} \int_{0}^{T} |\nabla S^\e(t, \omega_x^\e(t))|^2 \mathrm{d} t.
\end{split}
\end{equation*}
Applying It\^o's formula to $S^\e(t, \omega_x^\e(t))$ in the above equality, as $S\in C^{1,2}$, we have
\begin{equation}\label{eqn-20}
  \begin{split}
    &\e \log{Z^\e_{\Phi^\e}(x)} + S^\e(0, x) + g^\e(\omega_x^\e(T)) - S^\e(T, \omega_x^\e(T)) \\
    &= \int_{0}^{T} \left( -\pt_t S^\e + \frac{1}{2} |\nabla S^\e|^2 - \frac{\e}{2} \Delta S^\e - V \right)(t, \omega_x^\e(t)) \mathrm{d} t.
  \end{split}
\end{equation} 
The sufficiency is now clear. 
The necessity follows from \eqref{eqn-20} and the following lemma.
\end{proof}

\begin{lem}[Stochastic version of fundamental theorem of calculus]\label{Stoch-FTC}
  Fix $\e>0$, $T>0$ and $x\in\R^d$. Let $f_1 \in C_b^1(\R^d)$ and $f_2 \in C_b^{0,1}([0,T] \times \R^d)$. If the following equality holds for $\mu_0$-a.s. $\omega\in \mathcal{C}^{d,T}_0$,
  \begin{equation*}
   f_1(\omega_x^\e(T)) = \int_0^T f_2(s, \omega_x^\e(s)) \d s,
  \end{equation*}
  Then there is a function $F\in C^1([0,T])$ such that
  \begin{equation*}
    f_2(t,x) = F'(t), \quad f_1(x) = F(T), \quad \forall (t,x)\in [0,T]\times\R^d.
  \end{equation*}
\end{lem}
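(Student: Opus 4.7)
The strategy is to first upgrade the almost-sure equality to a pointwise identity valid for every continuous test path, and then use a classical variational argument to kill the spatial dependence of $f_2$.

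\textbf{Step 1 (from a.s.\ to everywhere, via full support).} Both sides of the hypothesized identity are continuous functionals of $\omega \in \mathcal{C}^{d,T}_0$ in the uniform topology: for the left side this is immediate from $f_1 \in C_b^1$, and for the right side it follows from the continuity and boundedness of $f_2$ together with dominated convergence. Since the Wiener measure $\mu_0$ has full topological support on $\mathcal{C}^{d,T}_0$, an a.s.\ identity between two continuous functionals is actually an everywhere identity. Reparametrizing $z(s) := x + \sqrt\e\, \omega(s)$, this yields
\begin{equation*}
  f_1(z(T)) = \int_0^T f_2(s, z(s)) \, \d s \qquad \text{for every } z \in C([0,T];\R^d) \text{ with } z(0) = x.
\end{equation*}

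\textbf{Step 2 (variation with fixed endpoint).} Given such a $z$ and any $\eta \in C([0,T];\R^d)$ with $\eta(0) = \eta(T) = 0$, the perturbed path $z + \lambda \eta$ is still admissible and has the same endpoint as $z$, so applying Step~1 to $z+\lambda\eta$ and to $z$ and subtracting yields
\begin{equation*}
  \int_0^T \bigl[ f_2(s, z(s) + \lambda\eta(s)) - f_2(s, z(s)) \bigr] \, \d s = 0, \qquad \forall\,\lambda \in \R.
\end{equation*}
Differentiating at $\lambda = 0$ (legal by dominated convergence, since $\nabla_y f_2$ is bounded) gives $\int_0^T \nabla_y f_2(s, z(s)) \cdot \eta(s) \, \d s = 0$ for every admissible $\eta$. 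The classical fundamental lemma of the calculus of variations, together with the continuity of $\nabla_y f_2$, then forces $\nabla_y f_2(s, z(s)) = 0$ for all $s \in [0,T]$. Since $z$ is arbitrary subject to $z(0) = x$, for any $(s_0, y_0) \in (0,T] \times \R^d$ we can interpolate a continuous $z$ with $z(0) = x$ and $z(s_0) = y_0$, yielding $\nabla_y f_2(s_0, y_0) = 0$; continuity extends this to $s_0 = 0$.

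\textbf{Step 3 (conclusion).} Hence $f_2(s, y) = \tilde f_2(s)$ is independent of $y$, and the identity from Step~1 reduces to $f_1(z(T)) = \int_0^T \tilde f_2(s) \, \d s$ for all admissible $z$; since $z(T)$ sweeps out $\R^d$, $f_1$ is the constant $C := \int_0^T \tilde f_2(s) \, \d s$. Setting $F(t) := \int_0^t \tilde f_2(s) \, \d s$, we have $F \in C^1([0,T])$, $F'(t) = f_2(t,y)$, and $F(T) = C = f_1(y)$, as required.

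\textbf{Main obstacle.} The conceptual crux is Step~1: bypassing the ``null set depends on the perturbation'' issue inherent to a Cameron--Martin shift argument by invoking joint continuity of both sides together with the full-support property of $\mu_0$. Once the identity holds pointwise for every continuous path, the rest is a standard variational calculus exercise; any attempt to work directly with a.s.\ identities and Cameron--Martin perturbations would require a countability/separability argument to intersect null sets over a dense family of directions, which the continuity shortcut avoids entirely.
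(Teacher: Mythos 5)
Your proof is correct, but it takes a genuinely different route from the paper's. The paper's argument stays inside stochastic calculus: it defines $u(t,x) := \E_{\mu_0}\bigl[ f_1(x+\sqrt\e\,\omega(T-t)) - \int_t^T f_2(s,x+\sqrt\e\,\omega(s-t))\,\d s \bigr]$, checks that $u$ solves the backward heat equation $\pt_t u = -\tfrac{\e}{2}\Delta u + f_2$ with $u(T,\cdot)=f_1$, applies It\^o's formula to $u(t,x+\sqrt\e\,\omega(t))$, and uses the hypothesis to conclude that the martingale $\sqrt\e\int_0^\cdot \nabla u(s,x+\sqrt\e\,\omega(s))\,\d\omega(s)$ is a.s.\ constant at time $T$, forcing $\nabla u\equiv 0$ and hence $u(t,x)=F(t)$ with $F'=f_2$, $F(T)=f_1$. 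Your argument instead upgrades the a.s.\ identity to a pointwise identity on all of $\mathcal{C}^{d,T}_x$ via joint continuity of both sides and the full topological support of Wiener measure (a full-measure set is dense when the measure has full support, and the agreement set of two continuous functionals is closed), and then finishes with a fixed-endpoint variation and the fundamental lemma of the calculus of variations. Each step checks out: the continuity of $\omega\mapsto\int_0^T f_2(s,x+\sqrt\e\,\omega(s))\,\d s$ follows from boundedness and dominated convergence, the differentiation under the integral in $\lambda$ is justified by the boundedness of $\nabla_y f_2$ (which is part of $f_2\in C_b^{0,1}$), and the interpolation argument covers every point of $(0,T]\times\R^d$. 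What your approach buys is a more elementary and arguably more transparent proof: it exposes the result as an essentially deterministic rigidity statement, with the only probabilistic input being the density of Brownian paths, and it cleanly sidesteps the null-set issues of a Cameron--Martin perturbation argument. What the paper's approach buys is thematic coherence: the heat-equation/It\^o route directly substantiates the authors' interpretive remark that the rigidity comes from the forced vanishing of the martingale term, which is the narrative thread running through the whole paper.
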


\begin{proof}
Define
\begin{equation*}
  \begin{split}
    u(t,x) &:= \E_{\mu_0} \left[ f_1(\omega_x^\e(T-t)) - \int_t^T f_2(s, \omega_x^\e(s-t)) \d s \right] \\
    &= \int_{\R^d} f_1(y) \rho_0^\e(T-t, y-x) \d y - \int_t^T \int_{\R^d} f_2(s, y) \rho_0^\e(s-t, y-x) \d y \d s,
  \end{split}
\end{equation*}
where $\rho_0^\e(t, \cdot)$ is the Lebesgue density of $\sqrt\e W(t)$.
From Lemma \ref{lemma-heat-kernel-itg}, we see that $u \in C^{1,2}([0,T] \times \R^d)$ and satisfies
\begin{equation}\label{eqn-31}
  \pt_t u(t,x) = - \frac{\e}{2} \Delta u(t,x) + f_2(t,x), \quad u(T,x) = f_1(x).
\end{equation}
We then apply It\^o's formula to $u(t, \omega_x^\e(t))$, and get
\begin{equation*}
  \begin{split}
    \d u(t, \omega_x^\e(t)) &= \left( \pt_t u + \frac{\e}{2} \Delta u \right) (t, \omega_x^\e(t)) \d t + \sqrt\e \nabla u(t, \omega_x^\e(t)) \d \omega(t) \\
    &= f_2(t, \omega_x^\e(t)) \d t + \sqrt\e \nabla u(t, \omega_x^\e(t)) \d \omega(t).
  \end{split}
\end{equation*}
Then, we use the assumption and the equality $u(T,\cdot) = f_1(\cdot)$ to derive that, a.s.,
\begin{equation*}
  \begin{split}
    0 &\equiv u(T, \omega_x^\e(T)) - \int_0^T f_2(s, \omega_x^\e(s)) \d s \\
    &= u(0,x) + \sqrt\e \int_0^T \nabla u(s, \omega_x^\e(s)) \d \omega(s).
  \end{split}
\end{equation*}
Since $\int_0^\cdot \nabla u(s, \omega_x^\e(s)) \d \omega(s)$ is a martingale, it follows that $\nabla u \equiv 0$ and thus
\begin{equation*}
  u(t,x) \equiv F(t), \quad \forall (t,x)\in [0,T]\times \R^d,
\end{equation*}
for some $F\in C^1([0,T])$. The result follows by plugging the above identity back in \eqref{eqn-31}.
\end{proof}

\begin{rem}
\normalfont
(i). When $S^\e$ and $V$ are not explicitly time-dependent, the law of $X^\e_x$ is $\nu^\e_x$ if and only if $S^\e = g^\e$, and $V$ is time-independent and satisfies (up to a constant for $g^\e$)
\begin{equation}\label{time-ind-HJ}
  \frac{1}{2} |\nabla g^\e|^2 - \frac{\e}{2}\Delta g^\e = V.
\end{equation}
This was the case discussed in \cite{du2021graph} or \cite[Section 6.1]{dashti2013map}.

(ii). From the viewpoint of a stochastic version of the fundamental theorem of calculus, embodied by It\^o's formula, this lemma highlights a profound rigidity imposed by pathwise equalities. The given equation holds for almost every Brownian path, yet it contains no stochastic integral (martingale part). For such a pathwise identity to be possible, the application of Itô's formula to the terms \( f_1(x+\sqrt{\epsilon} \omega(T)) \) and \( \int_0^T f_2(s, x+\sqrt{\epsilon} \omega(s)) \d s \) must yield a vanishing martingale term. This forces the functions \( f_1 \) and \( f_2 \) to be degenerate: they cannot genuinely depend on the spatial variable \( x \). Consequently, the stochastic setting reduces to a deterministic one. 

(iii). Theorem \ref{Cor-sde} follows directly as a corollary of Lemma \ref{Cor-sde-lemma}-(iii), if the same assumptions on $V$ and $g$ are imposed.
\end{rem}

Compared to the usual 2nd-order HJ equation, equation \eqref{2nd-order HJ} has an extra initial constraint $S^\e(0,x) = -\e\log{Z^\e_{\Phi^\e}}(x)$. We shall see that this condition is naturally satisfied if we use Feynman--Kac representation \eqref{FK-Z} of $Z^\e_{\Phi^\e}$.

\begin{lem}\label{well-posedness-2nd-order HJ}
Suppose that $V\in C_b^{0,1}([0,T]\times\mathbb{R}^d)$ and $g^\e\in C_b(\R^d)$.
Then for every $\e>0$, equation \eqref{2nd-order HJ} has a unique classical solution $S^\e \in C_b^{1,2}([0,T]\times\mathbb{R}^d)$, which admits the following probabilistic representation: for all $(t,x)\in [0,T]\times \mathbb{R}^d$, 
\begin{equation}\label{HJB-prob}
  S^\e(t,x) = -\e \log  \mathbf{E}_{\mu_0} \left[ \exp\left\{-\frac{1}{\e} \int_t^T V(s,x+ \sqrt{\e} W(s-t)) \mathrm{d}s-\frac{1}{\e} g^\e(x+ \sqrt{\e} W(T-t)) \right\} \right].
\end{equation}
Moreover, if $V\in C_b^{0,2}([0,T]\times\mathbb{R}^d)$, then $S^\e \in C_b^{1,3}([0,T]\times\mathbb{R}^d)$.
\end{lem}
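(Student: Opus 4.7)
The plan is to linearize the semilinear equation \eqref{2nd-order HJ} via a Cole--Hopf transformation and then read off existence, uniqueness, and the representation \eqref{HJB-prob} from the classical Feynman--Kac formula for the resulting linear parabolic equation. Specifically, set $u^\e(t,x) := \exp(\e^{-1} S^\e(t,x))$. A direct computation using $\nabla u^\e = \e^{-1} u^\e \nabla S^\e$ and $\Delta u^\e = u^\e(\e^{-2}|\nabla S^\e|^2 + \e^{-1}\Delta S^\e)$ shows that a bounded $C^{1,2}$ function $S^\e$ solves the PDE in \eqref{2nd-order HJ} if and only if the positive function $u^\e$ solves the linear backward parabolic problem
\[
\pt_t u^\e + \frac{\e}{2}\Delta u^\e = \frac{1}{\e} V u^\e \ \text{on } [0,T)\times\R^d, \qquad u^\e(T,y) = e^{-g^\e(y)/\e}.
\]
Note that the quadratic nonlinearity $\tfrac{1}{2}|\nabla S^\e|^2$ is absorbed exactly by the Laplacian of the exponential, which is the whole point of the substitution.

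Since $V\in C_b^{0,1}$ and $g^\e \in C_b$, the terminal datum $e^{-g^\e/\e}$ and the potential coefficient $V/\e$ are bounded and continuous, so the standard Feynman--Kac theorem yields a unique bounded classical solution
\[
  u^\e(t,x) = \E_{\mu_0}\!\left[\exp\!\left(-\frac{1}{\e}\int_t^T V(s, x+\sqrt{\e}\,\omega(s-t))\,\d s - \frac{1}{\e} g^\e(x+\sqrt{\e}\,\omega(T-t))\right)\right].
\]
Boundedness of $V$ and $g^\e$ implies that $u^\e$ is bounded above and uniformly bounded below away from zero, hence $S^\e := \e\log u^\e$ is well defined, bounded, and inverts the Cole--Hopf map, producing a bounded classical solution of \eqref{2nd-order HJ} with the claimed representation \eqref{HJB-prob}. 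Uniqueness of $S^\e$ follows from uniqueness of bounded classical solutions for the linear backward equation (via the maximum principle or a martingale argument). The extra initial constraint is automatic: evaluating \eqref{HJB-prob} at $t=0$ and recalling the form \eqref{phi} of $\Phi^\e$ gives
\[
  S^\e(0,x) = \e \log \E_{\mu_0}\!\left[e^{-\Phi^\e(x+\sqrt{\e}\,\omega)/\e}\right] = \e \log Z^\e_{\Phi^\e}(x),
\]
which is exactly the Feynman--Kac representation of the normalizing constant referenced as \eqref{FK-Z}.

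For the regularity claims, I would differentiate the probabilistic representation under the expectation, using the smoothing effect of the Gaussian transition density to move derivatives off the (possibly only continuous) terminal function $g^\e$ inside $[0,T)\times\R^d$, while transferring spatial derivatives onto $V$ inside the time integral. Under $V \in C_b^{0,1}$, bounded dominated convergence then gives $u^\e \in C_b^{1,2}([0,T)\times\R^d)$; one additional spatial derivative under the stronger assumption $V \in C_b^{0,2}$ yields $u^\e \in C_b^{1,3}$, and since $u^\e$ is uniformly bounded away from zero these bounds transfer to $S^\e = \e \log u^\e$. The main technical obstacle is this last step: verifying uniform integrability of the third spatial derivative of the Feynman--Kac integrand, which requires careful bookkeeping of the combinatorial terms produced by differentiating an exponential of a time-integrated $V$, all controlled by the boundedness of $V$ and its first two spatial derivatives.
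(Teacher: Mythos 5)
Your proposal is correct and follows essentially the same route as the paper: the Cole--Hopf substitution $S^\e = \e\log\phi^\e$ reducing \eqref{2nd-order HJ} to the backward heat equation \eqref{Hea-equ}, the Feynman--Kac representation for existence, uniqueness, and \eqref{HJB-prob}, the automatic verification of the initial constraint via \eqref{FK-Z}, and the regularity upgrade to $C_b^{1,3}$ under $V\in C_b^{0,2}$. The only cosmetic difference is that the paper outsources the parabolic regularity to Lemma \ref{lem-Born} and the classical Feynman--Kac theory of Friedman, whereas you sketch the differentiation under the expectation directly.
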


\begin{proof}
First, by taking the following Cole--Hopf transformation \cite{Schrodinger1926} (a.k.a. log transformation in \cite{fleming2006controlled})
\begin{equation*}
  S^\e(t,x) = -\e \log \phi^\e(t,x),  
\end{equation*}
we observe that the existence and uniqueness of solutions of equation \eqref{2nd-order HJ} in the space $C_b^{1,2}([0,T]\times\mathbb{R}^d)$ and those of the backward heat equation \eqref{Hea-equ} in $C_b^{1,2}([0,T]\times\mathbb{R}^d, \R_+)$ are equivalent.
As we shall see in Lemma \ref{lem-Born}, under the assumptions on $V$ and $g^\e$, equation \eqref{Hea-equ} has a unique solution $\phi^\e\in C_b^{1,2}([0,T]\times \mathbb{R}^d, \R_+)$, yielding the Feynman--Kac representation \eqref{FK}.
Thus, the function $S^\e$ defined in \eqref{HJB-prob} is in $C_b^{1,2}([0,T]\times \mathbb{R}^d)$ and satisfies equation \eqref{2nd-order HJ1}.
Recalling the representation \eqref{FK-Z} of the normalizing constant $Z^\e_{\Phi^\e}(x)$, we see that the initial condition $S^\e(0,x) = -\e\log{Z^\e_{\Phi^\e}(x)}$ of \eqref{2nd-order HJ} is fulfilled automatically. As in Remark \ref{improve-reg}, one can ask for $S^\e \in C^{1,3}$ by assuming $V\in C_b^{0,2}$. The result follows.
\end{proof}

Log transformation is, in fact, fundamental in the history of quantum mechanics: Schr\"odinger, Dirac, Feynman...

The following corollary of Theorem \ref{Cor-sde}, which generalizes SDE \eqref{SDE-2} to general initial data, is clear.

\begin{cor}\label{Cor-sde-initial}
Let Assumptions \ref{asmp-Phi-1} and \ref{asmp-Phi-cost} hold. Fix $\e>0$. Suppose that $\nu^\e|_{t=0}$ has full support in $\R^d$. 
Let $X^\e$, $B$, $(\Omega, \mathcal F, \mathbf P, \{\mathcal F_t\}_{t\in[0,T]})$ be a weak solution of the following SDE
\begin{align}\label{SDE-S-initial}
  \mathrm dX^\e(t)= -\nabla S^\e(t,X^\e(t))\mathrm d t + \sqrt\e \mathrm dB(t), \quad \operatorname{Law}(X^\e (0)) = \nu^\e|_{t=0},
\end{align}
where the potential function $S^\e\in C_b^{1,3}([0,T]\times\mathbb{R}^d)$ satisfies \eqref{Novikov-S}.
Suppose $V\in C^{0,1}_b([0,T]\times\mathbb{R}^d)$ and $g^\e\in C^1_b(\R^d)$. 
Then the law of $X^\e$ is $\nu^\e$ if and only if $S^\e$ is determined (up to a function depending only on time) by the following second-order Hamilton--Jacobi equation:
\begin{equation}\label{2nd-order HJ-full}
  \begin{cases}
    \partial_t S^\e(t,y)-\frac{1}{2} |\nabla S^\e(t,y)|^2 + \frac{\e}{2}\Delta S^\e(t,y) = -V(t,y), & (t,y)\in (0,T)\times\R^d, \\
    S^\e(T,y) =  g^\e(y), & y\in\R^d, \\
    S^\e(0,y) = -\e\log{Z^\e_{\Phi^\e}(y)}, & y\in\R^d.
  \end{cases}
\end{equation}
\end{cor}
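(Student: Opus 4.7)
The plan is to reduce Corollary \ref{Cor-sde-initial} to Theorem \ref{Cor-sde} by disintegrating along the initial random variable $X^\e(0)$. The key observation is that once we condition on $X^\e(0)=x$, equation \eqref{SDE-S-initial} becomes exactly equation \eqref{SDE-2} started at $x$, and hypothesis \eqref{Novikov-S} is the same Novikov-type condition required by Theorem \ref{Cor-sde}.

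First I would disintegrate both sides. On the measure side, the construction \eqref{mu-tilde-total} applied to $\nu^\e$ gives
\begin{equation*}
  \nu^\e(\d\omega) = \int_{\R^d} \nu^\e_x(\d\omega)\,\nu^\e|_{t=0}(\d x),
\end{equation*}
reflecting the Markovian structure of $\nu^\e$ inherited from $\mu^\e$. On the SDE side, the Girsanov argument underlying Lemma \ref{lem-1}, together with the boundedness of $\nabla S^\e$ and condition \eqref{Novikov-S}, ensures weak uniqueness of \eqref{SDE-S-initial}; hence the regular conditional law of $X^\e$ given $X^\e(0)=x$ coincides with $\operatorname{Law}(X^\e_x)$, where $X^\e_x$ solves \eqref{SDE-2} driven by the same $S^\e$.

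For sufficiency, if $S^\e$ satisfies \eqref{2nd-order HJ-full}, then for every $x\in\R^d$ the system \eqref{2nd-order HJ} holds, so Theorem \ref{Cor-sde} yields $\operatorname{Law}(X^\e_x)=\nu^\e_x$; integrating against $\nu^\e|_{t=0}$ gives $\operatorname{Law}(X^\e)=\nu^\e$. For necessity, if $\operatorname{Law}(X^\e)=\nu^\e$, matching initial marginals and applying the two disintegrations above forces $\operatorname{Law}(X^\e_x)=\nu^\e_x$ for $\nu^\e|_{t=0}$-a.e.\ $x$. Theorem \ref{Cor-sde} then says $S^\e$ satisfies the PDE in the interior and the terminal condition $S^\e(T,\cdot)=-g^\e$ on $\R^d$, and furthermore $S^\e(0,x)=\e\log Z^\e_{\Phi^\e}(x)$ for $\nu^\e|_{t=0}$-a.e.\ $x$.

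The only nontrivial step — and the place where the full-support hypothesis on $\nu^\e|_{t=0}$ is actually used — is to promote this last a.e.\ equality to hold for every $x\in\R^d$. I would argue by continuity: $S^\e(0,\cdot)$ is continuous because $S^\e\in C_b^{1,3}$, and $x\mapsto Z^\e_{\Phi^\e}(x)$ is continuous and strictly positive by dominated convergence applied to the Feynman--Kac representation \eqref{FK-Z}, using the boundedness of $V$ and $g^\e$. Two continuous functions agreeing on the dense set $\operatorname{supp}(\nu^\e|_{t=0})=\R^d$ must coincide everywhere, yielding the full initial condition in \eqref{2nd-order HJ-full}. This continuity upgrade is the main (and only) obstacle; everything else is a direct reduction to the fixed-initial-point case.
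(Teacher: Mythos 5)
Your proposal is correct and follows the route the paper intends: the paper offers no proof at all, merely declaring the corollary ``clear'' from Theorem \ref{Cor-sde}, and your disintegration of both the path measure (via \eqref{disint}) and the SDE law over the initial point, combined with weak uniqueness of \eqref{SDE-S-initial}, is exactly the reduction that makes it so. Your identification of the one genuinely nontrivial point --- that the full-support hypothesis on $\nu^\e|_{t=0}$ together with continuity of $S^\e(0,\cdot)$ and of $x\mapsto Z^\e_{\Phi^\e}(x)$ is what upgrades the initial condition $S^\e(0,x)=\e\log Z^\e_{\Phi^\e}(x)$ from $\nu^\e|_{t=0}$-a.e.\ $x$ to all of $\R^d$ --- is accurate and is precisely the detail the paper leaves implicit.
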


Comparing the expressions of $\nu^\e_x$ in \eqref{nu-mu-Phi} and its time-reversed measure $\rev \nu^\e_x$ in \eqref{nu-mu-Phi-rev}, we find also the following corollary.

\begin{cor}\label{Cor-sde-initial-reversed}
Let Assumptions \ref{asmp-Phi-1} and \ref{asmp-Phi-cost} hold. Fix $\e>0$ and $x\in\R^d$. Let $\rev X^\e_x$, $\widetilde B$, $(\Omega, \mathcal F, \mathbf P, \{\rev{\mathcal F}_t\}_{t\in[0,T]})$ be a weak solution of the following SDE
\begin{align*}
  \mathrm d\rev X^\e_x(t)= -\nabla \widetilde S^\e(T-t,\rev X^\e_x(t))\mathrm d t + \sqrt\e \mathrm d\widetilde B(t), \quad \rev X^\e_x (0) =x,
\end{align*}
where the potential function $\widetilde S^\e\in C_b^{1,3}([0,T]\times\mathbb{R}^d)$ satisfies 
\begin{equation*}
   \E_{\mu_x^\e} \left[\exp \left(\frac{1}{2\e} \int_{0}^{T} |\nabla \widetilde S^\e(T-t, \omega(t))|^{2} \mathrm{d} t\right)\right]<\infty. 
\end{equation*}
Suppose $V\in C^{0,1}_b([0,T]\times\mathbb{R}^d)$ and $f^\e\in C^1_b(\R^d)$. 
Then the law of $\rev X^\e_x$ is $\rev \nu^\e_x$ if and only if $\widetilde S^\e$ is determined (up to a function depending only on time) by the following second-order Hamilton--Jacobi (2nd-order HJ) equation
\begin{equation}\label{2nd-order HJ-rev}
  \begin{cases}
     \partial_t \widetilde S^\e(t,y)+\frac{1}{2} |\nabla \widetilde S^\e(t,y)|^2 - \frac{\e}{2}\Delta \widetilde S^\e(t,y) = V(t,y), & (t,y)\in (0,T)\times\R^d, \\
    \widetilde S^\e(0,y) =  f^\e(y), & y\in\R^d, \\
    \widetilde S^\e(T,x) = -\e\log{Z^\e_{\Psi^\e}(x)}. &
  \end{cases}
\end{equation}
Moreover, equation \eqref{2nd-order HJ-rev} has a unique classical solution $\widetilde S^\e \in C_b^{1,2}([0,T]\times\mathbb{R}^d)$, which admits the following probabilistic representation: for all $(t,x)\in [0,T]\times \mathbb{R}^d$, 
\begin{equation*}
  \widetilde S^\e(t,x) = -\e \log  \E_{\mu_0} \left[ \exp \left(-\frac{1}{\e} \int_0^t V(s, x+\sqrt\e W(t-s)) \mathrm dr - \frac{1}{\e} f^\e(x+\sqrt\e W(t)) \right) \right].
\end{equation*}
\end{cor}

The law of $\rev X^\e_x$ is $\rev \nu^\e_x$ amounts to saying that the process $\rev X^\e_x$ can be regarded as the conditional process of $\rev X^\e$, the time-reversed process of $X^\e$ in \eqref{SDE-S-initial}, conditioned on $\{ \rev X^\e (0) =x \}$. It should not be confused with the time-reversed process of $X^\e_x$ in \eqref{SDE-2}.

\begin{rem}\normalfont
Here, in contrast with what is required in quantum mechanics, one considers two forward (usual) SDEs and, therefore, two different processes, $X$ and $\tilde{X}$, in order to follow the framework of stochastic thermodynamics. This will require a comment on the difference with the approach in~\cite{huang2023second}, for instance in the interpretation of Feynman's commutation relations. We briefly discuss this point in Section \ref{Conclusion}.    
\end{rem}

\section{Measure-theoretical study of path measures}\label{sec:potentialenergy}

In this section, we investigate the Onsager--Machlup functional, large deviation principles, time-reversal, and the Kullback--Leibler divergence from the perspective of path measures. Building on the correspondence between path measures and stochastic differential equations established in Section \ref{The setting and the main result}, this framework provides the foundation for the subsequent applications formulated in terms of stochastic differential equations.

\subsection{Onsager--Machlup functional}

The Onsager--Machlup functional is a tool used to describe the dynamics of stochastic processes, particularly in the context of nonequilibrium systems \cite{onsager1953fluctuations,machlup1953fluctuations}. It provides a way to quantify the probability of a given path taken by a stochastic process, with the most probable paths corresponding to those that minimize the functional.

We recall a derivation of the Onsager--Machlup (OM) functional from the problem of maximum a posteriori estimators, following the exposition in \cite{dashti2013map}. The Onsager--Machlup functional $\mathrm{OM}_{\Phi^\e}: \mathcal{C}^{d,T} \to \R$ associated with the functional $\Phi^\e$ is defined by
\begin{equation}\label{OM}
  \mathrm{OM}_{\Phi^\e}[\omega] := 
  \begin{cases}
  \frac{1}{2} \|\omega\|_{H_0^1}^2 + \Phi^\e(\omega), & \omega \in \mathcal{H}^{d,T}, \\
  \infty, & \omega \in \mathcal{C}^{d,T} \setminus \mathcal{H}^{d,T}.
  \end{cases}
\end{equation}

\begin{Assu}\label{asmp-Phi-2}  


For each $\e>0$, $\Phi^\e$ is locally Lipschitz continuous, i.e., for every $r > 0$, there exists $M = M(\e,r)>0$ such that, for all $\omega_1, \omega_2 \in \mathcal{C}^{d,T}$ with $\|\omega_1\|_T, \|\omega_2\|_T < r$,  
$$  
\left|\Phi^\e(\omega_1) - \Phi^\e(\omega_2)\right| \le M \|\omega_1 - \omega_2\|_T.  
$$  
\end{Assu}

For $\omega\in \mathcal{C}^{d,T}$, denote by $B_r(\omega) \subset \mathcal{C}^{d,T}$ the open ball centered at $\omega$ with radius $r>0$. The following characterization of small tube probabilities of $\nu^\e_x$ is adapted from \cite[Corollary 3.3]{dashti2013map}.

\begin{prop}
Under Assumptions \ref{asmp-Phi-1} and \ref{asmp-Phi-2}, we have, for any $\gamma \in \mathcal{H}^{d,T}_x$,
\begin{equation*}
  \lim_{r\to 0} \frac{\nu^\e_x (B_r(\gamma))}{\mu^\e_0 (B_r(0))} = \frac{1}{Z^\e_{\Phi^\e}(x)} \exp \left(-\frac{1}{\e} \mathrm{OM}_{\Phi^\e}[\gamma]\right).
\end{equation*}
\end{prop}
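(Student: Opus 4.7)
The plan is to combine the density formula \eqref{mu-tilde} for $\nu^\e_x$ with respect to $\mu^\e_x$, the local Lipschitz continuity of $\Phi^\e$ from Assumption \ref{asmp-Phi-2} to replace $\Phi^\e(\omega)$ by $\Phi^\e(\gamma)$ on $B_r(\gamma)$, and the Cameron--Martin formula to shift the ball $B_r(\gamma)$ back to $B_r(0)$. The Cameron--Martin Jacobian will produce the kinetic term $\tfrac{1}{2}\|\gamma\|^2_{H^1_0}$ of the Onsager--Machlup functional, while the Lipschitz step will produce the potential term $\Phi^\e(\gamma)$, so that together they assemble $\mathrm{OM}_{\Phi^\e}[\gamma]$ in the exponent.

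First I would fix $r\le 1$ and observe that every $\omega \in B_r(\gamma)$ satisfies $\|\omega\|_T \le \|\gamma\|_T + 1$. Assumption \ref{asmp-Phi-2} then yields a constant $L = L(\e, \|\gamma\|_T + 1)$ with $|\Phi^\e(\omega) - \Phi^\e(\gamma)| \le L r$ on $B_r(\gamma)$, so that \eqref{mu-tilde} gives
\[
\nu^\e_x(B_r(\gamma)) = \frac{1 + O(r)}{Z^\e_{\Phi^\e}(x)}\, \exp\!\left(-\tfrac{1}{\e}\Phi^\e(\gamma)\right) \mu^\e_x(B_r(\gamma)).
\]
Next, setting $h := \gamma - x \in \mathcal{H}^{d,T}_0$ and using that $\omega - x$ is distributed as $\mu^\e_0$ under $\mu^\e_x$, the Cameron--Martin formula for the rescaled Wiener measure $\mu^\e_0$ gives
\[
\mu^\e_x(B_r(\gamma)) = \mu^\e_0(B_r(0) + h) = e^{-\tfrac{1}{2\e}\|\gamma\|^2_{H^1_0}} \int_{B_r(0)} \exp\!\left(-\tfrac{1}{\e}\int_0^T \dot\gamma(t) \cdot d\omega(t)\right) d\mu^\e_0(\omega),
\]
using that $\dot h = \dot\gamma$ and hence $\|h\|_{H^1_0} = \|\gamma\|_{H^1_0}$.

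The remaining and hardest step is the small-ball asymptotic
\[
\lim_{r\to 0} \frac{1}{\mu^\e_0(B_r(0))} \int_{B_r(0)} \exp\!\left(-\tfrac{1}{\e}\int_0^T \dot\gamma(t) \cdot d\omega(t)\right) d\mu^\e_0(\omega) = 1.
\]
Since $\mu^\e_0$ and $B_r(0)$ are both invariant under $\omega \mapsto -\omega$, the integrand may be symmetrized to $\cosh\!\left(\tfrac{1}{\e}\int_0^T \dot\gamma \cdot d\omega\right) \ge 1$, which yields the lower bound immediately. For the upper bound I would approximate $\dot\gamma \in L^2([0,T];\R^d)$ by a sequence $\dot\gamma_n \in C^1$ in $L^2$; for smooth $\dot\gamma_n$ an integration by parts gives $\int_0^T \dot\gamma_n \cdot d\omega = \dot\gamma_n(T)\cdot \omega(T) - \int_0^T \ddot\gamma_n \cdot \omega\, dt$, which is bounded pointwise on $B_r(0)$ by $C_n r$, so the corresponding $\cosh$ contribution tends to $1$ as $r \to 0$. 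The residual $\int_0^T (\dot\gamma - \dot\gamma_n) \cdot d\omega$ is a centered Gaussian under $\mu^\e_0$ with variance $\e \|\dot\gamma - \dot\gamma_n\|^2_{L^2}$, and its conditional moments on $B_r(0)$ can be controlled uniformly in $r$ by Gaussian small-ball and Anderson-type inequalities, which is the essential technical ingredient in the general result of \cite[Corollary 3.3]{dashti2013map}; this allows one to send $n \to \infty$ uniformly in $r$. Multiplying the three contributions assembles the exponent $-\tfrac{1}{\e}\bigl(\tfrac{1}{2}\|\gamma\|^2_{H^1_0} + \Phi^\e(\gamma)\bigr) = -\tfrac{1}{\e}\mathrm{OM}_{\Phi^\e}[\gamma]$ and completes the proof.
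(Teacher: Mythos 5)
Your argument is correct and follows essentially the same route as the source the paper cites for this proposition, \cite[Corollary 3.3]{dashti2013map} (the paper itself records no independent proof): local Lipschitz continuity to freeze $\Phi^\e$ at $\gamma$, Cameron--Martin to produce the kinetic term, and the Gaussian small-ball shift asymptotic for the remaining ratio. The one step your sketch defers to the reference --- uniform-in-$r$ control of the residual term $\eta_n=-\frac{1}{\e}\int_0^T(\dot\gamma-\dot\gamma_n)\cdot\mathrm d\omega$ --- closes cleanly via H\"older together with the symmetry of $B_r(0)$ (which kills the odd part of $e^{q\eta_n}$) and Anderson's inequality applied through Cameron--Martin, giving $\frac{1}{\mu^\e_0(B_r(0))}\int_{B_r(0)}\cosh(q\eta_n)\,\mathrm d\mu^\e_0\le e^{q^2\|\dot\gamma-\dot\gamma_n\|_{L^2}^2/(2\e)}$ uniformly in $r$, after which letting $n\to\infty$ and then $q\to1$ completes the upper bound.
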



\subsubsection*{The standard Lagrangian of OM functional}

Recall from \eqref{OM} that, once the functional $\Phi^\e$ has the representation \eqref{phi}, its associated OM functional $\mathrm{OM}_{\Phi^\e}$ take values at $\gamma \in \mathcal{H}^{d,T}$ as 
\begin{equation}\label{OM-2}
  \mathrm{OM}_{\Phi^\e} [\gamma] = \Phi^\e(\gamma) + \frac{1}{2} \|\dot \gamma\|_{L^2[0, T]}^2 = \int_0^T \left( \frac{1}{2} |\dot \gamma(t)|^2 + V(t,\gamma(t)) \right) \mathrm d t + g^\e(\gamma(T)).
\end{equation}
This indicates that $\mathrm{OM}_{\Phi^\e}$ can be regarded as the action functional with terminal cost $g^\e$ and the following standard Euclidean Lagrangian
\begin{equation}\label{Lim-lag}
L_V(t,x,\dot{x}):= \frac{1}{2}|\dot{x}|^2+V(t,x),
\end{equation}
The corresponding Hamiltonian is
\begin{equation}\label{Hamiltonian}
  H_V(x,p,t) = \frac{1}{2} |p|^2 - V(t,x).
\end{equation}

The stationary-action principle for the functional \eqref{OM-2} on $\mathcal{H}^{d,T}_x$ is to cancel out its variation, i.e., $\delta \mathrm{OM}_{\Phi^\e} [\gamma] = 0$. This can be implemented formally, using the fact that $\delta \dot \gamma = \frac{\mathrm d}{\mathrm dt} \delta \gamma$ (as the time parameter $t$ is not varied), as follows:
\begin{equation*}
  \begin{split}
    \delta \mathrm{OM}_{\Phi^\e} [\gamma] &= \int_0^T \left( \dot \gamma(t) \delta \dot \gamma(t) + \nabla V(t,\gamma(t)) \delta \gamma(t) \right) \mathrm d t + \nabla g^\e(\gamma(T)) \delta \gamma(T) \\
    &= \int_0^T \left( - \ddot \gamma(t) \delta \gamma(t) + \nabla V(t,\gamma(t)) \delta \gamma(t) \right) \mathrm d t + \left[ \dot \gamma(T) + \nabla g^\e(\gamma(T)) \right] \delta \gamma(T).
  \end{split}
\end{equation*}
Thus, the associated Euler--Lagrange (EL) equation is
\begin{equation}\label{eur-lag}
\begin{cases}
  \ddot \gamma(t) = \nabla V(t,\gamma(t)), \quad t \in(0, T), \\
  \gamma(0)=x, \quad \dot \gamma(T) = - \nabla g^\e(\gamma(T)). 
\end{cases}
\end{equation} 

\begin{rem}
\normalfont
The Lagrangian \eqref{Lim-lag} and Hamiltonian \eqref{Hamiltonian} are Euclidean quantum ones, where the signs in front of the potential $V$ are opposite to the classical ones. More relations with quantum mechanics, the main objective of Schr\"odinger's original observation in \cite{Schrodinger1926}, were discussed in \cite{huang2023gauge}.
\end{rem}

\subsection{Large deviations}

We recall from the classical large deviation theory that the family $\{\mu_x^\e: \e>0\}$ satisfies the large deviation principle in $(\mathcal{C}^{d,T}_x, \mathcal B(\mathcal{C}^{d,T}_x))$ with the following good rate function (see, e.g., \cite[Theorem 5.2.3]{DZ98})
\begin{equation*}
  I(\omega) := 
  \begin{cases}
  \frac{1}{2}\|\omega\|_{H_0^1}^2,\quad & \omega\in \mathcal{H}^{d,T}_x,\\
  \infty, & \omega\in \mathcal{C}^{d,T}_x \setminus \mathcal{H}^{d,T}_x.
  \end{cases}
\end{equation*}

\begin{Assu}\label{asmp-Phi-3}
There exists a continuous function $\Phi^0: \mathcal{C}^{d,T} \rightarrow \mathbb{R}$ such that for every $x\in\R^d$, \\
(i) $\Phi^0 + I - \inf_{\omega \in \mathcal{C}^{d,T}_x} [\Phi^0(\omega) + I(\omega)]$ is a good rate function, \\
(ii) the tail condition holds,
\begin{equation}\label{tail-cond}
  \lim _{M \rightarrow \infty} \limsup _{\epsilon \rightarrow 0} \epsilon \log \E_{\mu_x^\e} \left[e^{\Phi^0 / \epsilon} \ind_{\left\{\Phi^0 \geq M\right\}}\right]=-\infty,
\end{equation}
(iii)
\begin{equation}\label{exp-lim-Phi}
  \lim_{\epsilon \rightarrow 0} \epsilon \log \E_{\mu_x^\e} \left[e^{(\Phi^0 - \Phi^\e) / \epsilon} \right] = 0.
\end{equation}
\end{Assu}

This assumption is somewhat technical; relevant remarks and sufficient conditions are provided in Appendix \ref{app-rmk}.

It follows from \eqref{OM} that Onsager--Machlup functional $\mathrm{OM}_{\Phi^0}: \mathcal{C}^{d,T} \to \R$ associated with the functional $\Phi^0$ is $\mathrm{OM}_{\Phi^0} = \Phi^0 + I$. We denote the good rate function of Assumption \ref{asmp-Phi-3}-(i) by
\begin{equation}\label{rate-func}
  I_{\Phi^0}^x := \mathrm{OM}_{\Phi^0} - \inf_{\omega \in \mathcal{C}^{d,T}_x} \mathrm{OM}_{\Phi^0}[\omega].
\end{equation}

The following result is a generalization of the tilted large deviation principle in \cite[Theorem III.17]{hollander2000large}. We also refer to \cite[Lemma 3.2]{selk2024smallnoise} for a set of assumptions concerning the Taylor expansion of $\Phi^\e$ with respect to $\e$.

\begin{prop}\label{tilted-large-deviation}
Let Assumptions \ref{asmp-Phi-1} and \ref{asmp-Phi-3} hold. 
For each $x\in \R^d$, the family $\{\nu^\e_x: \e>0\}$ satisfies the large deviation principle in $(\mathcal{C}^{d,T}_x, \mathcal B(\mathcal{C}^{d,T}_x))$, with the good rate function $I_{\Phi^0}^x$.
\end{prop}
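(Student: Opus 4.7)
The plan is to deduce the LDP for $\{\nu^\e_x\}$ from the known LDP for the Gaussian reference family $\{\mu^\e_x\}$ via Varadhan's integral lemma, with Assumption~\ref{asmp-Phi-3} providing exactly the three ingredients needed: (i) ensures the candidate rate $I^x_{\Phi^0}$ is good; (ii) is the moment/tail hypothesis in Varadhan's lemma; (iii) lets one swap $\Phi^\e$ for its limit $\Phi^0$ on the exponential scale. The first task is to pin down the asymptotics of the normalizing constant. Applying Varadhan's lemma to $\mu^\e_x$ with the continuous tilt $-\Phi^0$ and the tail condition~\eqref{tail-cond} gives
$$\lim_{\e\to 0}\e\log \mathbf{E}_{\mu^\e_x}\bigl[e^{-\Phi^0/\e}\bigr] \;=\; -\inf_{\omega \in \mathcal{C}^{d,T}_x}\mathrm{OM}_{\Phi^0}[\omega] \;=:\; -\lambda^x.$$
Combined with~\eqref{exp-lim-Phi}, after a Hölder split with conjugate exponents $p=1+\delta$ and $q=1+1/\delta$ (and a reverse estimate via Jensen), this transfers to the true partition function: $\e \log Z^\e_{\Phi^\e}(x) \to -\lambda^x$.

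For the LDP upper bound, take a closed $F\subset \mathcal{C}^{d,T}_x$ and write
$$\nu^\e_x(F) \;=\; \frac{1}{Z^\e_{\Phi^\e}(x)}\,\mathbf{E}_{\mu^\e_x}\!\left[e^{-\Phi^0/\e}\,\ind_F \cdot e^{(\Phi^0-\Phi^\e)/\e}\right].$$
Hölder's inequality with an exponent $p>1$ separates the two factors; the second is subexponential in $\e^{-1}$ by~\eqref{exp-lim-Phi}, while the first obeys a restricted Varadhan-type bound
$$\limsup_{\e\to 0}\e\log \mathbf{E}_{\mu^\e_x}\!\left[e^{-p\Phi^0/\e}\ind_F\right]\;\leq\;-\inf_{\omega\in F}\bigl(p\Phi^0(\omega)+I(\omega)\bigr),$$
obtained from the LDP upper bound for $\mu^\e_x|_F$ together with the tail estimate (applied to $p\Phi^0$, which still satisfies~\eqref{tail-cond} for $p$ close to $1$). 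Dividing by $Z^\e_{\Phi^\e}(x)$ and sending $p\downarrow 1$ gives $\limsup_{\e\to 0}\e\log \nu^\e_x(F)\leq -\inf_{\omega\in F}I^x_{\Phi^0}(\omega)$. For the lower bound on an open $G$, I would localize: for any $\omega_0\in G\cap \mathcal{H}^{d,T}_x$ pick a small $r>0$ with $B_r(\omega_0)\subset G$, use the continuity of $\Phi^0$ to replace $\Phi^0$ by $\Phi^0(\omega_0)+o_r(1)$ on $B_r(\omega_0)$, invoke the LDP lower bound for $\mu^\e_x\bigl(B_r(\omega_0)\bigr)$, and apply a localized Jensen to the factor $e^{(\Phi^0-\Phi^\e)/\e}$ together with~\eqref{exp-lim-Phi} to pass back to $\Phi^\e$; sending $r\to 0$ and taking an infimum over $\omega_0\in G\cap \mathcal{H}^{d,T}_x$ yields $\liminf_{\e\to 0}\e\log \nu^\e_x(G)\geq -\inf_{\omega\in G}I^x_{\Phi^0}(\omega)$.

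The main obstacle will be the lower bound. Condition~\eqref{exp-lim-Phi} provides only an \emph{integrated} control of $\Phi^0-\Phi^\e$ under $\mu^\e_x$, whereas the LDP lower bound demands pathwise control on the small tube $B_r(\omega_0)$. A clean resolution is to apply Jensen's inequality under the conditioned reference measure $\mu^\e_x(\,\cdot\mid B_r(\omega_0))$, turning the local exponential moment into $e^{\mathbf{E}_{\mu^\e_x}[(\Phi^0-\Phi^\e)/\e \mid B_r(\omega_0)]}$, and then bound this conditional expectation using a ratio comparison with~\eqref{exp-lim-Phi} and the LDP lower bound on $\mu^\e_x(B_r(\omega_0))$; a priori the conditional expectation could blow up, so either an additional truncation or a slight strengthening of~\eqref{exp-lim-Phi} (as discussed in Appendix~\ref{app-rmk}) may be required. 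Once both bounds are in place, the goodness of $I^x_{\Phi^0}$ is exactly Assumption~\ref{asmp-Phi-3}(i), closing the argument.
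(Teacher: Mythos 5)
Your strategy differs structurally from the paper's: you try to verify the LDP directly, proving the upper bound on closed sets and the lower bound on open sets, whereas the paper never makes a set-wise estimate. It establishes the Laplace principle --- the limit of $\e\log\E_{\mu_x^\e}[\exp(-(F+\Phi^\e)/\e)]$ for \emph{every} bounded continuous $F$, obtained from Varadhan's lemma applied to the tilt $\Phi^0$ together with a H\"older/reverse-H\"older sandwich in which \eqref{exp-lim-Phi} kills the global correction factor involving $\Phi^0-\Phi^\e$ on the exponential scale --- and then invokes Bryc's inverse Varadhan lemma to convert these Laplace limits into the LDP with rate $I_{\Phi^0}^x$. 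Your computation of $\lim_{\e\to0}\e\log Z^\e_{\Phi^\e}(x)$ and your upper bound are consistent with this and essentially sound, modulo the same mild abuse the paper itself commits (H\"older introduces exponents $p'\neq 1$ on $\Phi^0-\Phi^\e$, while \eqref{exp-lim-Phi} is stated only for exponent $1$).

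The genuine gap is exactly where you suspect it: the lower bound. Condition \eqref{exp-lim-Phi} is a single integrated statement under $\mu_x^\e$, which concentrates on the constant path $\omega\equiv x$; a tube $B_r(\omega_0)$ around a nonconstant $\omega_0\in\mathcal{H}^{d,T}_x$ carries mass of order $e^{-I(\omega_0)/\e}$, so the behaviour of $\Phi^0-\Phi^\e$ there is essentially invisible to \eqref{exp-lim-Phi}. Your Jensen step produces $\exp\left(\E_{\mu_x^\e}[(\Phi^0-\Phi^\e)/\e\mid B_r(\omega_0)]\right)$, and the ``ratio comparison'' you propose only yields an \emph{upper} bound on this conditional expectation (of order $I(\omega_0)/\e$), never the lower bound you need: \eqref{exp-lim-Phi} constrains $\Phi^\e$ from being much smaller than $\Phi^0$ in an integrated sense, but places no obstruction to $\Phi^\e$ being much larger than $\Phi^0$ on the exponentially small tube, which is precisely what would destroy the lower bound on $\nu^\e_x(B_r(\omega_0))$. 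Hence the step cannot be closed from Assumptions \ref{asmp-Phi-1} and \ref{asmp-Phi-3} alone by localization; one of the strengthened sufficient conditions of Appendix \ref{app-rmk} (exponentially good approximation, or compact convergence of $\Phi^\e$ to $\Phi^0$) would be needed to control $\Phi^0-\Phi^\e$ on the tube. Since you leave this unresolved (``may be required''), the proposal is not a complete proof; the paper's global route via Bryc's lemma is exactly how it avoids ever localizing \eqref{exp-lim-Phi}.
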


\begin{proof}
We apply Varadhan's integral lemma \cite[Theorem 4.3.1]{DZ98} in view of the large deviation principle of $\{\mu_x^\e: \e>0\}$, and get for any bounded continuous function $F: \mathcal{C}^{d,T}_x \to \R$,
\begin{equation}\label{app-Varadhan}
  \lim_{\e \rightarrow 0} \e \log \E_{\mu_x^\e} \left[ \exp \left( -\frac{F + \Phi^0}{\e} \right) \right] = - \inf_{\omega \in \mathcal{C}^{d,T}_x}[F(\omega) + \Phi^0(\omega) + I(\omega)] = - \inf_{\omega \in \mathcal{C}^{d,T}_x} [F(\omega) + \mathrm{OM}_{\Phi^0}[\omega]].
\end{equation}
The applicability of Varadhan's integral lemma is ensured by the tail condition \eqref{tail-cond}.
Moreover, by applying H\"older's inequality and its reverse, we have for any $p>1$,
\begin{equation*}
  \begin{split}
    p\e \log \E_{\mu_x^\e} \left[ \exp \left( -\frac{F + \Phi^0}{p\e} \right) \right] &+ (1-p)\e \log \E_{\mu_x^\e} \left[ \exp \left( \frac{\Phi^0 - \Phi^\e}{(1-p)\e} \right) \right] \\
    &\le \e \log \E_{\mu_x^\e} \left[ \exp \left( -\frac{F + \Phi^\e}{\e} \right) \right] \\
    &\le \frac{\e}{p} \log \E_{\mu_x^\e} \left[ \exp \left( -\frac{p(F + \Phi^0)}{\e} \right) \right] + \frac{\e}{p'} \log \E_{\mu_x^\e} \left[ \exp \left( \frac{p'(\Phi^0 - \Phi^\e)}{\e} \right) \right],
  \end{split}
\end{equation*}
where $p'$ is the H\"older conjugate of $p$, i.e., $\frac{1}{p} + \frac{1}{p'} = 1$.
Now, we take the limit $\e\to 0$ of the above inequalities and use \eqref{exp-lim-Phi} and \eqref{app-Varadhan}, and obtain
\begin{equation}\label{ext-Varadhan}
  \lim_{\e \rightarrow 0} \e \log \E_{\mu_x^\e} \left[ \exp \left( -\frac{F + \Phi^\e}{\e} \right) \right] = \lim_{\e \rightarrow 0} \e \log \E_{\mu_x^\e} \left[ \exp \left( -\frac{F + \Phi^0}{\e} \right) \right] = - \inf_{\omega \in \mathcal{C}^{d,T}_x} [F(\omega) + \mathrm{OM}_{\Phi^0}[\omega]].
\end{equation}
Therefore,
\begin{equation*}
  \begin{split}
    \lim_{\e \rightarrow 0} \e \log \E_{\nu^\e_x} \left[ \exp \left( -\frac{F}{\e} \right) \right] &= \lim_{\e \rightarrow 0} \e \log \E_{\mu_x^\e} \left[ \exp \left( -\frac{F + \Phi^\e}{\e} \right) \right] - \lim_{\e \rightarrow 0} \e \log \E_{\mu_x^\e} \left[ \exp \left( -\frac{\Phi^\e}{\e} \right) \right] \\
    &= - \inf_{\omega \in \mathcal{C}^{d,T}_x}[F(\omega) + \mathrm{OM}_{\Phi^0}[\omega]] + \inf_{\omega \in \mathcal{C}^{d,T}_x} \mathrm{OM}_{\Phi^0}[\omega] \\
    &= - \inf_{\omega \in \mathcal{C}^{d,T}_x}[F(\omega) + I_{\Phi^0}^x(\omega)].
  \end{split}
\end{equation*}
It follows from Bryc's inverse Varadhan lemma \cite[Theorem 4.4.13]{DZ98} that the family $\{\nu^\e_x: \e>0\}$ satisfies the large deviation principle with the good rate function $I_{\Phi^0}^x$. 
\end{proof}


As a byproduct of the above proof, we obtain the following asymptotics of the normalizing constants $Z^\e_{\Phi^\e}(x)$ as $\e \to 0$ by putting $F\equiv0$ in \eqref{ext-Varadhan}:
\begin{equation*}
  \lim_{\e \rightarrow 0} \e \log Z^\e_{\Phi^\e}(x) = \lim_{\e \rightarrow 0} \e \log \E_{\mu_x^\e} \left[ \exp \left( -\frac{\Phi^\e}{\e} \right) \right] 
  = - \inf_{\omega \in \mathcal{C}^{d,T}_x} \mathrm{OM}_{\Phi^0}[\omega].
\end{equation*}


\subsection{Kullback--Leibler divergence}

\begin{lem}\label{KL-nu-tilde}
Let Assumption \ref{asmp-Phi-1} hold. Fix an $\e>0$. Let $\tilde\nu^\e$ be a probability measure on $(\mathcal{C}^{d,T},\mathcal B(\mathcal{C}^{d,T}))$ that is absolutely continuous with respect to $\nu^\e$. Then 
\begin{equation*}
  D_{\mathrm{KL}} \left(\tilde\nu^\e \| \nu^\e\right) = D_{\mathrm{KL}}\left(\tilde\nu^\e|_{t=0} \| \mu^\e|_{t=0}\right) + \frac{1}{\e} \E_{\tilde\nu^\e} \left[ f^\e(\omega(0)) + \frac{1}{2} \int_{0}^{T} |b^\e(t, \omega)|^{2} \mathrm{d} t + \Phi^\e(\omega) \right],
\end{equation*}
where $b^\e$ is a progressively measurable process such that the triple $\omega(\cdot)$, $\widetilde B$, $(\mathcal{C}^{d,T}, \mathcal B(\mathcal{C}^{d,T}), \tilde\nu^\e, \{\mathcal B_t(\mathcal{C}^{d,T})\}_{t\in [0,T]})$ is a weak solution of the functional SDE
\begin{equation}\label{control-sde}
  \mathrm{d} \omega(t) = b^\e(t, \omega) \mathrm{d}t + \sqrt\e \mathrm{d} \widetilde B(t).
\end{equation}
\end{lem}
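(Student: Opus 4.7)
The plan is to compute the divergence via the chain rule
\[
\log\frac{\d\tilde\nu^\e}{\d\nu^\e}(\omega) = \log\frac{\d\tilde\nu^\e}{\d\mu^\e}(\omega) - \log\frac{\d\nu^\e}{\d\mu^\e}(\omega),
\]
valid because $\tilde\nu^\e\ll\nu^\e\ll\mu^\e$ (the second relation is \eqref{nu-mu}). The easy piece is the second summand: \eqref{nu-mu} gives $\log\frac{\d\nu^\e}{\d\mu^\e}(\omega) = -\frac{1}{\e}[f^\e(\omega(0))+\Phi^\e(\omega)]$, so taking $\tilde\nu^\e$-expectation contributes precisely the summand $\frac{1}{\e}\E_{\tilde\nu^\e}[f^\e(\omega(0))+\Phi^\e(\omega)]$ appearing on the right-hand side of the claim.

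For the first summand I would disintegrate both $\tilde\nu^\e$ and $\mu^\e$ along the initial coordinate, exactly as in \eqref{mu-tilde-total}, writing
\[
\frac{\d\tilde\nu^\e}{\d\mu^\e}(\omega) = \frac{\d\tilde\nu^\e|_{t=0}}{\d\mu^\e|_{t=0}}(\omega(0))\cdot \frac{\d\tilde\nu^\e_{\omega(0)}}{\d\mu^\e_{\omega(0)}}(\omega).
\]
Taking log and $\tilde\nu^\e$-expectation, the first factor yields exactly $D_{\mathrm{KL}}(\tilde\nu^\e|_{t=0}\|\mu^\e|_{t=0})$. For the conditional factor, the hypothesis that $\omega$ satisfies the functional SDE \eqref{control-sde} under $\tilde\nu^\e$ puts us in the situation of Lemma \ref{lem-1}, and Girsanov's theorem gives
\[
\log\frac{\d\tilde\nu^\e_{\omega(0)}}{\d\mu^\e_{\omega(0)}}(\omega) = \frac{1}{\e}\int_0^T b^\e(t,\omega)\,\d\omega(t)-\frac{1}{2\e}\int_0^T|b^\e(t,\omega)|^2\,\d t.
\]
Substituting the semimartingale decomposition $\d\omega(t)=b^\e(t,\omega)\,\d t+\sqrt\e\,\d\widetilde B(t)$ from \eqref{control-sde} collapses the $\d t$ terms to $\frac{1}{2\e}\int_0^T|b^\e|^2\,\d t$, while leaving a single stochastic integral $\frac{1}{\sqrt\e}\int_0^T b^\e(t,\omega)\,\d\widetilde B(t)$. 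Taking $\tilde\nu^\e$-expectation annihilates the latter and produces the last summand $\frac{1}{2\e}\E_{\tilde\nu^\e}\bigl[\int_0^T|b^\e|^2\,\d t\bigr]$ of the claim. Summing the three contributions yields the stated identity.

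The main technical obstacle is upgrading $\int_0^T b^\e\,\d\widetilde B$ from an a priori local martingale to a genuine martingale, so that its expectation indeed vanishes. The cleanest route is to localise along a reducing sequence of stopping times and then invoke the finiteness of both $D_{\mathrm{KL}}(\tilde\nu^\e\|\nu^\e)$ and $\E_{\tilde\nu^\e}[f^\e(\omega(0))+\Phi^\e(\omega)]$ (the latter controlled via Assumption \ref{asmp-Phi-1} together with the Gaussian integrability of $\|\omega\|_T^2$ under $\tilde\nu^\e$) to conclude that $\E_{\tilde\nu^\e}\bigl[\int_0^T|b^\e|^2\,\d t\bigr]<\infty$, which is exactly the $L^1$ condition ensuring that the stochastic integral is a true martingale; monotone/dominated convergence then passes to the limit in the localised identity. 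A secondary bookkeeping point is that the disintegration and the resulting chain rule for Radon--Nikodym densities require the measurability statement of Corollary \ref{cor-0}-(ii), already used in the paper to derive \eqref{nu-mu}.
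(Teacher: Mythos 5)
Your proposal is correct and follows essentially the same route as the paper: both decompose $\log\frac{\d\tilde\nu^\e}{\d\nu^\e}$ relative to the reference measure $\mu^\e$, identify the conditional density $\d\tilde\nu^\e_x/\d\mu^\e_x$ via the (converse) Girsanov theorem, kill the stochastic integral by taking expectation, and use the disintegration of measures (Corollary \ref{cor-0} / Lemma \ref{lemma-KL}) to separate the initial-marginal contribution — the paper merely performs the conditioning before the subtraction rather than after. Your explicit attention to upgrading $\int_0^T b^\e\,\d\widetilde B$ from a local martingale to a true martingale is a point the paper's proof passes over silently.
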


\begin{proof}
Recall from \eqref{mu0-scaling} that, the process $(W-x)/\sqrt\e$ is a standard Brownian motion under $\mu_x^\e$.
It follows from Lemma \ref{Gir-inv} that, by putting $(\Omega, \mathcal F, \{\mathcal F_t\}_{t\in[0,T]}) = (\mathcal{C}^{d,T},\mathcal B(\mathcal{C}^{d,T}), \{\mathcal B_t(\mathcal{C}^{d,T})\}_{t\in[0,T]})$, $\mathbf P$ as $\mu_x^\e$ and $B$ as $(W-x)/\sqrt\e$, if $\tilde\nu^\e_x \sim \mu_x^\e$, then there exists a progressively measurable process $\frac{1}{\sqrt\e} b^\e$ satisfying $\frac{1}{\e} \int_{0}^{T} |b^\e(t)|^{2} \mathrm{d} t < \infty$, $\mu_x^\e$-a.s., such that the process $\frac{1}{\sqrt\e} (W-x) - \frac{1}{\sqrt\e} \int_0^{\cdot} b^\e(s)\mathrm ds$ is a standard Brownian motion under $\tilde\nu^\e_x$, which we denote as $\widetilde B$. In other words, the triple $W$, $\widetilde B$, $(\mathcal{C}^{d,T}, \mathcal B(\mathcal{C}^{d,T}), \tilde\nu^\e_x, \{\mathcal B_t(\mathcal{C}^{d,T})\}_{t\in [0,T]})$ is a weak solution of the functional SDE:
\begin{equation*}
  \mathrm{d} W(t) = b^\e(t, W) \mathrm{d}t + \sqrt\e \mathrm{d} \widetilde B(t), \quad W(0) = x.
\end{equation*}
Moreover,
\begin{equation*}
  \begin{split}
    \frac{\mathrm{d} \tilde\nu^\e_x}{\mathrm{d} \mu^\e_x} (\omega) &= \exp \left( \frac{1}{\e} \int_0^T b^\e(t, \omega) \mathrm d\omega(t) - \frac{1}{2\e} \int_0^T |b^\e(t, \omega)|^2\mathrm  dt \right) \\
    &= \exp \left( \frac{1}{\sqrt\e} \int_0^T b^\e(t, \omega)\mathrm  d\widetilde B(t,\omega) + \frac{1}{2\e} \int_0^T |b^\e(t, \omega)|^2\mathrm  dt \right).
  \end{split}
\end{equation*}
Thus,
\begin{equation*}
  \begin{split}
    D_{\mathrm{KL}} \left(\tilde\nu^\e_x \| \nu^\e_x\right) &= \E_{\tilde\nu^\e_x} \left[ \log \left( \frac{\mathrm{d} \tilde\nu^\e_x}{\mathrm{d} \nu^\e_x} \right) \right] = \E_{\tilde\nu^\e_x} \left[ \log \left( \frac{\mathrm{d} \tilde\nu^\e_x}{\mathrm{d} \mu^\e_x} \right) - \log \left( \frac{\mathrm{d} \nu^\e_x}{\mathrm{d} \mu^\e_x} \right) \right] \\
    &= \E_{\tilde\nu^\e_x} \left[ \frac{1}{\sqrt\e} \int_{0}^{T} b^\e(t, \omega) \mathrm{d} \widetilde B(t,\omega) + \frac{1}{2\e} \int_{0}^{T} |b^\e(t, \omega)|^{2} \mathrm{d} t + \frac{1}{\e} \Phi^\e(\omega) \right] + \log Z^\e_{\Phi^\e}(x) \\
    &= \frac{1}{\e} \E_{\tilde\nu^\e_x} \left[ \frac{1}{2} \int_{0}^{T} |b^\e(t, \omega)|^{2} \mathrm{d} t + \Phi^\e(\omega) \right] + \log Z^\e_{\Phi^\e}(x).
  \end{split}
\end{equation*}
Now applying Lemma \ref{lemma-KL} and recalling the definition \eqref{f-def} of $f^\e$, we obtain
\begin{equation*}
  \begin{split}
    D_{\mathrm{KL}} \left(\tilde\nu^\e \| \nu^\e\right) &= D_{\mathrm{KL}}\left(\tilde\nu^\e|_{t=0} \| \nu^\e|_{t=0}\right) + \int_{\R^d} D_{\mathrm{KL}} \left(\tilde\nu^\e_x \| \nu^\e_x\right) \tilde\nu^\e|_{t=0}(\d x) \\
    &= \E_{\tilde\nu^\e|_{t=0}} \left[ \log \left( \frac{\mathrm{d} \tilde\nu^\e|_{t=0}}{\mathrm{d} \mu^\e|_{t=0}} \right) - \log \left( \frac{\mathrm{d} \nu^\e|_{t=0}}{\mathrm{d} \mu^\e|_{t=0}} \right) \right] + \frac{1}{\e} \E_{\tilde\nu^\e} \left[ \frac{1}{2} \int_{0}^{T} |b^\e(t, \omega)|^{2} \mathrm{d} t + \Phi^\e(\omega) \right] \\
    &\quad\ + \E_{\tilde\nu^\e|_{t=0}} \left( \log Z^\e_{\Phi^\e} \right) \\
    &= \E_{\tilde\nu^\e|_{t=0}} \left[ \log \left( \frac{\mathrm{d} \tilde\nu^\e|_{t=0}}{\mathrm{d} \mu^\e|_{t=0}} \right) \right] + \frac{1}{\e} \E_{\tilde\nu^\e} \left[ f^\e(\omega(0)) + \frac{1}{2} \int_{0}^{T} |b^\e(t, \omega)|^{2} \mathrm{d} t + \Phi^\e(\omega) \right].
  \end{split}
\end{equation*}
The result follows.
\end{proof}


\subsection{Time-reversals}\label{subsec:reversal}

We consider the simplest case where the initial time marginal of $\mu^\e$ is the scaled Lebesgue measure on $\R^d$, namely,
\begin{Assu}\label{mu-initial}
  For each $\e>0$, $\mu^\e|_{t=0}(\d x) = (\delta_\e)_* \d x = \d x/ \sqrt\e$.
\end{Assu}

This can lead to a reversible $\mu^\e$, since the time marginals of $\mu^\e$ are stationary:
\begin{equation}\label{mu-marginal}
  \mu^\e|_t(\d y) = \int_{\R^d} \mu^\e_x|_t(\d y) \mu^\e|_{t=0}(\d x) 
  = \frac{\d y}{\sqrt\e},
\end{equation}
and $\mu^\e$ satisfies the stationary detailed balance condition, i.e., the two-time marginals are symmetric: $\mu^\e|_{0,t}(\d x, \d y) = \mu^\e_x|_t(\d y) \mu^\e|_{t=0}(\d x) = \frac{1}{\sqrt\e} \rho_0(\e t, y - x) \d x \d y$.
In fact, when $\e=1$, $\mu^1 = \int_{\R^d} \mu_x(\cdot) \d x$ is the law of the so-called reversible Brownian motion, which is sometimes used as reference measure in the study of Schr\"odinger's problem \cite{Leo14b}. We show that $\mu^\e$ is the $\e$-scaling of $\mu^1$, as from \eqref{mu0-scaling},
\begin{equation}\label{mu-scaling}
  \begin{split}
    \mu^\e(\d\omega) &= \int_{\R^d} \mu^\e_x(\d\omega) \frac{\d x}{\sqrt\e} = \int_{\R^d} \mu_0\left( \frac{\d\omega - x}{\sqrt\e} \right) \frac{\d x}{\sqrt\e} \\
    &= \int_{\R^d} \mu_0\left( \frac{\d\omega}{\sqrt\e} -y \right) \d y = \int_{\R^d} (\delta_\e)_* \mu_y(\d\omega) \d y =  (\delta_\e)_* \mu^1(\d\omega).
  \end{split}
\end{equation}
Therefore, for the time-reversed measures $\rev\mu^\e := R_* \mu^\e$, it is clear that (see also \cite{anderson1982reverse})
\begin{lem}\label{mu-rev-equal}
  Under Assumption \ref{mu-initial}, we have $\rev\mu^\e = \mu^\e$ for all $\e>0$.
\end{lem}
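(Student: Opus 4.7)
The plan is to verify $\rev\mu^\e = \mu^\e$ by matching all finite-dimensional marginals of these two $\sigma$-finite Borel measures on $(\mathcal{C}^{d,T}, \mathcal B(\mathcal{C}^{d,T}))$, exploiting two symmetries baked into the setup: translation invariance of Lebesgue measure (Assumption \ref{mu-initial}) and the reflection symmetry $\rho_0(s,z)=\rho_0(s,-z)$ of the Gaussian heat kernel.

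First, I would use the Markov structure of $\mu^\e$, inherited from the disintegration \eqref{mu-tilde-total} with Brownian transition measures $\mu^\e_x$. For arbitrary times $0 < s_1 < s_2 < \cdots < s_n \le T$, integrating the Markov chain of Gaussian densities against $\mu^\e|_{t=0} = \d x/\sqrt\e$ yields
\begin{equation*}
  \mu^\e|_{s_1,\ldots,s_n}(\d y_1, \ldots, \d y_n) = \frac{1}{\sqrt\e} \prod_{j=2}^{n} \rho_0\bigl(\e(s_j - s_{j-1}),\, y_j - y_{j-1}\bigr) \, \d y_1\cdots \d y_n,
\end{equation*}
where I have used $\int_{\R^d} \rho_0(\e s_1, y_1 - x)\, \d x = 1$. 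Crucially, this density depends on times only through the positive increments $s_j - s_{j-1}$ and on space only through the differences $y_j - y_{j-1}$.

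By $\rev\mu^\e = R_*\mu^\e$ with $R\omega(t) = \omega(T-t)$, the marginal $\rev\mu^\e|_{s_1,\ldots,s_n}$ is the joint law of $(\omega(T-s_1),\ldots,\omega(T-s_n))$ under $\mu^\e$. Setting $u_j := T - s_{n+1-j}$ so that $0 \le u_1 < \cdots < u_n \le T$ and applying the displayed formula to $(\omega(u_1), \ldots, \omega(u_n))$, a coordinate permutation combined with the identities $u_j - u_{j-1} = s_{n+2-j} - s_{n+1-j}$ and $\rho_0(s,z) = \rho_0(s,-z)$ shows that $\rev\mu^\e|_{s_1,\ldots,s_n}$ has exactly the same density as $\mu^\e|_{s_1,\ldots,s_n}$. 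Since both measures are $\sigma$-finite (as is evident from these densities) and their marginals coincide on the $\pi$-system of cylinder sets generating $\mathcal B(\mathcal{C}^{d,T})$, a standard uniqueness argument gives the claimed equality.

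The proof is essentially a bookkeeping of symmetries, so there is no real obstacle; the only point requiring care is that $\mu^\e$ has infinite total mass (its time marginals are Lebesgue), so one must invoke the $\sigma$-finite version of the monotone-class uniqueness theorem rather than the probability-measure version. An equivalent but shorter route would be to reduce to $\e = 1$ via the scaling identity \eqref{mu-scaling}, noting that the time reversal $R$ commutes with the pointwise scaling $\delta_\e$, and then appeal to the classical reversibility of the reversible Brownian motion $\mu^1$ as recorded in \cite{Leo14b}.
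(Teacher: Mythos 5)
Your proof is correct: the finite-dimensional computation, the use of translation invariance of the Lebesgue initial marginal together with the reflection symmetry of the heat kernel, and the $\sigma$-finite uniqueness step are all sound. The paper itself offers no written proof (it declares the lemma ``clear'' after recording the stationarity of the time marginals and the detailed-balance symmetry of the two-time marginals, with a pointer to the scaling identity and the reversible Brownian motion $\mu^1$), so your argument is simply the rigorous fleshing-out of exactly those ingredients rather than a genuinely different route.
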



We now reverse the time direction of the target measures $\nu^\e$, i.e., we consider their time-reversals $\rev\nu^\e := R_* \nu^\e$.
\begin{lem}
Under Assumptions \ref{asmp-Phi-1} and \ref{mu-initial}, we have for each $\e>0$,
\begin{equation}\label{RN-rev}
  \frac{\mathrm{d} \nu^\e}{\mathrm{d} \rev\nu^\e}(\omega) 
  = \exp \left\{ -\frac{1}{\e} [f^\e(\omega(0)) + \Phi^\e(\omega)] + \frac{1}{\e} [f^\e(\omega(T)) + \Phi^\e(R(\omega))] \right\}.
\end{equation}
\end{lem}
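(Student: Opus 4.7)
The plan is to compute the Radon--Nikodym derivative by a chain-rule argument that pivots through $\mu^\e$ and exploits the reversibility $\rev\mu^\e = \mu^\e$ just established in Lemma \ref{mu-rev-equal}. Concretely, since $\nu^\e \ll \mu^\e$ with density \eqref{nu-mu} and, as will be checked, $\nu^\e \ll \rev\nu^\e$, one can formally write
\begin{equation*}
  \frac{\d\nu^\e}{\d\rev\nu^\e}(\omega) = \frac{\d\nu^\e}{\d\mu^\e}(\omega) \cdot \frac{\d\mu^\e}{\d\rev\nu^\e}(\omega) = \frac{\d\nu^\e}{\d\mu^\e}(\omega) \cdot \frac{\d\rev\mu^\e}{\d\rev\nu^\e}(\omega),
\end{equation*}
where the last equality uses Lemma \ref{mu-rev-equal}.

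The second factor is handled by the general change-of-variables identity for pushforwards under the involution $R$ (which satisfies $R \circ R = \id$): if $\rev\nu^\e = R_* \nu^\e$ and $\rev\mu^\e = R_* \mu^\e$, then
\begin{equation*}
  \frac{\d\rev\mu^\e}{\d\rev\nu^\e}(\omega) = \frac{\d\mu^\e}{\d\nu^\e}(R(\omega)).
\end{equation*}
Plugging in the inverse density from \eqref{nu-mu} and recalling that $R(\omega)(0) = \omega(T)$ yields
\begin{equation*}
  \frac{\d\rev\mu^\e}{\d\rev\nu^\e}(\omega) = \exp \left\{ \frac{1}{\e}[f^\e(\omega(T)) + \Phi^\e(R(\omega))] \right\}.
\end{equation*}
Multiplying by $\d\nu^\e/\d\mu^\e$ from \eqref{nu-mu} then produces exactly the claimed expression \eqref{RN-rev}.

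The only real point requiring care is justifying the chain rule, namely that $\nu^\e$ is absolutely continuous with respect to $\rev\nu^\e$. This follows from the equivalence $\nu^\e \sim \mu^\e$ (both densities in \eqref{nu-mu} are strictly positive, since $\Phi^\e$ and $f^\e$ are real-valued and $Z^\e_{\Phi^\e}$ is finite by Assumption \ref{asmp-Phi-1}), combined with $\rev\nu^\e \sim \rev\mu^\e$ (pushforward by the bijection $R$ preserves equivalence), and then Lemma \ref{mu-rev-equal} collapses $\rev\mu^\e = \mu^\e$. Hence $\nu^\e \sim \rev\nu^\e$ and the multiplicative chain rule is valid $\mu^\e$-a.e.\ (equivalently, $\nu^\e$-a.e.\ or $\rev\nu^\e$-a.e.). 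I do not anticipate any genuine obstacle beyond bookkeeping for $R$; the main thing to track is making sure the boundary contributions swap correctly ($\omega(0) \leftrightarrow \omega(T)$) and that $\Phi^\e$ is evaluated at $R(\omega)$, rather than at $\omega$, inside the second exponential.
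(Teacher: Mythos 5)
Your proof is correct and follows essentially the same route as the paper: both pivot through $\mu^\e$, use $\rev\mu^\e = \mu^\e$ from Lemma \ref{mu-rev-equal}, and compute the reversed density by the change-of-variables for pushforwards under the involution $R$ (the paper phrases this via the conditional expectation in Lemma \ref{lem-0}-(i), which collapses to your pointwise evaluation at $R(\omega)$ because $R$ is a bijection). Your explicit verification of the mutual absolute continuity $\nu^\e \sim \rev\nu^\e$ is a welcome addition that the paper leaves implicit.
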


\begin{proof}
It can be derived from Lemma \ref{lem-0}-(i) and \eqref{nu-mu} that
\begin{equation}\label{nu-rev-mu}
  \begin{split}
    \frac{\mathrm{d} \rev\nu^\e}{\mathrm{d} \mu^\e}(\omega') &= \frac{\mathrm{d} \rev\nu^\e}{\mathrm{d} \rev\mu^\e}(\omega') = \frac{\mathrm{d} (R_*\nu^\e)}{\mathrm{d} (R_*\mu^\e)}(\omega') \\
    &= \E_{\mu^\e} \left[ \frac{\mathrm{d} \nu^\e}{\mathrm{d} \mu^\e}(\omega) \bigg| R(\omega)= \omega' \right] \\
    &= \E_{\mu^\e} \left[ \exp \left\{ -\frac{1}{\e} [f^\e(\omega(0)) + \Phi^\e(\omega)] \right\} \bigg| R(\omega)=\omega' \right] \\
    &= \E_{\mu^\e} \left[ \exp \left\{ -\frac{1}{\e} [f^\e(R \circ R \circ \omega(0)) + \Phi^\e(R \circ R \circ \omega)] \right\} \bigg| R(\omega)=\omega' \right] \\
    &= \exp \left\{-\frac{1}{\e} [f^\e(\omega'(T)) + \Phi^\e(R(\omega'))] \right\}.
  \end{split}
\end{equation}
The result follows from
\begin{equation*}
  \frac{\mathrm{d} \nu^\e}{\mathrm{d} \rev\nu^\e}(\omega) = \frac{\d\nu^\e}{\d\mu^\e}(\omega) \Big/ \frac{\mathrm{d} \rev\nu^\e}{\mathrm{d} \mu^\e}(\omega). 
\end{equation*}
\end{proof}

When $\Phi^\e$, $\e>0$, are of the cost function form, the following formula holds for the time-reversed measures $\rev\nu^\e := R_* \nu^\e$.

\begin{cor}
Let Assumptions \ref{asmp-Phi-1}, \ref{asmp-Phi-cost} and \ref{mu-initial} hold. For each $\e>0$, the time-reversed measure $\rev\nu^\e$ is given by
\begin{equation}\label{nu-rev-mu-Phi}
  \frac{\mathrm{d} \rev\nu^\e}{\mathrm{d} \mu^\e}(\omega) = \exp \left\{-\frac{1}{\e} \left[ g^\e(\omega(0)) + \int_0^T V(T-t, \omega(t)) \mathrm dt +f^\e(\omega(T)) \right] \right\},
\end{equation}
and for $\mu^\e|_{t=0}$-a.s. $x\in\R^d$, the transition measure $\rev\nu^\e_x := \rev\nu^\e(\cdot | \omega(0)=x)$ is given by
\begin{equation}\label{nu-mu-Phi-rev}
  \frac{\mathrm{d} \rev\nu^\e_x}{\mathrm{d} \mu^\e_x}(\omega) = \frac{1}{Z^\e_{\Psi^\e}(x)} \exp \left(-\frac{1}{\e} \Psi^\e(\omega)\right),
\end{equation}
where the potential functional $\Psi^\e: \mathcal{C}^{d,T} \rightarrow \mathbb{R}$ is defined by
\begin{equation*}
  \Psi^\e(\omega) := \int_0^T V(T-t, \omega(t)) \mathrm dt +f^\e(\omega(T)).
\end{equation*}
and $Z^\e_{\Psi^\e}(x)$ is its normalizing constant
\begin{equation}\label{norm-const-2}
Z^\e_{\Psi^\e}(x) := \E_{\mu_x^\e} \left[ e^{-\frac{1}{\e}\Psi^\e} \right].
\end{equation}
\end{cor}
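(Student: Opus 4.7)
The plan is to specialize the general time-reversal formula \eqref{nu-rev-mu}, which was derived in the preceding lemma under only Assumptions \ref{asmp-Phi-1} and \ref{mu-initial}, to the cost-function form of $\Phi^\e$ given in Assumption \ref{asmp-Phi-cost}, and then disintegrate with respect to the initial marginal to read off the transition densities. This is essentially a book-keeping exercise: no new probabilistic input is required, since time-reversal of $\mu^\e$ has already been handled by Lemma \ref{mu-rev-equal}.

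First, I would substitute \eqref{phi} into the right-hand side of \eqref{nu-rev-mu}, which gives
\begin{equation*}
  \Phi^\e(R(\omega)) = \int_0^T V(t, R(\omega)(t)) \, \mathrm dt + g^\e(R(\omega)(T)) = \int_0^T V(t, \omega(T-t)) \, \mathrm dt + g^\e(\omega(0)),
\end{equation*}
using $R(\omega)(t) = \omega(T-t)$ and in particular $R(\omega)(T) = \omega(0)$. The change of variables $s = T - t$ in the time integral then turns $V(t, \omega(T-t))$ into $V(T-s, \omega(s))$. Plugging this back into \eqref{nu-rev-mu} immediately yields \eqref{nu-rev-mu-Phi}. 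Note how the roles of the boundary terms swap under reversal: the original initial correction $f^\e(\omega(0))$ becomes a terminal correction $f^\e(\omega(T))$, while the terminal cost $g^\e$ migrates to the initial slice.

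For the transition formula \eqref{nu-mu-Phi-rev}, I would invoke the disintegration identity (Corollary \ref{cor-0}-(ii)) applied to the pair $(\rev\nu^\e, \mu^\e)$, analogous to the derivation leading to \eqref{nu-mu}. Fixing $\omega(0) = x$, the term $g^\e(\omega(0)) = g^\e(x)$ in the exponent of \eqref{nu-rev-mu-Phi} is constant along paths in $\mathcal{C}^{d,T}_x$ and hence factors out. What remains on the path-dependent exponent is precisely $-\frac{1}{\e} \Psi^\e(\omega)$ with $\Psi^\e(\omega) = \int_0^T V(T-t, \omega(t)) \, \mathrm dt + f^\e(\omega(T))$, and the normalization reduces to $Z^\e_{\Psi^\e}(x) = \E_{\mu_x^\e}[e^{-\Psi^\e/\e}]$ as in \eqref{norm-const-2}.

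There is no substantive obstacle here; the only point deserving care is to verify the finiteness of $Z^\e_{\Psi^\e}(x)$ so that \eqref{nu-mu-Phi-rev} defines a bona fide probability measure. This is immediate: $\Psi^\e$ inherits the lower bound from Assumption \ref{asmp-Phi-1} applied to $\Phi^\e$ (after the time-flip $V(\cdot,\cdot) \mapsto V(T-\cdot, \cdot)$ and the swap $g^\e \leftrightarrow f^\e$), together with Fernique's theorem in exactly the same manner as was done for $Z^\e_{\Phi^\e}(x)$ just after Assumption \ref{asmp-Phi-1}. This completes the verification.
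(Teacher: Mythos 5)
Your proposal is correct and follows essentially the same route as the paper: substitute the cost-function form into \eqref{nu-rev-mu}, change variables $t\mapsto T-t$ in the running cost to obtain \eqref{nu-rev-mu-Phi}, and then disintegrate via Corollary \ref{cor-0} (the paper makes the intermediate step explicit by first computing $\frac{\d\rev\nu^\e|_{t=0}}{\d\mu^\e|_{t=0}}(x)=e^{-g^\e(x)/\e}Z^\e_{\Psi^\e}(x)$ with part (i) before applying part (ii), but this is the same book-keeping you describe). One minor caveat: your finiteness check for $Z^\e_{\Psi^\e}(x)$ via ``inheriting'' Assumption \ref{asmp-Phi-1} is not quite automatic, since that assumption controls $g^\e(\omega(T))$ rather than $f^\e(\omega(T))$ and $-g^\e(\omega(0))$; finiteness for a.e.\ $x$ follows more directly from the fact that $\rev\nu^\e$ is a probability measure absolutely continuous with respect to $\mu^\e=\rev\mu^\e$, so its initial marginal density $e^{-g^\e(x)/\e}Z^\e_{\Psi^\e}(x)$ is a.s.\ finite.
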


Here we indulge an abuse of notation: $\rev\nu^\e_x$ denotes the conditional measure $\rev\nu^\e(\cdot | \omega(0)=x)$, instead of the time-reversal of the conditional measure $\nu^\e_x$.

\begin{proof}
When $\Phi^\e$ takes the form \eqref{phi}, we have
\begin{equation*}
  \Phi^\e(R (\omega)) = \int_0^T V(t, \omega(T-t)) \mathrm dt +g^\e(\omega(0)) = \int_0^T V(T-t, \omega(t)) \mathrm dt +g^\e(\omega(0)).
\end{equation*}
Equation \eqref{nu-rev-mu-Phi} follows from \eqref{nu-rev-mu}. 
Applying Corollary \ref{cor-0}-(i) with $t=0$ and using \eqref{nu-rev-mu-Phi}, we get for $\mu^\e|_0$-a.s. $x \in \R^d$,
\begin{equation*}
  \begin{split}
    \frac{\d \rev\nu^\e|_{t=0}}{\d \mu^\e|_{t=0}} (x) &= \E_{\mu^\e(\cdot | \omega(0) = x)} \left( \frac{\mathrm{d} \rev\nu^\e}{\mathrm{d} \mu^\e} \right) \\
    &= \E_{\mu^\e} \left[ \exp \left(-\frac{1}{\e} \left[ g^\e(\omega(0)) + \int_0^T V(T-t, \omega(t)) \mathrm dt +f^\e(\omega(T)) \right] \right) \bigg| \omega(0) = x \right] \\
    &= e^{-\frac{1}{\e} g^\e(x)} \E_{\mu^\e_x} \left[ \exp \left(-\frac{1}{\e} \left[ \int_0^T V(T-t, \omega(t)) \mathrm dt +f^\e(\omega(T)) \right] \right) \right] \\
    &= e^{-\frac{1}{\e} g^\e(x)} Z^\e_{\Psi^\e}(x),
  \end{split}
\end{equation*}

Then, Corollary \ref{cor-0}-(ii) implies that for $\mu^\e$-a.s. $\omega \in \mathcal{C}^{d,T}$,
\begin{equation*}
  \frac{\mathrm{d} \rev\nu^\e}{\mathrm{d} \mu^\e}(\omega) = \frac{\d \rev\nu^\e|_{t=0}}{\d \mu^\e|_{t=0}} (\omega(0)) \frac{\mathrm{d} \rev\nu^\e_x}{\mathrm{d} \mu^\e_x}(\omega) \bigg|_{x = \omega(0)} = e^{-\frac{1}{\e} g^\e(\omega(0))} Z^\e_{\Psi^\e}(\omega(0)) \frac{\mathrm{d} \rev\nu^\e_x}{\mathrm{d} \mu^\e_x}(\omega) \bigg|_{x = \omega(0)},
\end{equation*}
which, compared with \eqref{nu-rev-mu-Phi}, yields \eqref{nu-mu-Phi-rev}.
\end{proof}

It also follows from \eqref{RN-rev} that
\begin{equation*}
  \frac{\mathrm{d} \nu^\e}{\mathrm{d} \rev\nu^\e}(\omega) = \exp \left\{ \frac{1}{\e} [f^\e(\omega(T)) - g^\e(\omega(T))] - \frac{1}{\e} [f^\e(\omega(0)) - g^\e(\omega(0))] + \frac{1}{\e} \int_0^T (V(T-t, \omega(t)) - V(t, \omega(t))) \mathrm dt \right\}.
\end{equation*}
Combining \eqref{nu-mu-Phi-rev} and \eqref{nu-mu-Phi}, we get
\begin{equation}\label{eqn:RNnu}
  \frac{\mathrm{d} \nu^\e_x}{\mathrm{d} \rev\nu^\e_x} (\omega) = \frac{Z^\e_{\Psi^\e}(x)}{Z^\e_{\Phi^\e}(x)} \exp \left\{ \frac{1}{\e} \left[ \int_0^T (V(T-t, \omega(t)) - V(t, \omega(t))) \mathrm dt + f^\e(\omega(T)) - g^\e(\omega(T)) \right] \right\}.
\end{equation}

An interesting special case is when $V$ is not explicitly time-dependent, then we have
\begin{align*}
  \frac{\mathrm{d} \nu^\e}{\mathrm{d} \rev\nu^\e}(\omega) = \exp \left\{ \frac{1}{\e} [f^\e(\omega(T)) - g^\e(\omega(T))] - \frac{1}{\e} [f^\e(\omega(0)) - g^\e(\omega(0))] \right\},
\end{align*}
and
\begin{equation}\label{RNnu1} 
  \frac{\mathrm{d} \nu^\e_x}{\mathrm{d} \rev\nu^\e_x} (\omega) = \frac{Z^\e_{\Psi^\e}(x)}{Z^\e_{\Phi^\e}(x)} \exp \left\{ \frac{1}{\e} \left[ f^\e(\omega(T)) - g^\e(\omega(T)) \right] \right\}.
\end{equation}
We observe that the r.h.s.'s of the above two equations only depend on the initial and terminal states of the path $\omega \in \mathcal{C}^{d,T}$, but not the whole trajectory. Such property is referred to as `path-independence'. It is
at the heart of Schr\"odinger's original (1931-32) observation \cite{Schrodinger1932}.

\subsection{Born-type formula for time marginals}\label{subsec-Born}

We now prove the following Born-type formula for the time marginals of $\nu^\e$.

\begin{lem}\label{lem-Born}
Let Assumptions \ref{asmp-Phi-1}, \ref{asmp-Phi-cost} and \ref{mu-initial} hold. Suppose that $V\in C_b^{0,1}([0,T]\times\mathbb{R}^d)$ and $f^\e, g^\e\in C_b(\R^d)$. Then for $\e>0$ and $t\in [0,T]$,
\begin{equation}\label{eqn:nu-Born}
  \nu^\e|_t(\mathrm{d} x) = \frac{1}{\sqrt\e} \phi^\e(t,x) \psi^\e(t,x) \d x,
\end{equation}
where $\phi^\e \in C^{1,2}([0,T]\times \R^d, \R_+)$ is the unique solution of the following backward heat equation
\begin{equation}\label{Hea-equ}
  \begin{cases}
    \e \partial_t\phi^\e(t,x) + \frac{\e^2}{2} \Delta\phi^\e(t,x) - V(t,x)\phi^\e(t,x)=0, & (t,x)\in [0,T)\times\R^d, \\
    \phi^\e(T,x) = e^{-\frac{1}{\e} g^\e(x)}, & x\in\R^d,
  \end{cases}
\end{equation}
and $\phi^\e \in C^{1,2}([0,T]\times \R^d, \R_+)$ is the unique solution of the following forward heat equation 
\begin{equation}\label{Hea-equ-2}
  \begin{cases}
    \e \partial_t\psi^\e(t,x) - \frac{\e^2}{2} \Delta\psi^\e(t,x) + V(t,x)\psi^\e(t,x) = 0, & (t,x)\in (0,T]\times\R^d, \\
    \psi^\e(0,x) = e^{-\frac{1}{\e} f^\e(x)}, & x\in\R^d.
  \end{cases}
\end{equation}
\end{lem}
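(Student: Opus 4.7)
My plan is to combine the Markov structure of the reference measure $\mu^\e$, its time-reversal invariance $\rev\mu^\e=\mu^\e$ (Lemma \ref{mu-rev-equal}), and the Feynman--Kac formula, producing a factorization of $\nu^\e|_t$ as a product of a backward and a forward parabolic solution.

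First, I would disintegrate $\mu^\e$ at the intermediate time $t$. Assumption \ref{mu-initial} together with \eqref{mu-marginal} gives $\mu^\e|_t(\d y)=\d y/\sqrt\e$, and the Markov property of the underlying scaled Brownian motion makes $\omega|_{[0,t]}$ and $\omega|_{[t,T]}$ conditionally independent given $\omega(t)=y$. Inserting the Radon--Nikodym derivative
\[
\frac{\d\nu^\e}{\d\mu^\e}(\omega)=\exp\Big\{-\tfrac{1}{\e}\big[f^\e(\omega(0))+\textstyle\int_0^T V(s,\omega(s))\d s+g^\e(\omega(T))\big]\Big\},
\]
splitting $\int_0^T=\int_0^t+\int_t^T$, and factoring the conditional expectation by the Markov property yields
\[
\nu^\e|_t(\d y)=\tfrac{1}{\sqrt\e}\,\psi^\e(t,y)\,\phi^\e(t,y)\,\d y,
\]
where $\psi^\e(t,y)$ and $\phi^\e(t,y)$ are the conditional expectations of the past and future exponentials, respectively.

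Next I would identify each factor with an explicit Feynman--Kac representation. The future conditional law of $\omega|_{[t,T]}$ given $\omega(t)=y$ is (the time-shift of) $\mu^\e_y$, so
\[
\phi^\e(t,y)=\E_{\mu_0}\Big[\exp\Big(-\tfrac{1}{\e}\textstyle\int_t^T V(s,y+\sqrt\e W(s-t))\d s-\tfrac{1}{\e}g^\e(y+\sqrt\e W(T-t))\Big)\Big],
\]
which is the classical Feynman--Kac solution of the backward PDE \eqref{Hea-equ} with terminal datum $e^{-g^\e/\e}$. For $\psi^\e$, I would invoke $\rev\mu^\e=\mu^\e$: conditional on $\omega(t)=y$, the reversed path $\tilde\omega(\cdot):=\omega(t-\cdot)$ on $[0,t]$ is a scaled Brownian motion started at $y$ (this reduces to reversibility plus the symmetry of the Gaussian kernel). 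After the change of variable $r=t-s$,
\[
\psi^\e(t,y)=\E_{\mu_0}\Big[\exp\Big(-\tfrac{1}{\e}\textstyle\int_0^t V(t-r,y+\sqrt\e W(r))\d r-\tfrac{1}{\e}f^\e(y+\sqrt\e W(t))\Big)\Big].
\]

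Finally, I would turn this representation into the forward PDE \eqref{Hea-equ-2}. Applying Feynman--Kac to the auxiliary backward equation $\partial_s u+\frac{\e}{2}\Delta u-\frac{1}{\e}V(t-s,\cdot)u=0$ on $[0,t]$ with terminal datum $e^{-f^\e/\e}$ produces a function $u(s,y)$ with $u(0,y)=\psi^\e(t,y)$; reversing the time variable $s\mapsto t-s$ flips the sign in front of $\partial_t$ and delivers exactly \eqref{Hea-equ-2} with initial datum $e^{-f^\e/\e}$. Under $V\in C_b^{0,1}$ and $f^\e,g^\e\in C_b$, standard parabolic theory together with the Feynman--Kac representation gives unique classical $C^{1,2}$ solutions of both heat equations, and positivity of $\phi^\e,\psi^\e$ is immediate from the positivity of exponentials. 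I expect the main subtlety to lie in the time-reversal step for $\psi^\e$: one must carefully justify that the past conditional law of $\mu^\e$ given $\omega(t)=y$ coincides with that of a forward scaled Brownian motion started at $y$, and then convert the resulting backward Feynman--Kac PDE into the forward equation \eqref{Hea-equ-2}, since a direct application of Feynman--Kac would only yield a backward equation.
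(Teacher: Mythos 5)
Your proposal is correct and follows essentially the same route as the paper's proof: disintegration of $\frac{\d\nu^\e|_t}{\d\mu^\e|_t}$ via the Markov property of the reference measure, Feynman--Kac for the future factor $\phi^\e$, and the reversibility $\rev\mu^\e=\mu^\e$ plus a time substitution to turn the past factor $\psi^\e$ into a solution of the forward equation. The subtlety you flag at the end is exactly the one the paper resolves by working with $\psi^\e(T-t,x)$ and Lemma \ref{mu-rev-equal}.
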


\begin{proof}
By Corollary \ref{cor-0}-(i), \eqref{nu-mu}, \eqref{mu-scaling} and \eqref{reg-cond-push},
\begin{equation}\label{RN-marginal}
  \begin{split}
    \frac{\mathrm{d} \nu^\e|_t}{\mathrm{d} \mu^\e|_t}(x) &= \E_{\mu^\e(\cdot | \omega(t)=x)} \left( \frac{\mathrm{d} \nu^\e}{\mathrm{d} \mu^\e} \right) \\
    &= \E_{(\delta_\e)_* \mu^1(\cdot | \omega(t)=x)} \left[ \exp \left\{ -\frac{1}{\e} [f^\e(\omega(0)) + \Phi^\e(\omega)] \right\} \right] \\
    &= \E_{\mu^1(\cdot | \sqrt\e \omega(t)=x)} \left[ \exp \left\{ -\frac{1}{\e} \left[ f^\e(\sqrt\e \omega(0)) + \Phi^\e(\sqrt\e \omega) \right] \right\} \right].
  \end{split}
\end{equation}
When $\Phi^\e$ is of the form \eqref{phi}, we can use the Markov property of $\mu^1$ (e.g. \cite[Lemma 11.1]{Kallenberg2021FoundationsOM}) to derive
\begin{equation}\label{RN-marginal-2}
  \begin{split}
    \frac{\mathrm{d} \nu^\e|_t}{\mathrm{d} \mu^\e|_t}(x) &= \E_{\mu^1} \left[ \exp \left(- \frac{1}{\e} \int_t^T V(s, \sqrt\e \omega(s)) \mathrm ds - \frac{1}{\e} g^\e(\sqrt\e \omega(T)) \right) \bigg| \sqrt\e \omega(t) = x \right] \\
    &\quad \times \E_{\mu^1} \left[ \exp \left(-\frac{1}{\e} \int_0^t V(s, \sqrt\e \omega(s)) \mathrm ds - \frac{1}{\e} f^\e(\sqrt\e \omega(0)) \right) \bigg| \sqrt\e \omega(t) = x \right] \\
    &=: \phi^\e(t,x) \psi^\e(t,x).
  \end{split}
\end{equation}
As $\omega(\cdot)$ is a reversible Brownian motion under $\mu^1$, we use the properties of independence and stationary increments and obtain
\begin{equation}\label{FK}
  \begin{split}
    \phi^\e(t,x) &= \E_{\mu^1} \left[ \exp \left(- \frac{1}{\e} \int_t^T V(s, x+ \sqrt\e \omega(s-t)) \mathrm ds - \frac{1}{\e} g^\e(x+ \sqrt\e \omega(T-t)) \right) \bigg| \omega(0) = 0 \right] \\
    &= \E_{\mu_0} \left[ \exp \left(- \frac{1}{\e} \int_t^T V(s, x+ \sqrt\e \omega(s-t)) \mathrm ds - \frac{1}{\e} g^\e(x+ \sqrt\e \omega(T-t)) \right) \right].
  \end{split}
\end{equation}
By Feynman--Kac theory \cite[Chapter 1, Theorems 12 and 16]{friedman1964partial}, under the regularity assumptions on $V$ and $g^\e$, $\phi^\e$ is the unique solution of the backward heat equation \eqref{Hea-equ}.
For the function $\psi^\e: [0,T]\times \R^d$, we transform it using Lemma \ref{mu-rev-equal} and \eqref{reg-cond-push},
\begin{equation*}
  \begin{split}
    \psi^\e(T-t,x) &= \E_{\rev\mu^1} \left[ \exp \left(-\frac{1}{\e} \int_0^{T-t} V(s, \sqrt\e \omega(s)) \mathrm ds - \frac{1}{\e} f^\e(\sqrt\e \omega(0)) \right) \bigg| \sqrt\e \omega(T-t) = x \right] \\
    &= \E_{\mu^1} \left[ \exp \left(-\frac{1}{\e} \int_0^{T-t} V(s, \sqrt\e \omega(T-s)) \mathrm ds - \frac{1}{\e} f^\e(\sqrt\e \omega(T)) \right) \bigg| \sqrt\e \omega(t) = x \right] \\
    &= \E_{\mu^1} \left[ \exp \left(-\frac{1}{\e} \int_0^{T-t} V(s, \omega_x^\e(T-t-s)) \mathrm ds - \frac{1}{\e} f^\e(\omega_x^\e(T-t)) \right) \bigg| \omega(0) = 0 \right] \\
    &= \E_{\mu_0} \left[ \exp \left(-\frac{1}{\e} \int_t^T V(T-r, \omega_x^\e(r-t)) \mathrm dr - \frac{1}{\e} f^\e(\omega_x^\e(T-t)) \right) \right].
  \end{split}
\end{equation*}
For the same reason as $\phi^\e$, under the assumption on $f^\e$, we infer that the function $\psi^\e$ is the unique solution of 
\begin{equation*}
  \e \partial_t [\psi^\e(T-t,x)] + \frac{\e^2}{2} \Delta\psi^\e(T-t,x) - V(T-t,x)\psi^\e(T-t,x) = 0, \quad \psi^\e(0,x) = e^{-\frac{1}{\e} f^\e(x)},
\end{equation*}
which is the forward heat equation \eqref{Hea-equ-2}.
Combining \eqref{RN-marginal-2} with \eqref{mu-marginal}, we get the desired result.
\end{proof}

\begin{rem}\label{improve-reg}
\normalfont
One can improve the regularity of $\phi^\e$ and $\psi^\e$ to $C^{1,3}$ by imposing the strong condition $V\in C_b^{0,2}$. Cf. \cite[Chapter 1, Sections 4--6]{friedman1964partial}.
\end{rem}

One can extract from \eqref{RN-marginal-2} a system of equations for $g^\e$ and $f^\e$, by taking $t=0$ and $t=T$, as follows:
\begin{equation}\label{Sch-sys}
  \left\{
  \begin{aligned}
    e^{-\frac{1}{\e} f^\e(x)} \E_{\mu^\e} \left[ \exp \left(-\frac{1}{\e} \int_0^T V(s, \omega(s)) \mathrm ds - \frac{1}{\e} g^\e(\omega(T)) \right) \bigg| \omega(0) = x \right] &= \frac{\nu^\e|_{t=0}(\mathrm{d} x)}{\d x}, \\
    e^{-\frac{1}{\e} g^\e(x)} \E_{\mu^\e} \left[ \exp \left(-\frac{1}{\e} \int_0^T V(s, \omega(s)) \mathrm ds - \frac{1}{\e} f^\e(\omega(0)) \right) \bigg| \omega(T) = x \right] &= \frac{\nu^\e|_{t=T}(\mathrm{d} x)}{\d x}.
  \end{aligned}
  \right.
\end{equation}
This system has been referred to as Schr\"odinger's system, see \cite[Theorem 2.4]{leonard2014survey} or \cite[Eqs.~(3.17), (3.18)]{jamison1974reciprocal}. Note that the Schr\"{o}dinger system can admit non-uniqueness up to a constant. Indeed, the system still holds after adding a constant to $f^\e$ and subtracting $g^\e$ by the same constant. 

The most general proof of existence and uniqueness of its solution $\{e^{-\frac{1}{\e}f^\e}, e^{-\frac{1}{\e}g^\e}\}$, not necessarily integrable is due to Beurling when the right hand side of equation \eqref{Sch-sys} is strictly positive \cite{Beurling1960}. A general class of potentials $V$ in equation \eqref{Sch-sys} is the one of Kato. For more about that, cf. \cite{CruzeiroZambrini1991} and, in a more entropic perspective \cite{CruzeiroFollmerZambrini2006}.

The project to construct diffusion processes from the data of two probability densities $\nu^\e|_{t=0}(dx)/dx$ and $\nu^\e|_{t=T}(dx)/dx$ is due to Schr\"odinger \cite{Schrodinger1932}. Only when the Born-type form \eqref{eqn:nu-Born} is required, the processes are Markovian. In this Markovian case, the project was (informally) realized in \cite{Zambrini1986}, using insights of S. Bernstein, B. Jamison and A. Beurling. The diffusions have also been called Bernstein reciprocal processes, because this author gave an early \cite{Bernstein1932} informal account of their properties, including the fact that, in general, they form a class larger than the Markovian one. Indeed, since Euclidean quantum field theory of the seventies, they are called one-dimensional Markov random fields on $[0,T]$.

Schr\"odinger wanted to find a statistical mechanical analogy with Born's interpretation of the quantum wave function: $\varphi\overline{\varphi}dx=\|\varphi\|_{2}^2dx$ in Hilbert space, but involving well-defined probability measures. Notice that $\phi^\e$ and $\psi^\e$ in \eqref{Hea-equ}-\eqref{Hea-equ-2} are Euclidean counterparts of two, generally unrelated quantum states in $L^2(\mathbb{R}^d)$ associated with the same Hamiltonian operator of equations \eqref{Hea-equ}-\eqref{Hea-equ-2} since the r.h.s. probabilities of \eqref{Sch-sys} were arbitrarily given in Schr\"odinger's problem. So the above-mentioned Born product form is in fact a counterpart of any $L^2$ scalar product, which is given here a probabilistic meaning, in strong contrast with quantum theory.

On the other hand, the two boundary ``states'' $\{e^{-\frac{1}{\e}f^\e}, e^{-\frac{1}{\e}g^\e}\}$ of \eqref{Sch-sys} and action functional $S^\e$ \eqref{HJB-prob} correspond to a nonlinear transformation of the form state~$=e^{-\frac{1}{\e}S^\e(x)}$ which, up to a factor $i=\sqrt{-1}$ in front of the action, has been fundamental in all historical approaches to the theory (Schr\"odinger \cite{Schrodinger1926}, Dirac, Feynman...).
The same nonlinear transformation has also been found useful in stochastic optimal control \cite{fleming2006controlled} and lies as well behind the stochastic version of classical action functional of \cite{CruzeiroZambrini1991,huang2023second}, less dependent of Feynman--Kac formula than the version given here. We shall come back to this in Subsection \ref{subsec:SEL}.

\subsubsection*{Normalizing constants}

Recall that we obtained a pair of heat equations, \eqref{Hea-equ} and \eqref{Hea-equ-2}, both of which yield Feynman--Kac representations, as shown in the proof of Lemma \ref{lem-Born}. It is now easy to show that the normalizing constants $Z^\e_{\Phi^\e}(x)$ in \eqref{norm-const} and $Z^\e_{\Psi^\e}(x)$ in \eqref{norm-const-2} can also have Feynman--Kac representations, when $\Phi^\e$ is of the cost function form \eqref{phi}.

\begin{cor}
Under the assumptions of Lemma \ref{lem-Born}, we have for each $\e>0$ and $x\in\R^d$,
\begin{align}
  Z^\e_{\Phi^\e}(x) = \phi^\e(0,x) &= \mathbf{E}_{\mu_0} \left[ \exp\left\{-\frac{1}{\e} \int_0^T V(s,\omega_x^\e(s) )\mathrm{d}s - \frac{1}{\e} g^\e(\omega_x^\e(T))  \right\} \right], \label{FK-Z} \\
  Z^\e_{\Psi^\e}(x) = \psi^\e(T,x) &= \mathbf{E}_{\mu_0} \left[ \exp\left\{-\frac{1}{\e} \int_0^T V(T-s,\omega_x^\e(s) )\mathrm{d}s - \frac{1}{\e} f^\e(\omega_x^\e(T))  \right\} \right]. \label{FK-Z-star}
\end{align}
\end{cor}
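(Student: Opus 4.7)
The plan is to recognize this corollary as a direct identification between the defining expressions of the normalizing constants $Z^\e_{\Phi^\e}(x)$, $Z^\e_{\Psi^\e}(x)$ and the Feynman--Kac representations of $\phi^\e(0,x)$, $\psi^\e(T,x)$ that were already derived inside the proof of Lemma \ref{lem-Born}. The work is essentially a change of variables plus a boundary-time evaluation.

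First I would handle $Z^\e_{\Phi^\e}(x)$. By the scaling and shift relation $\mu_x^\e = (x + \sqrt\e\,\cdot)_* \mu_0$ recorded in \eqref{Wiener-scaling-shift}, a push-forward gives
\begin{equation*}
Z^\e_{\Phi^\e}(x) = \E_{\mu_x^\e}\!\left[e^{-\frac{1}{\e}\Phi^\e}\right] = \E_{\mu_0}\!\left[e^{-\frac{1}{\e}\Phi^\e(x+\sqrt\e\,\omega)}\right].
\end{equation*}
Substituting the cost-function form \eqref{phi} of $\Phi^\e$ inside the exponential yields precisely the right-hand side of \eqref{FK-Z}. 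Comparing with formula \eqref{FK} from the proof of Lemma \ref{lem-Born}, this is exactly $\phi^\e(0,x)$; alternatively, one may verify it directly from the PDE \eqref{Hea-equ} by the Feynman--Kac theorem, since the assumptions on $V$ and $g^\e$ are satisfied.

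For $Z^\e_{\Psi^\e}(x)$ I would proceed identically: the same push-forward trick together with the definition of $\Psi^\e$ gives
\begin{equation*}
Z^\e_{\Psi^\e}(x) = \E_{\mu_0}\!\left[\exp\!\left(-\frac{1}{\e}\int_0^T V(T-s,\,x+\sqrt\e\,\omega(s))\,\mathrm ds - \frac{1}{\e} f^\e(x+\sqrt\e\,\omega(T))\right)\right],
\end{equation*}
which is the right-hand side of \eqref{FK-Z-star}. To identify this with $\psi^\e(T,x)$, I would specialize the closed-form expression for $\psi^\e(T-t,x)$ derived in the proof of Lemma \ref{lem-Born} (obtained via the time-reversal symmetry of the reversible Brownian motion under $\mu^1$) to $t=0$, which reproduces exactly the formula above. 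This confirms $\psi^\e(T,x) = Z^\e_{\Psi^\e}(x)$.

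There is no genuine obstacle here: the entire content is bookkeeping between a path-integral over $\mu_x^\e$ and the Feynman--Kac semigroup applied at the boundary times $t=0$ and $t=T$. The mildly nontrivial point is keeping track of the time reversal in the second identity, since $\Psi^\e$ contains $V(T-t,\omega(t))$ rather than $V(t,\omega(t))$; this is precisely what forces the identification with $\psi^\e(T,x)$ (the terminal-time value of the forward heat equation \eqref{Hea-equ-2}) rather than with $\phi^\e(0,x)$.
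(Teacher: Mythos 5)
Your proposal is correct and follows essentially the same route as the paper: rewrite $Z^\e_{\Phi^\e}(x)=\E_{\mu_x^\e}[e^{-\Phi^\e/\e}]$ as an expectation under $\mu_0$ of $e^{-\Phi^\e(x+\sqrt\e\,\omega)/\e}$ via the shift–scaling identity, insert the cost-function form of $\Phi^\e$, and match with the Feynman--Kac formula \eqref{FK} at $t=0$ (respectively the expression for $\psi^\e(T-t,x)$ at $t=0$). The paper phrases the change of measure through the conditional measure $\mu^\e(\cdot\,|\,\omega(0)=x)$ and the scaling of $\mu^1$, but this is the same computation as your direct push-forward.
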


\begin{proof}
Since $\mu_x^\e$ is the transition measure of $\mu^\e$ as in \eqref{mu-tilde-total}, we derive in the same way as \eqref{RN-marginal} that
\begin{equation*}
  \begin{split}
    Z^\e_{\Phi^\e}(x) &= \E_{\mu^\e(\cdot | \omega(0)=x)} \left[ \exp \left( -\frac{1}{\e}\Phi^\e (\omega) \right) \right] = \E_{\mu^1(\cdot | \sqrt\e \omega(0)=x)} \left[ \exp \left( -\frac{1}{\e}\Phi^\e (\sqrt\e \omega) \right) \right] \\
    &= \E_{\mu_0} \left[ \exp \left( -\frac{1}{\e}\Phi^\e (x+ \sqrt\e \omega) \right) \right].
  \end{split}
\end{equation*}
By plugging the expression \eqref{phi} of $\Phi^\e$ into the above equation and using the Feynman--Kac representation \eqref{FK} of $\phi^\e$, we obtain \eqref{FK-Z}. Equation \eqref{FK-Z-star} follows in a similar fashion.
\end{proof}

Combining equations \eqref{RNnu1}, \eqref{FK-Z} and \eqref{FK-Z-star}, when both $V$ and $\phi^\e$ are time-independent, we have
\begin{equation}\label{eqn-19}
  \begin{split}
    \frac{\mathrm{d} \nu^\e_x}{\mathrm{d} \rev\nu^\e_x} (\omega) &= \frac{\psi^\e(T,x)}{\phi^\e(x)} \frac{\phi^\e(\omega(T))}{\psi^\e(0,\omega(T))} = \frac{\rho^\e(T,x)}{\phi^\e(x)^2} \frac{\phi^\e(\omega(T))^2}{\rho^\e(0,\omega(T))} \\
    &= \frac{\rho^\e(T,x)}{\rho^\e(0,\omega(T))} \exp \left\{ \frac{2}{\e} \left[ g^\e(x) - g^\e(\omega(T)) \right] \right\},
  \end{split}
\end{equation}
where $\rho^\e(t,x) = \frac{\d \nu^\e|_t(x)}{\d x}$ is the time marginal density of $\nu^\e$.



\section{Application I: Onsager--Machlup functional and large deviations}

With the representation \eqref{phi} of $\Phi^\e$, Assumption \ref{asmp-Phi-3} reduces to

\begin{Assu}\label{asmp-Phi-3-1}
There exists a continuous function $g^0: \R^d \to \R$, such that the following functional 
\begin{equation}\label{phi0}
  \Phi^0(\omega) = \int_0^T V(t, \omega(t)) \mathrm dt +g^0(\omega(T)),
\end{equation}
satisfies Assumption \ref{asmp-Phi-3}-(i) and (ii), and for every $x\in\R^d$,
\begin{equation*}
  \lim_{\epsilon \rightarrow 0} \epsilon \log \E_{\mu_x^\e} \left[e^{(g^0(\omega(T)) - g^\e(\omega(T))) / \epsilon} \right] = 0.
\end{equation*}
\end{Assu}

The following corollary is a straightforward consequence of Proposition \ref{tilted-large-deviation} and Theorem \ref{Cor-sde}. 

\begin{cor}
Let Assumptions \ref{asmp-Phi-1}, \ref{asmp-Phi-cost} and \ref{asmp-Phi-3-1} hold. Let $X^\e_x$, $B$, $(\Omega, \mathcal F, \mathbf P, \{\mathcal F_t\}_{t\in[0,T]})$ be a weak solution of the SDE \eqref{SDE-2}, where $S^\e\in C^{1,2} ([0,T]\times\mathbb{R}^d)$ satisfies condition \eqref{Novikov-S} and the 2nd-order HJ equation \eqref{2nd-order HJ}. Then the family $\{X^\e_x: \e>0\}$ satisfies the large deviation principle in $\mathcal{C}^{d,T}_x$, with the rate function $I_{\Phi^0}^x$ of \eqref{rate-func} where $\Phi^0$ admits the representation \eqref{phi0}.
\end{cor}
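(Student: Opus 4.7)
The plan is to chain Theorem \ref{Cor-sde} with Proposition \ref{tilted-large-deviation}, since the hypotheses of the corollary have been tailored precisely so that the statement follows by concatenation of these two ingredients. First I would verify that, for potentials of the cost-function form, Assumption \ref{asmp-Phi-3-1} is nothing but the restatement of Assumption \ref{asmp-Phi-3}. With $\Phi^\e$ and $\Phi^0$ given by \eqref{phi} and \eqref{phi0} respectively, the common running-cost term cancels, leaving
\begin{equation*}
  \Phi^0(\omega) - \Phi^\e(\omega) = g^0(\omega(T)) - g^\e(\omega(T)),
\end{equation*}
so the exponential-integrability condition \eqref{exp-lim-Phi} demanded in Assumption \ref{asmp-Phi-3}-(iii) reduces exactly to the terminal-cost limit listed in Assumption \ref{asmp-Phi-3-1}, while items (i) and (ii) are inherited verbatim.

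Next I would invoke Theorem \ref{Cor-sde}. Under Assumptions \ref{asmp-Phi-1} and \ref{asmp-Phi-cost}, the regularity hypotheses $V\in C^{0,1}_b$, $g^\e\in C^1_b$, and $S^\e \in C_b^{1,3}$, together with the Novikov-type condition \eqref{Novikov-S}, the assumption that $S^\e$ solves the 2nd-order HJ equation \eqref{2nd-order HJ}, including the initial normalization $S^\e(0,x) = \e\log Z^\e_{\Phi^\e}(x)$, forces the law of the weak solution $X^\e_x$ under $\mathbf P$ to coincide with $\nu^\e_x$. In other words, $(X^\e_x)_* \mathbf P = \nu^\e_x$ as Borel measures on $\mathcal{C}^{d,T}_x$ for every $\e>0$.

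Finally I would apply Proposition \ref{tilted-large-deviation}, whose hypotheses have just been verified via the first step, to conclude that the family $\{\nu^\e_x: \e>0\}$ satisfies the large deviation principle in $(\mathcal{C}^{d,T}_x, \mathcal B(\mathcal{C}^{d,T}_x))$ with good rate function $I_{\Phi^0}^x$ as defined in \eqref{rate-func}. Because the laws of $X^\e_x$ and $\nu^\e_x$ are literally equal for each $\e$, the LDP transfers without modification to $\{X^\e_x: \e>0\}$. There is no real obstacle here; the only step requiring any genuine verification is the translation between Assumptions \ref{asmp-Phi-3} and \ref{asmp-Phi-3-1}, which as noted above hinges on the cancellation of the running-cost contribution in $\Phi^0 - \Phi^\e$.
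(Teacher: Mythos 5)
Your proposal is correct and follows exactly the route the paper intends: the paper presents this corollary as a direct consequence of Theorem \ref{Cor-sde} (identifying the law of $X^\e_x$ with $\nu^\e_x$) and Proposition \ref{tilted-large-deviation} (the tilted LDP), and your verification that Assumption \ref{asmp-Phi-3-1} reduces to Assumption \ref{asmp-Phi-3} via the cancellation $\Phi^0-\Phi^\e = g^0(\omega(T))-g^\e(\omega(T))$ is the only detail that needed checking.
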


We will give a more specific large deviation result for the solutions of SDE \eqref{SDE-2}, via the classical Freidlin--Wentzell theory. We first recall the probabilistic representation \eqref{HJB-prob}.
Define $\Phi^\e_{t,x}(\omega)=\int_t^T V(s,x+\omega(s-t)) \mathrm{d}s+ g^\e(x+\omega(T-t))$, $\omega \in \mathcal{C}^{d,T-t}_0$, $\e\ge 0$. Then
\begin{equation*}
  S^\e(t,x) = -\e \log  \mathbf{E}_{\mu^\e_x} \left[ \exp\left(-\frac{1}{\e} \Phi^\e_{t,x} \right) \right], \quad \e>0.
\end{equation*}
As in \eqref{ext-Varadhan}, we apply Varadhan's lemma and obtain the following limit
\begin{align*}
  S^0(t,x) := \lim_{\e\to0}S^\e(t,x) =   \inf_{\gamma \in \mathcal{H}^{d,T-t}_0} \left\{ \Phi^0_{t,x}(\gamma) + \frac{1}{2} \|\gamma\|_{H_0^1}^2 \right\}. 
\end{align*}
Note that $\Phi^\e_{0,x}= \Phi^\e \circ T_x$ and 
\begin{equation*}
  S^0(0,x) =  \inf_{\gamma \in \mathcal{H}^{d,T}_0} \left\{ \Phi^0(x+\gamma) + \frac{1}{2} \|\gamma\|_{H_0^1}^2 \right\} =  \inf_{\gamma\in \mathcal{H}^{d,T}_x} \left\{ \Phi^0(\gamma) + \frac{1}{2} \|\gamma\|_{H_0^1}^2 \right\} =  \inf_{\omega \in \mathcal{C}^{d,T}_x} \operatorname{OM}_{\Phi^0}[\omega].
\end{equation*}
Furthermore, by taking the limit $\e\to 0$ in the 2nd-order HJ equation \eqref{2nd-order HJ}, we see that $S^0$ formally satisfies the following classical Hamilton--Jacobi equation 
\begin{equation}\label{HJ}
  \begin{cases}
    \partial_t S^0(t,y)-\frac{1}{2} |\nabla S^0(t,y)|^2 = -V(t,y), & (t,y)\in (0,T)\times\R^d, \\
    S^0(T,y) =  g^0(y), & y\in\R^d, \\
    S^0(0,x) =  \inf_{\omega \in \mathcal{C}^{d,T}_x}\operatorname{OM}_{\Phi^0}[\omega]. &
  \end{cases}
\end{equation}

Next, we consider the following family of SDEs
\begin{align}\label{SDE-3}
\mathrm dX^{\e,0}_x(t)= -\nabla S^{0}(t,X^{\e,0}_x(t))\mathrm d t + \sqrt\e \mathrm dB(t), \quad X^{\e,0}_x (0) =x,
\end{align}
where $S^0$ satisfies HJ equation \eqref{HJ}. The Freidelin--Wentzell large deviation theory asserts that $\{X^{\e,0}_x: \e>0\}$ satisfy the large deviation principle with the good rate function 
\begin{align*}
  I_0^x(\omega) = 
  \begin{cases}
    \frac{1}{2}\int_0^T|\dot \omega(t) + \nabla S^{0}(t,\omega(t))|^2 \mathrm{d}t, & \omega\in \mathcal{H}^{d,T}_x, \\
    \infty, & \omega\in \mathcal{C}^{d,T}_x \setminus \mathcal{H}^{d,T}_x.
  \end{cases}
\end{align*}
Using Hamilton--Jacobi equation \eqref{HJ}, we obtain that for $\gamma\in \mathcal{H}^{d,T}_x$,
\begin{align*}
I_0^x(\gamma) &  =\int_0^T\left(\frac{1}{2}|\dot \gamma(t)|^2 + \nabla S^{0}(t,\gamma(t))\cdot \dot \gamma(t) + \frac{1}{2} |\nabla S^{0}(t,\gamma(t))|^2\right)\mathrm{d}t \\
&=\int_0^T\left(\frac{1}{2}|\dot \gamma(t)|^2 - \frac{\pt}{\pt t} S^{0}(t,\gamma(t)) + \frac{1}{2} |\nabla S^{0}(t,\gamma(t))|^2\right)\mathrm{d}t + S^{0}(T,\gamma(T)) - S^{0}(0,x) \\
&=\int_0^T\left(\frac{1}{2}|\dot \gamma(t)|^2 + V(t,\gamma(t)) \right) \mathrm{d}t + g^0(\gamma(T)) - \inf_{\omega\in \mathcal{C}^{d,T}_x} \operatorname{OM}_{\Phi^0}[\omega] \\
&=\operatorname{OM}_{\Phi^0}[\gamma] - \inf_{\omega\in \mathcal{C}^{d,T}_x} \operatorname{OM}_{\Phi^0}[\omega].  
\end{align*}
This means that the rate function $I_0^x$ coincides with $I_{\Phi^0}^x$ in \eqref{rate-func}.

Now, classical large deviation theory \cite[Theorem 4.2.13]{DZ98} tells us that, if the family $\{X^{\e,0}_x: \e>0\}$ is exponentially equivalent to $\{X^\e_x: \e>0\}$ given by \eqref{SDE-2}, as will be shown in the next lemma, then the LDP with the same rate function $I_0^x = I_{\Phi^0}^x$ holds for $\{X^\e_x:\e>0\}$. 

\begin{lem}
Let $X^\e_x$ and $X^{\e,0}_x$ be the unique solution of SDEs \eqref{SDE-2} and \eqref{SDE-3} respectively, where $S^\e$ and $S^0$ satisfy equations \eqref{2nd-order HJ} and \eqref{HJ} respectively. Suppose that 
the family $\{S^\e: 0<\e \ll 1\}$ is uniformly bounded and 
$S^\e$ converges to $S^0$ in $C^1$ norm on any compact set of $\mathbb{R}^d$, as $\e\to0$. Then the two families $\{X^\e_x: \e>0\}$ and $\{X^{\e,0}_x: \e>0\}$ are exponentially equivalent, that is,
\begin{align*}
\limsup_{\e\to0} \e\log \mathbf P \left( \left\| X^\e_x-X^{\e,0}_x \right\|_T > \delta \right) = -\infty.
\end{align*}
\end{lem}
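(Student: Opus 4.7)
The plan is to exploit the fact that $X^\e_x$ and $X^{\e,0}_x$ are driven by the same Brownian motion $B$ and start from the same point $x$, so the noise cancels in the difference $Y^\e := X^\e_x - X^{\e,0}_x$. This yields the pathwise integral identity
$$Y^\e(t) = \int_0^t \left[ \nabla S^\e(s, X^\e_x(s)) - \nabla S^0(s, X^{\e,0}_x(s)) \right] \d s, \quad Y^\e(0) = 0.$$
The strategy is then to control $Y^\e$ deterministically on a high-probability event and to bound the complementary event exponentially.

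First I would localize. Fix $R > 0$ and set $A_R^\e := \{\|X^\e_x\|_T \leq R\} \cap \{\|X^{\e,0}_x\|_T \leq R\}$. The hypothesis that $\{S^\e\}$ is uniformly bounded, together with $C^1$-convergence on compact sets, yields a constant $K_R := \sup_{\e \ll 1} \|\nabla S^\e\|_{L^\infty([0,T] \times \overline{B_{2R}})} < \infty$, and also $\|\nabla S^0\|_{L^\infty([0,T] \times \overline{B_{2R}})} < \infty$ by the smoothness of $S^0$. A standard exit-time argument (replacing the drift by a cutoff version outside $B_{2R}$ and using the Gaussian tail of $\sqrt\e \|B\|_T$ via Borell--TIS or the reflection principle) then gives
$$\limsup_{\e \to 0} \e \log \mathbf P\left((A_R^\e)^c\right) \leq -c(R),$$
with $c(R) \to \infty$ as $R \to \infty$. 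This is the standard exponential-tightness-type estimate.

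On the event $A_R^\e$, both trajectories remain in $\overline{B_R}$, so the triangle inequality gives
$$\left| \nabla S^\e(s, X^\e_x(s)) - \nabla S^0(s, X^{\e,0}_x(s)) \right| \leq \eta_R(\e) + L_R |Y^\e(s)|,$$
where $\eta_R(\e) := \|\nabla S^\e - \nabla S^0\|_{L^\infty([0,T] \times \overline{B_R})} \to 0$ by the $C^1$-convergence hypothesis, and $L_R$ is a Lipschitz constant for $\nabla S^0$ on $\overline{B_R}$ (finite by the smoothness of $S^0$). Grönwall's inequality then yields $\|Y^\e\|_T \leq T\, \eta_R(\e)\, e^{L_R T}$ on $A_R^\e$. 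For $\e$ sufficiently small this right-hand side drops below $\delta$, so $\{\|Y^\e\|_T > \delta\} \subset (A_R^\e)^c$, whence
$$\limsup_{\e \to 0} \e \log \mathbf P(\|Y^\e\|_T > \delta) \leq -c(R),$$
and sending $R \to \infty$ closes the argument.

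The main obstacle will be the localization step: the hypotheses give uniform boundedness of $S^\e$ and pointwise $C^1$-convergence on compact sets, but not uniform boundedness of $\nabla S^\e$ on all of $\R^d$. To justify the exit-time estimate rigorously, one combines the local uniform bound on $\nabla S^\e$ over each compact set with a truncation of the drift outside $B_{2R}$, and controls via the Gaussian tail of $\sqrt\e B$ the probability that the truncation is activated. Once this is in place, the remainder is a clean deterministic Grönwall estimate, made possible by the cancellation of the noise in $Y^\e$.
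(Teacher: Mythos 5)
Your proof is correct and follows the same skeleton as the paper's: cancel the common noise, run a pathwise Gr\"onwall estimate on a localization event, and show the complement of that event is super-exponentially small. The one structural difference is where the localization falls. The paper splits the drift difference as $|\nabla S^{\e}(s,X^\e_x)-\nabla S^\e(s,X^{\e,0}_x)|+|\nabla S^\e(s,X^{\e,0}_x)-\nabla S^{0}(s,X^{\e,0}_x)|$, so the convergence term is evaluated only along $X^{\e,0}_x$; it then suffices to confine $X^{\e,0}_x$ to a compact set, which comes for free from the exponential tightness of the Freidlin--Wentzell family $\{X^{\e,0}_x\}$, and no control on the range of $X^\e_x$ is needed (the Lipschitz term is handled by the uniform $C^2$ bound on $S^\e$). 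You split the other way, which forces you to confine \emph{both} trajectories and hence to prove $\limsup_\e \e\log\mathbf P(\|X^\e_x\|_T>R)\le -c(R)$ with $c(R)\to\infty$ for the very family whose large-deviation behavior is at issue; your cutoff/Gaussian-tail argument does this, but note it genuinely needs the drift bound $K_R=\sup_\e\|\nabla S^\e\|_{L^\infty(B_{2R})}$ to grow sublinearly in $R$ (it is harmless under the paper's reading of ``uniformly bounded'' as a uniform $C^2$ bound, which makes $K_R$ independent of $R$). The paper's split is the more economical one precisely because it avoids this extra exit-time estimate; your version, once the localization is justified, asks slightly less regularity of $S^\e$ (only a local Lipschitz bound on $\nabla S^0$ rather than a uniform global $C^2$ bound on $S^\e$).
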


\begin{proof}
Fix $t\in[0,T]$ and let $e(t):= |X^\e_x(t)-X^{\e,0}_x(t)|$. Since the family $\{X^{\e,0}_x\}$ satisfies an LDP, it is exponentially tight, i.e., for any $\alpha<\infty$, there exists a compact set $K_{\alpha}\subset\mathbb{R}^d$, such that
\begin{align}\label{exp-tight}
\limsup_{\e\to0} \e\log \mathbf{P}(X^{\e,0}_x\in K_\alpha^c) < -\alpha.    
\end{align} 
On the event $\{ X^{\e,0}_x\in K_\alpha \}$, we have
\begin{align*}
e(t)&= \left|\int_0^t\nabla S^{\e}(s,X^\e_x(s))\mathrm{d}s-\int_0^t\nabla S^0(s,X^{\e,0}_x(s))\mathrm{d}s\right| \\
&\leq \int_0^t\left|\nabla S^{\e}(s,X^\e_x(s))-\nabla S^\e(s,X^{\e,0}_x(s))\right|\mathrm{d}s + \int_0^t\left|\nabla S^\e(s,X^{\e,0}_x(s))-\nabla S^{0}(s,X^{\e,0}_x(s))\right|\mathrm{d}s \\
&\leq \|S^\e\|_{C^2 }\int_0^te(s)\mathrm{d}s + \|S^{\e}-S^0\|_{C^1(K_\alpha)} t.
\end{align*}
Then, by Gronwall's lemma, we obtain $e(t)\leq \|S^{\e}-S^0\|_{C^1(K_\alpha)} t\exp\{\|S^\e\|_{C^2}t\} $. Consequently, as $\{S^\e: 0<\e \ll 1\}$ is uniformly bounded,
\begin{align*}
\|X^\e_x-X^{\e,0}_x\|_T \leq \|S^{\e}-S^0\|_{C^1(K_\alpha)} Te^{MT}, \quad \text{on } \{ X^{\e,0}_x\in K_\alpha \},
\end{align*}
for some $M>0$ and all $0<\e \ll 1$. Since $S^\e$ converges to $S^0$ in $C^2$ norm on $K_{\alpha}$, for any $\delta>0$, there exists $\e_0>0$ such that for all $\e\leq\e_0$, $\|S^{\epsilon}-S^0\|_{C^1(K_\alpha)} <\frac{\delta}{Te^{MT}}$. 
Thus, for all $\e\leq\e_0$,
\begin{equation*}
\begin{aligned}
  \mathbf P \left( \left\|X^\e_x-X^{\e,0}_x\right\|_T > \delta \right) &= \mathbf P \left( \left\|X^\e_x-X^{\e,0}_x\right\|_T > \delta; X^{\e,0}_x\in K_\alpha\right) +  \mathbf P \left( \left\|X^\e_x-X^{\e,0}_x\right\|_T > \delta;X^{\e,0}_x\in K_\alpha^c \right) \\
  &\le \mathbf P \left( \|S^{\e}-S^0\|_{C^1(K_\alpha)} Te^{MT} > \delta \right) +  \mathbf P \left( X^{\e,0}_x\in K_\alpha^c \right) \\
  &= \mathbf P \left( X^{\e,0}_x\in K_\alpha^c \right).
\end{aligned}
\end{equation*}
Taking $\limsup_{\e\to0} \e\log$ to both sides and using \eqref{exp-tight}, we get
\begin{equation*}
  \limsup_{\e\to0} \e\log \mathbf P \left( \left\|X^\e_x-X^{\e,0}_x\right\|_T > \delta \right) < -\alpha.
\end{equation*}
The result follows from the arbitrariness of $\alpha$.
\end{proof}

\section{Application II: Entropy minimization problems}

The inference principle of minimizing the KL divergence $D_{\mathrm{KL}} \left(\tilde\nu^\e \| \nu^\e\right)$, due to Kullback, is known as the principle of minimum discrimination information. A closely related quantity, the relative entropy, is usually defined as the negative the Kullback--Leibler divergence. The principle of maximum entropy states that the probability distribution which best represents our current state of knowledge about a system is the one with largest entropy.

\subsection{Equivalence with stochastic optimal control problems}

The following is a straightforward corollary of Lemma \ref{KL-nu-tilde}.

\begin{cor}\label{equi-opm}
Let Assumption \ref{asmp-Phi-1} hold. Let $\widetilde{\mathcal P}$ be a subset of $\mathcal P$. Let $\nu^\e$ be the measure defined in \eqref{nu-mu}. Then the entropy minimization problem
\begin{equation}\label{entropy-min-prob}
  \inf_{\tilde\nu^\e \in \widetilde{\mathcal P}} \e D_{\mathrm{KL}} \left(\tilde\nu^\e \| \nu^\e \right)
\end{equation}
is equivalent to the following stochastic optimal control problem:
\begin{equation}\label{sto-opt-control}
  \inf_{\tilde\nu^\e \in \widetilde{\mathcal P}} \E_{\tilde\nu^\e} \left[ \e \left( \log Z^\e_{\Phi^\e} + \log \frac{\mathrm{d} \tilde\nu^\e|_{t=0}}{\mathrm{d} \nu^\e|_{t=0}} \right) (\omega(0)) + \frac{1}{2} \int_{0}^{T} |b^\e(t, \omega)|^{2} \mathrm dt + \Phi^\e(\omega) \right],
\end{equation}
where $b^\e$ is a progressively measurable process such that the triple $\omega(\cdot)$, $\widetilde B$, $(\mathcal{C}^{d,T}, \mathcal B(\mathcal{C}^{d,T}), \tilde\nu^\e, \{\mathcal B_t(\mathcal{C}^{d,T})\}_{t\in [0,T]})$ is a weak solution of the functional SDE \eqref{control-sde}.
\end{cor}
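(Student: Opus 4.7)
The plan is to reduce the corollary to a direct algebraic manipulation of the identity established in Lemma \ref{KL-nu-tilde}. That lemma already gives the decomposition
$$\e D_{\mathrm{KL}}(\tilde\nu^\e \| \nu^\e) = \e D_{\mathrm{KL}}(\tilde\nu^\e|_{t=0} \| \mu^\e|_{t=0}) + \E_{\tilde\nu^\e}\left[f^\e(\omega(0)) + \tfrac{1}{2}\int_0^T |b^\e(t,\omega)|^2 \mathrm{d}t + \Phi^\e(\omega)\right]$$
(after multiplying by $\e$), where $b^\e$ is the drift realising $\tilde\nu^\e$ as a weak solution of the functional SDE \eqref{control-sde}. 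The running-cost term $\tfrac{1}{2}\int_0^T |b^\e|^2 \mathrm{d}t + \Phi^\e(\omega)$ matching \eqref{sto-opt-control} is already present, so the only work is to rewrite the two initial-data contributions.

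First, I would apply the Radon--Nikodym chain rule $\frac{\mathrm{d}\tilde\nu^\e|_{t=0}}{\mathrm{d}\mu^\e|_{t=0}} = \frac{\mathrm{d}\tilde\nu^\e|_{t=0}}{\mathrm{d}\nu^\e|_{t=0}} \cdot \frac{\mathrm{d}\nu^\e|_{t=0}}{\mathrm{d}\mu^\e|_{t=0}}$, valid because any $\tilde\nu^\e \in \widetilde{\mathcal P}$ with $\e D_{\mathrm{KL}}(\tilde\nu^\e\|\nu^\e) < \infty$ satisfies $\tilde\nu^\e \ll \nu^\e \ll \mu^\e$. Taking logarithms and integrating against $\tilde\nu^\e$ splits
$$\e D_{\mathrm{KL}}(\tilde\nu^\e|_{t=0} \| \mu^\e|_{t=0}) = \e \E_{\tilde\nu^\e}\!\left[\log \frac{\mathrm{d}\tilde\nu^\e|_{t=0}}{\mathrm{d}\nu^\e|_{t=0}}(\omega(0))\right] + \e \E_{\tilde\nu^\e}\!\left[\log \frac{\mathrm{d}\nu^\e|_{t=0}}{\mathrm{d}\mu^\e|_{t=0}}(\omega(0))\right].$$
Next, I would invoke the definition \eqref{f-def} of $f^\e$, which gives
$$\E_{\tilde\nu^\e}\!\left[f^\e(\omega(0))\right] = \e\,\E_{\tilde\nu^\e}\!\left[\log Z^\e_{\Phi^\e}(\omega(0))\right] - \e\,\E_{\tilde\nu^\e}\!\left[\log \frac{\mathrm{d}\nu^\e|_{t=0}}{\mathrm{d}\mu^\e|_{t=0}}(\omega(0))\right].$$
Substituting both lines back into the rescaled identity, the $\log \frac{\mathrm{d}\nu^\e|_{t=0}}{\mathrm{d}\mu^\e|_{t=0}}$ contributions cancel exactly, leaving the pathwise integrand $\e(\log Z^\e_{\Phi^\e} + \log \frac{\mathrm{d}\tilde\nu^\e|_{t=0}}{\mathrm{d}\nu^\e|_{t=0}})(\omega(0)) + \tfrac{1}{2}\int_0^T |b^\e|^2 \mathrm{d}t + \Phi^\e(\omega)$, which is precisely the objective in \eqref{sto-opt-control}.

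Finally, since this pointwise equality of functionals of $\tilde\nu^\e$ holds for every admissible $\tilde\nu^\e \in \widetilde{\mathcal P}$, taking the infimum over $\widetilde{\mathcal P}$ on both sides yields the claim. There is no genuine obstacle: the only care needed is bookkeeping, namely that the drift $b^\e$ is the one canonically associated with $\tilde\nu^\e$ via \eqref{control-sde} on both sides (so that the correspondence $\tilde\nu^\e \leftrightarrow b^\e$ is used consistently), and that the absolute-continuity hypotheses of Lemma \ref{KL-nu-tilde} transfer to $\widetilde{\mathcal P}$ modulo the cases of infinite entropy, where both sides of the identity are $+\infty$ and the equality of infima is trivial. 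This matches the paper's description of the result as a straightforward corollary.
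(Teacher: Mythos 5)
Your proof is correct and takes essentially the same route the paper intends: the paper simply declares the corollary a straightforward consequence of Lemma \ref{KL-nu-tilde}, and your computation — multiplying the lemma's identity by $\e$, splitting $\e D_{\mathrm{KL}}(\tilde\nu^\e|_{t=0}\,\|\,\mu^\e|_{t=0})$ via the Radon--Nikodym chain rule, and cancelling the $\log\frac{\mathrm{d}\nu^\e|_{t=0}}{\mathrm{d}\mu^\e|_{t=0}}$ terms against the definition \eqref{f-def} of $f^\e$ — is exactly the bookkeeping that justifies it. Your handling of the infinite-entropy case and the consistent association $\tilde\nu^\e \leftrightarrow b^\e$ via \eqref{control-sde} are the right points to flag.
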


The set $\widetilde{\mathcal P}$ plays the role of constraints in the entropy minimization problem \eqref{entropy-min-prob}. Since the KL divergence $D_{\mathrm{KL}} \left(\tilde\nu^\e \| \nu^\e \right)$ is disintegrable as in Lemma \ref{KL-nu-tilde}, if the constraints implied by set $\widetilde{\mathcal P}$ is also disintegrable into initial distributions and transition probabilities, we can first separate \eqref{entropy-min-prob} into two minimization problems: one is to minimize the KL divergence of initial distributions $D_{\mathrm{KL}} \left(\tilde\nu^\e|_{t=0} \| \nu^\e|_{t=0} \right)$, the other is to minimize that of transition probabilities $D_{\mathrm{KL}} \left(\tilde\nu^\e_x \| \nu^\e_x \right)$.

The stochastic optimal control problem \eqref{sto-opt-control} can be reformulated into a more familiar form. That the triple $W$, $\widetilde B$, $(\mathcal{C}^{d,T}, \mathcal B(\mathcal{C}^{d,T}), \tilde\nu^\e, \{\mathcal B_t(\mathcal{C}^{d,T})\}_{t\in [0,T]})$ is a weak solution of the functional SDE \eqref{control-sde} is equivalent to saying that $\tilde\nu^\e$ can be realized as the law of $X^\e$ where the triple $X^\e$, $B$, $(\Omega, \mathcal F, \mathbf P, \{\mathcal F_t\}_{t\in[0,T]})$ is a weak solution of the functional SDE
\begin{equation}\label{SDE-opt-cont}
  \mathrm dX^\e(t)= b^\e(t,X^\e)\mathrm d t + \sqrt\e \mathrm dB(t), \quad X^\e (0) \sim \tilde\nu^\e|_{t=0}.
\end{equation}
The stochastic optimal control problem \eqref{sto-opt-control} now turns into
\begin{equation}\label{SDE-opt-cont-min}
  \inf_{\operatorname{Law} X^\e \in \widetilde{\mathcal P}} \E_{\P} \left[ \e \left( \log Z^\e_{\Phi^\e} + \log \frac{\mathrm{d} \tilde\nu^\e|_{t=0}}{\mathrm{d} \nu^\e|_{t=0}} \right) (X^\e(0)) + \frac{1}{2} \int_{0}^{T} |b^\e(t, X^\e)|^{2} \mathrm dt + \Phi^\e(X^\e) \right].
\end{equation}
Such reformulation offers a door to make use of our SDE correspondence results of Section \ref{SDE correspondence}.

\begin{rem}[Information projection: Cameron--Martin constraints]
\normalfont
Denote by $\mathcal{P}_x^\e \subset \mathcal{P}$ the set of all shift measures that are absolutely continuous with respect to $\mu_x^\e$, i.e.,
\begin{equation*}
  \mathcal{P}_x^\e := \left\{ (T_\gamma)_*\mu_x^\e: \gamma \in \mathcal{H}^{d,T}_0 \right\}.
\end{equation*}
Take $\widetilde{\mathcal P} = \mathcal{P}_x^\e$ in Corollary \ref{equi-opm}. For a measure $\tilde\nu^\e_x \in \mathcal{P}_x^\e$, its realization SDE \eqref{control-sde} must have a deterministic drift, more precisely, $b^\e(t, \omega) = \dot \gamma(t)$ for some $\gamma \in \mathcal{H}^{d,T}_0$.
We then see that the entropy minimization problem
\begin{equation*}
  \inf_{\tilde\nu^\e_x \in \mathcal{P}_x^\e} \e D_{\mathrm{KL}} \left(\tilde\nu^\e_x \| \nu^\e_x \right)
\end{equation*}
is equivalent to the following stochastic optimal control problem:
\begin{equation*}
  \inf_{\tilde\nu^\e_x \in \mathcal{P}_x^\e} \E_{\tilde\nu^\e_x} \left[ \frac{1}{2} \int_{0}^{T} |\dot \gamma(t)|^{2} \mathrm dt + \Phi^\e(\omega) \right] = \inf_{\gamma \in \mathcal{H}^{d,T}_0} \frac{1}{2} \int_{0}^{T} |\dot \gamma(t)|^{2} \mathrm dt + \E_{\mu_x^\e} \left[ \Phi^\e(\omega+\gamma) \right] = \inf_{\gamma \in \mathcal{H}^{d,T}_0} \mathrm{OM}_{\widetilde\Phi_x^\e}[\gamma],
\end{equation*}
where $\widetilde\Phi_x^\e(z) := \E_{\mu_x^\e} \left[ \Phi^\e(\omega+z) \right], z\in \mathcal{C}^{d,T}$.
This recovers the portmanteau theorem in \cite{selk2021information}. In other words, Corollary \ref{equi-opm} generalizes the result of \cite{selk2021information} from path-independent shifts to path-dependent shifts.
\end{rem}

\subsection{Finite time horizon problems: Fixed initial distributions}

Let $\rho_0$ be a probability measure on $\R^d$ and assume that the initial time-marginal measure of $\nu^\e$ is given by $\nu^\e|_{t=0} = \rho_0$. One can construct the measure $\nu^\e$ from the initial measure $\rho_0$ and the transition probabilities $\{\nu^\e_x: x\in\R^d\}$ in \eqref{mu-tilde}, as in \eqref{nu-mu}.

We denote by $\mathcal P_{\rho_0}$ the set of all probability measures on $(\mathcal{C}^{d,T},\mathcal B(\mathcal{C}^{d,T}))$, with initial distribution $\rho_0$. We consider the simplest and trivial case of the entropy minimization problem \eqref{entropy-min-prob}, that is, to take $\widetilde{\mathcal P}$ as $\mathcal P_{\rho_0}$ in Corollary \ref{equi-opm}. It is trivial because the only condition $\tilde\nu^\e = \nu^\e$ that vanishes the entropy $D_{\mathrm{KL}} \left(\tilde\nu^\e \| \nu^\e \right) = 0$ can be achieved. Thus, $\nu^\e$ is the unique minimizer of the stochastic optimal control problem \eqref{sto-opt-control}. Since the initial distribution is fixed, the $\omega(0)$ term in \eqref{sto-opt-control} does not affect the minimization. Thus, 
\begin{equation}\label{opt-cont-argmin}
  \nu^\e = \arg\min_{\tilde\nu^\e \in \mathcal P_{\rho_0}} \E_{\tilde\nu^\e} \left( \frac{1}{2} \int_{0}^{T} |b^\e(t, \omega)|^{2} \mathrm dt + \Phi^\e(\omega) \right).
\end{equation}
That is, the probability measure $\nu^\e$ constructed from $\rho_0$ and $\{\nu^\e_x: x\in\R^d\}$ in \eqref{mu-tilde} minimizes the stochastic functional in \eqref{opt-cont-argmin}, over all probability measures on $(\mathcal{C}^{d,T},\mathcal B(\mathcal{C}^{d,T}))$ with initial distribution $\rho_0$.

Using the reformulation \eqref{SDE-opt-cont-min}--\eqref{SDE-opt-cont} and Corollary \ref{Cor-sde-initial}, we recover the following result of optimal control theory \cite[Theorem 3.1 and Corollary 3.1 in Chapter IV]{fleming2006controlled}.

\begin{prop}\label{sol-control}
Suppose $V\in C^{0,2}_b([0,T]\times\mathbb{R}^d)$ and $g^\e\in C^1_b(\R^d)$. 
Given a probability measure $\tilde\nu^\e|_{t=0} = \rho_0$ on $\R^d$, consider the stochastic optimal control problem of minimizing
\begin{equation*}
  J(b^\e) := \E_\P \left[ \int_{0}^{T} \left( \frac{1}{2} |b^\e(t, X^\e)|^{2} + V(t, X^\e(t)) \right) \mathrm dt + g^\e(X^\e(T)) \right],
\end{equation*}
where the triple $X^\e$, $B$, $(\Omega, \mathcal F, \mathbf P, \{\mathcal F_t\}_{t\in[0,T]})$ is a weak solution of SDE \eqref{SDE-opt-cont}.
Let $S^\e\in C^{1,3}_b([0,T]\times \mathbb{R}^d)$ be the unique classical solution of the second-order HJ equation \eqref{2nd-order HJ-full} given in Lemma \ref{well-posedness-2nd-order HJ}.
Then 
\begin{equation*}
\min J(b^\e) = J(-\nabla S^\e) = \E_\P \left[ \int_{0}^{T} \left( \frac{1}{2} |\nabla S^\e(t, X^\e(t))|^{2} + V(t, X^\e(t)) \right) \mathrm dt + g^\e(X^\e(T)) \right].
\end{equation*}
\end{prop}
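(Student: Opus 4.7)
The plan is to invoke a classical verification argument driven by It\^o's formula applied to $S^\e(t,X^\e(t))$ along the controlled dynamics \eqref{SDE-opt-cont}. Since $S^\e \in C_b^{1,3}([0,T] \times \R^d)$, It\^o's formula gives
\begin{equation*}
  \d S^\e(t,X^\e(t)) = \left(\pt_t S^\e + b^\e \cdot \nabla S^\e + \tfrac{\e}{2}\Delta S^\e\right)(t,X^\e(t))\,\d t + \sqrt{\e}\, \nabla S^\e(t,X^\e(t)) \cdot \d B(t).
\end{equation*}
The boundedness of $\nabla S^\e$ makes the stochastic integral a genuine martingale, so its expectation vanishes when integrated over $[0,T]$.

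Next, I substitute the 2nd-order HJ equation \eqref{2nd-order HJ-full}, i.e. $\pt_t S^\e + \tfrac{\e}{2}\Delta S^\e = V - \tfrac{1}{2}|\nabla S^\e|^2$, and the terminal condition $S^\e(T,\cdot)=-g^\e$ into the integrated identity. Taking expectation yields
\begin{equation*}
  -\E_\P[g^\e(X^\e(T))] - \E_\P[S^\e(0,X^\e(0))] = \E_\P \int_0^T \left(V(t,X^\e(t)) - \tfrac{1}{2}|\nabla S^\e(t,X^\e(t))|^2 + b^\e(t,X^\e) \cdot \nabla S^\e(t,X^\e(t))\right) \d t.
\end{equation*}
Solving for $\E_\P[g^\e(X^\e(T))]$ and plugging back into $J(b^\e)$ produces the square completion
\begin{equation*}
  J(b^\e) = -\E_\P[S^\e(0,X^\e(0))] + \tfrac{1}{2}\E_\P \int_0^T \left| b^\e(t,X^\e) - \nabla S^\e(t,X^\e(t)) \right|^2 \d t.
\end{equation*}
Because the initial distribution $\tilde\nu^\e|_{t=0} = \rho_0$ is fixed independently of $b^\e$, the first term on the right-hand side is a constant equal to $-\int_{\R^d} S^\e(0,x)\,\rho_0(\d x)$. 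The nonnegative quadratic term is then minimized (and equals zero) precisely when $b^\e(t,X^\e) = \nabla S^\e(t,X^\e(t))$, giving the desired identity $\min J(b^\e) = J(\nabla S^\e)$.

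The main obstacle is ensuring that the manipulation above is legitimate across the whole class of admissible controls. Specifically, for $b^\e = \nabla S^\e$ the candidate optimizer, existence of a weak solution and the Novikov condition \eqref{Novikov-S} are guaranteed by $S^\e \in C_b^{1,3}$, so this SDE is exactly the one covered by Corollary \ref{Cor-sde-initial}, and its law is $\nu^\e$. For a generic admissible drift $b^\e$ satisfying \eqref{Novikov-3}, the possibly path-dependent nature of $b^\e$ does not obstruct It\^o's formula, and the martingale property of the stochastic integral still follows from the boundedness of $\nabla S^\e$; only mild integrability on $b^\e$ is required to make each term in the Fubini-type exchange finite, which is subsumed by the finiteness of $J(b^\e)$ itself (otherwise there is nothing to prove). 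A minor cleanup step is to remark that the value $-\int_{\R^d} S^\e(0,x)\,\rho_0(\d x)$ is independent of the time-dependent function that can be freely added to $S^\e$, so the Cole--Hopf/Feynman--Kac normalization $S^\e(0,x) = \e\log Z^\e_{\Phi^\e}(x)$ pins down the numerical value of $\min J(b^\e)$.
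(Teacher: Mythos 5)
Your verification argument is correct: the It\^o expansion of $S^\e(t,X^\e(t))$, the substitution of \eqref{2nd-order HJ-full}, and the completion of the square
\begin{equation*}
  J(b^\e) = -\E_\P\big[S^\e(0,X^\e(0))\big] + \tfrac{1}{2}\,\E_\P \int_0^T \big| b^\e(t,X^\e) - \nabla S^\e(t,X^\e(t)) \big|^2 \,\d t
\end{equation*}
all go through, the martingale term is killed by the boundedness of $\nabla S^\e$, and the integrability caveats you flag are handled correctly (finiteness of $J(b^\e)$ together with boundedness of $V$ and $g^\e$ forces $\E_\P\int_0^T|b^\e|^2\,\d t<\infty$, which controls the cross term). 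This is, however, a genuinely different route from the paper's. The paper does not run a verification argument at all: it takes $\widetilde{\mathcal P}=\mathcal P_{\rho_0}$ in Corollary \ref{equi-opm}, observes that the entropy minimization problem is trivially solved by $\tilde\nu^\e=\nu^\e$ (the unique zero of the Kullback--Leibler divergence), so that $J(b^\e)-\min J = \e D_{\mathrm{KL}}(\tilde\nu^\e\|\nu^\e)$, and then invokes Corollary \ref{Cor-sde-initial} to identify the drift realizing $\nu^\e$ as $\nabla S^\e$; the proposition is presented as a corollary of that chain, with the classical verification theorem of Fleming--Soner cited rather than reproved. The two arguments are dual faces of the same identity --- by Girsanov, your quadratic remainder $\tfrac{1}{2}\E_\P\int_0^T|b^\e-\nabla S^\e|^2\,\d t$ is exactly $\e D_{\mathrm{KL}}(\tilde\nu^\e\|\nu^\e)$ when the initial laws agree --- but yours is self-contained at the PDE level and yields the explicit value $\min J = -\int_{\R^d} S^\e(0,x)\,\rho_0(\d x) = -\e\int_{\R^d}\log Z^\e_{\Phi^\e}(x)\,\rho_0(\d x)$ directly, whereas the paper's buys consistency with its measure-theoretic framework and reuses its main correspondence theorem instead of re-deriving dynamic programming. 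Your closing remark that the additive time-dependent freedom in $S^\e$ is pinned down by the Feynman--Kac normalization $S^\e(0,x)=\e\log Z^\e_{\Phi^\e}(x)$ is a correct and worthwhile observation.
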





\subsection{Schr\"odinger's problem: Fixed initial and terminal distributions}

Let $\nu^\e$ be constructed with some initial measure.
Given two probability measures $\tilde\rho_0(\d x) = \tilde\rho_0(x) \d x$ and $\tilde\rho_T(\d x) = \tilde\rho_T(x) \d x$ on $\R^d$, we denote by $\mathcal P_{\tilde\rho_0, \tilde\rho_T}$ the set of all probability measures on $(\mathcal{C}^{d,T},\mathcal B(\mathcal{C}^{d,T}))$, with initial time-marginal distributions $\tilde\rho_0$ and terminal $\tilde\rho_T$.
Take $\widetilde{\mathcal P}$ as $\mathcal P_{\tilde\rho_0, \tilde\rho_T}$ in Corollary \ref{equi-opm}. The entropy minimization problem
\begin{equation}\label{Sch-prob}
  \min_{\tilde\nu^\e \in \mathcal P_{\tilde\rho_0, \tilde\rho_T}} \e D_{\mathrm{KL}} \left(\tilde\nu^\e \| \nu^\e \right)
\end{equation}
has been called Schr\"odinger's problem \cite{leonard2012schrodinger,leonard2014survey}.
As the $\omega(0)$ and $\omega(T)$ terms do not affect the minimization, \eqref{sto-opt-control} reduces to
\begin{equation}\label{Sch-prob-OT}
  \min_{\tilde\nu^\e \in \mathcal P_{\tilde\rho_0, \tilde\rho_T}} \E_{\tilde\nu^\e} \left[ \int_{0}^{T} \left( \frac{1}{2} |b^\e(t, \omega)|^{2} + V(t, \omega(t)) \right) \mathrm dt \right].
\end{equation}
This is the stochastic optimal transport formulation of Schr\"odinger's problem \eqref{Sch-prob}, cf. \cite{Schrodinger1932,mikami2004monge,mikami2021stochastic,leonard2014survey}.

To solve the stochastic optimal transport problem \eqref{Sch-prob-OT}, we compare it with \eqref{opt-cont-argmin} and notice that \eqref{Sch-prob-OT} does not rely on the terminal cost function $g^\e$ but has one more constraint that $\tilde\nu^\e|_{t=T} = \tilde\rho_T$. We can choose a terminal cost such that the minimizer of \eqref{opt-cont-argmin} fulfills the additional constraint of terminal distribution $\tilde\rho_T$. In this way, we can transform the stochastic optimal transport problem \eqref{Sch-prob-OT} into an optimal control problem.

More precisely, let $\nu^\e_*$ be the probability measure constructed from the initial distribution $\tilde\rho_0$ and the transition probabilities $\{\nu^\e_{*x}: x\in\R^d\}$ given by
\begin{equation}\label{transition-Sch-prob}
  \frac {\d\nu^\e_{*x}}{\d\mu^\e_x}(\omega) = \frac{1}{Z^\e_{\Phi^\e_*}(x)} \exp \left\{ -\frac{1}{\e} \Phi^\e_*(\omega) \right\}.
\end{equation}
where
\begin{equation*}
  \Phi^\e_*(\omega) = \int_0^T V(t, \omega(t)) \mathrm dt +g^\e_*(\omega(T)).
\end{equation*}
Then by \eqref{opt-cont-argmin},
\begin{equation*}
  \nu^\e_* = \arg\min_{\tilde\nu^\e \in \mathcal P_{\tilde\rho_0}} \E_{\tilde\nu^\e} \left[ \int_{0}^{T} \left( \frac{1}{2} |b^\e(t, \omega)|^{2} + V(t, \omega(t)) \right) \mathrm dt + g^\e_*(\omega(T)) \right].
\end{equation*}
We thus infer that this $\nu^\e_*$ solves the stochastic optimal transport problem \eqref{Sch-prob-OT} if and only if $\nu^\e_*|_{t=T} = \tilde\rho_T$. 
It follows from \eqref{Sch-sys} that the function $g^\e_*$, together with another function $f^\e_*$, satisfies the following Schr\"odinger's system:
\begin{equation}\label{schrodinger system}
  \left\{
  \begin{aligned}
    e^{-\frac{1}{\e} f^\e_*(x)} \E_{\mu^\e} \left[ \exp \left(-\frac{1}{\e} \int_0^T V(s, \omega(s)) \mathrm ds - \frac{1}{\e} g^\e_*(\omega(0)) \right) \bigg| \omega(0) = x \right] &= \tilde\rho_0(x), \\
    e^{-\frac{1}{\e} g^\e_*(x)} \E_{\mu^\e} \left[ \exp \left(-\frac{1}{\e} \int_0^T V(s, \omega(s)) \mathrm ds - \frac{1}{\e} f^\e_*(\omega(0)) \right) \bigg| \omega(T) = x \right] &= \tilde\rho_T(x).
  \end{aligned}
  \right.
\end{equation}

Therefore, combining with Corollary \ref{Cor-sde-initial}, we obtain

\begin{prop}
Suppose $V\in C^{0,2}_b([0,T]\times\mathbb{R}^d)$. Let $g^\e\in C^1_b(\R^d)$ be a solution of Schr\"{o}dinger's system \eqref{schrodinger system}. 
Let $S^\e_* \in C^{1,3}_b([0,T]\times \mathbb{R}^d)$ be the unique classical solution of the following second-order HJ equation 
\begin{equation}\label{2nd-order HJ-optimal transport}
\begin{cases}
\partial_t S^\e_*(t,x)-\frac{1}{2} |\nabla S^\e_*(t,x)|^2 + \frac{\e}{2}\Delta S^\e_*(t,x) = -V(t,x), & (t,x)\in [0,T)\times\R^d, \\
S^\e_*(T,x) =  g^\e_*(x), & x\in\R^d,
\end{cases}
\end{equation}
given by the following probabilistic representation,
\begin{align*}
  S^\e_*(t,x) = -\e \log  \mathbf{E}_{\mu_0} \left[ \exp\left\{-\frac{1}{\e} \int_t^T V(s,x+ \sqrt{\e} W(s-t)) \mathrm{d}s-\frac{1}{\e} g^\e_*(x+ \sqrt{\e} W(T-t)) \right\} \right].
\end{align*}
Then
\begin{equation*}
  \begin{split}
    \min_{\tilde\nu^\e \in \mathcal P_{\tilde\rho_0, \tilde\rho_T}} \e D_{\mathrm{KL}} \left(\tilde\nu^\e \| \nu^\e\right) &= \E_{\nu^\e_*} \left[ \int_{0}^{T} \left( \frac{1}{2} |\nabla S^\e_*(t, \omega(t))|^{2} + V(t, \omega(t)) \right) \mathrm dt \right] \\
    &\quad + \e \E_{\tilde\rho_0} \left( \log Z^\e_{\Phi^\e} + \log \frac{\mathrm{d} \tilde\rho_0}{\mathrm{d} \nu^\e|_{t=0}} \right) + \E_{\tilde\rho_T} (g^\e_*),
  \end{split}
\end{equation*}
where $\nu^\e_*$ denotes the probability measure with initial distribution $\tilde\rho_0$ and transition probabilities $\{\nu^\e_{*x}: x\in\R^d\}$ in \eqref{transition-Sch-prob}.
\end{prop}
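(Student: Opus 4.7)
The plan is to use Corollary~\ref{equi-opm} to translate the entropy minimization into a stochastic optimal control problem, then relax the terminal-distribution constraint by absorbing it into a cleverly chosen terminal cost so that Proposition~\ref{sol-control} applies directly, and finally use the Schr\"odinger system \eqref{schrodinger system} to show that the unconstrained minimizer automatically satisfies the relaxed constraint.

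First, I would apply Corollary~\ref{equi-opm} with $\widetilde{\mathcal P} = \mathcal P_{\tilde\rho_0,\tilde\rho_T}$. On this admissible set both $\tilde\nu^\e|_{t=0} = \tilde\rho_0$ and $\tilde\nu^\e|_{t=T} = \tilde\rho_T$ are fixed, so the $\omega(0)$-dependent prefactor $\e(\log Z^\e_{\Phi^\e} + \log\frac{\d\tilde\rho_0}{\d\nu^\e|_{t=0}})(\omega(0))$ and the terminal piece $g^\e_*(\omega(T))$ of $\Phi^\e$ contribute the fixed constants $\e\E_{\tilde\rho_0}(\log Z^\e_{\Phi^\e} + \log\frac{\d\tilde\rho_0}{\d\nu^\e|_{t=0}})$ and $\E_{\tilde\rho_T}(g^\e_*)$. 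After extracting these, the remaining variational problem reads
\begin{equation*}
  \inf_{\tilde\nu^\e \in \mathcal P_{\tilde\rho_0,\tilde\rho_T}} \E_{\tilde\nu^\e}\!\left[\int_0^T \!\left(\tfrac{1}{2}|b^\e(t,\omega)|^2 + V(t,\omega(t))\right)\d t\right].
\end{equation*}
Since $\tilde\nu^\e \mapsto \E_{\tilde\nu^\e}[g^\e_*(\omega(T))] = \E_{\tilde\rho_T}(g^\e_*)$ is constant on $\mathcal P_{\tilde\rho_0,\tilde\rho_T}$, adding $g^\e_*(\omega(T))$ to the integrand (and subtracting $\E_{\tilde\rho_T}(g^\e_*)$ outside) converts the inner functional into precisely the cost of Proposition~\ref{sol-control}. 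Enlarging the feasible set from $\mathcal P_{\tilde\rho_0,\tilde\rho_T}$ to $\mathcal P_{\tilde\rho_0}$ can only lower the infimum, and by Proposition~\ref{sol-control} the unique minimizer over $\mathcal P_{\tilde\rho_0}$ of this enlarged problem is $\nu^\e_*$, with Markovian drift $\nabla S^\e_*$ and value $\E_{\nu^\e_*}\!\big[\int_0^T(\tfrac{1}{2}|\nabla S^\e_*|^2 + V)\d t\big] + \E_{\tilde\rho_T}(g^\e_*)$.

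The decisive step is to close the relaxation by showing that this unconstrained minimizer is already admissible, i.e., $\nu^\e_*|_{t=T} = \tilde\rho_T$. I would invoke the Born-type formula of Lemma~\ref{lem-Born}: the terminal marginal density of $\nu^\e_*$ factors (up to normalization) as $e^{-g^\e_*(y)/\e}\psi^\e_*(T,y)$, where $\psi^\e_*$ solves the forward heat equation with initial data $e^{-f^\e_*/\e}$; the two requirements $\nu^\e_*|_{t=0} = \tilde\rho_0$ and $\nu^\e_*|_{t=T} = \tilde\rho_T$ then translate precisely into the two equations of the Schr\"odinger system \eqref{schrodinger system}, which hold by hypothesis. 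Once admissibility is established, the infima over $\mathcal P_{\tilde\rho_0,\tilde\rho_T}$ and $\mathcal P_{\tilde\rho_0}$ of the modified problem coincide; the $\E_{\tilde\rho_T}(g^\e_*)$ recovered inside the minimal value cancels the $-\E_{\tilde\rho_T}(g^\e_*)$ subtracted outside, and combining the result with the two boundary constants extracted at the first step reproduces the announced formula. The main obstacle I anticipate is the careful normalization in Lemma~\ref{lem-Born}, which is stated under Assumption~\ref{mu-initial} with Lebesgue initial marginal whereas $\nu^\e_*$ is built from the prescribed initial distribution $\tilde\rho_0$; one must correctly absorb $\d\tilde\rho_0/\d x$ into the role of $e^{-f^\e_*/\e}$ so that the two Schr\"odinger equations translate cleanly into the two marginal constraints on $\nu^\e_*$.
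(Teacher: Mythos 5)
Your proposal is correct and follows essentially the same route as the paper: convert the entropy minimization to a stochastic optimal control problem via Corollary~\ref{equi-opm}, extract the boundary terms (which are constant on $\mathcal P_{\tilde\rho_0,\tilde\rho_T}$), relax the terminal-marginal constraint by inserting the terminal cost $g^\e_*$ so that Proposition~\ref{sol-control} (equivalently \eqref{opt-cont-argmin} with Corollary~\ref{Cor-sde-initial}) identifies the unconstrained minimizer as $\nu^\e_*$ with drift $\nabla S^\e_*$, and then use the Schr\"odinger system \eqref{schrodinger system} (derived from the Born-type marginal formula \eqref{Sch-sys}) to confirm $\nu^\e_*|_{t=T}=\tilde\rho_T$, which closes the relaxation. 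Your extra care about the normalization when the initial marginal is $\tilde\rho_0$ rather than Lebesgue is a fair point but does not change the argument.
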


\subsection{Stochastic Euler--Lagrange equation}\label{subsec:SEL}

The underlying geometry and mechanics of Schrodinger's problem, or more generally, stochastic optimal transport, have been developed in the recent work \cite{huang2023second}. 

Observe the analogy between the OM functional \eqref{OM} and the stochastic action functional in the previous subsections, especially the one in \eqref{opt-cont-argmin}: 
\begin{equation*}
  \mathcal S_{\tilde\nu^\e}[\omega] := \E_{\tilde\nu^\e} \left( \frac{1}{2} \int_{0}^{T} |b^\e(t, \omega)|^{2} \mathrm dt + \Phi^\e(\omega) \right).
\end{equation*}
We now introduce a ``stochastic derivative'' for a path $\omega \in \mathcal{C}^{d,T}$, such that when the triple $\omega(\cdot)$, $\widetilde B$, $(\mathcal{C}^{d,T}, \mathcal B(\mathcal{C}^{d,T}), \tilde\nu^\e, \{\mathcal B_t(\mathcal{C}^{d,T})\}_{t\in [0,T]})$ is a weak solution of the functional SDE \eqref{control-sde}, the stochastic derivative of the path $\omega(\cdot)$ at time $t$ equals to $b^\e(t, \omega)$. Thus, the stochastic action functional in \eqref{opt-cont-argmin} can be regarded as a stochastic counterpart of the OM functional \eqref{OM}.

More precisely, we define for $\omega \in \mathcal{C}^{d,T}$,
\begin{equation*}
  D_{\tilde\nu^\e} \omega(t) = \lim_{\Delta t \to 0^+} \E_{\tilde\nu^\e} \left[ \frac{\omega(t+\Delta t) - \omega(t)}{\Delta t} \bigg| \mathcal B_t(\mathcal{C}^{d,T}) \right].
\end{equation*}
This is the so-called Nelson's mean derivative when $\tilde\nu^\e$ is realized as the law of a semimartingale $X$ on a filtered probability space $(\Omega, \mathcal F, \mathbf P, \{\mathcal F_t\}_{t\in[0,T]})$, as follows (see e.g. \cite{Nel01}):
\begin{equation*}
  DX(t) = \lim_{\e \to 0^+} \E_\P \left[ \frac{X(t+\Delta t) - X(t)}{\Delta t} \bigg| \mathcal F_t \right].
\end{equation*}

The stochastic action functional becomes
\begin{equation*}
  \mathcal S_{\tilde\nu^\e}[\omega] := \E_{\tilde\nu^\e} \left( \frac{1}{2} \int_{0}^{T} |D_{\tilde\nu^\e} \omega(t)|^{2} \mathrm dt + \Phi^\e(\omega) \right).
\end{equation*}
When $\Phi^\e$ is of the form \eqref{phi}, with running cost $V$ and terminal cost $g^\e$, it becomes
\begin{equation}\label{stoch-action}
  \begin{split}
    \mathcal S_{\tilde\nu^\e}[\omega] &= \E_{\tilde\nu^\e} \left[ \int_{0}^{T} \left( \frac{1}{2} |D_{\tilde\nu^\e} \omega(t)|^{2} + V(t, \omega(t)) \right) \mathrm dt + g^\e(\omega(T)) \right] \\
    &= \E_{\tilde\nu^\e} \left[ \int_0^T L_V\left( t, \omega(t), D_{\tilde\nu^\e} \omega(t) \right) dt + g^\e(\omega(T)) \right],
  \end{split}
\end{equation}
where $L_V$ is the standard Euclidean Lagrangian of \eqref{Lim-lag}.

Using a Cameron--Martin variation $\gamma \in \mathcal{H}^{d,T}_0$ for a path $\omega \in \mathcal{C}^{d,T}$, i.e., $\delta \omega = h$, it was shown in \cite[Section 7.2]{huang2023second} (cf. also \cite{CruzeiroZambrini1991}) that 
\begin{equation*}
  \delta D_{\tilde\nu^\e} \omega = \dot \gamma.
\end{equation*}
The stationary-action principle of the stochastic action functional \eqref{stoch-action} for $\tilde\nu^\e \in \mathcal P_{\rho_0}$ follows, applying It\^o's formula,
\begin{equation*}
  \begin{split}
    0 = \delta \mathcal S_{\tilde\nu^\e}[\omega] &= \E_{\tilde\nu^\e} \left[ \int_{0}^{T} \left( D_{\tilde\nu^\e} \omega(t) \delta D_{\tilde\nu^\e} \omega(t) + \nabla V(t, \omega(t)) \delta \omega(t) \right) \mathrm dt + \nabla g^\e(\omega(T)) \delta \omega(T) \right] \\
    &= \E_{\tilde\nu^\e} \left[ \int_{0}^{T} \left( D_{\tilde\nu^\e} \omega(t) \dot \gamma(t) + \nabla V(t, \omega(t)) \gamma(t) \right) \mathrm dt + \nabla g^\e(\omega(T)) \gamma(T) \right] \\
    &= \E_{\tilde\nu^\e} \left[ \int_{0}^{T} \left( - D_{\tilde\nu^\e} D_{\tilde\nu^\e} \omega(t) + \nabla V(t, \omega(t)) \right) \gamma(t) \mathrm dt + \left( D_{\tilde\nu^\e} \omega(T) + \nabla g^\e(\omega(T)) \right) \gamma(T) \right],
  \end{split}
\end{equation*}
which leads to the following stochastic Euler--Lagrange equation
\begin{equation}\label{s-EL}
  \begin{cases}
    D_{\tilde\nu^\e} D_{\tilde\nu^\e} \omega(t) = \nabla V(t, \omega(t)), \quad t\in (0,T), \\
    \tilde\nu^\e|_{t=0} = \rho_0, \quad D_{\tilde\nu^\e} \omega(T) = - \nabla g^\e(\omega(T)).
  \end{cases}
\end{equation}
Such an equation is called a `mean differential equation' in \cite{huang2023second}, compared with the Euler--Lagrange equation \eqref{eur-lag} of the OM functional.

Since $\nu^\e$ is a minimizer of $\mathcal S_{\tilde\nu^\e}[\omega]$ as shown in \eqref{opt-cont-argmin}, it solves the stochastic EL equation \eqref{s-EL}. If $\nu^\e$ can be realized as the law of SDE \eqref{SDE-opt-cont} with a Markovian drift $b^\e(t, \omega) = b^\e(t, \omega(t))$ for some function $b^\e: [0,T] \times \R^d \to \R^d$, or equivalently, the triple $\omega(\cdot)$, $\widetilde B$, $(\mathcal{C}^{d,T}, \mathcal B(\mathcal{C}^{d,T}), \nu^\e, \{\mathcal B_t(\mathcal{C}^{d,T})\}_{t\in [0,T]})$ is a weak solution of the functional SDE \eqref{control-sde}, then
\begin{equation*}
  D_{\tilde\nu^\e} \omega(t) = b^\e(t, \omega(t)),
\end{equation*}
and by It\^o's formula,
\begin{equation*}
  D_{\tilde\nu^\e} D_{\tilde\nu^\e} \omega(t) = D_{\tilde\nu^\e} [b^\e(t, \omega(t))] = \left( \frac{\pt}{\pt t} + b^\e(t, \omega(t)) \cdot \nabla + \frac{\e}{2} \Delta \right) b^\e(t, \omega(t)).
\end{equation*}
Thus, the stochastic Euler--Lagrange equation \eqref{s-EL} amounts to the viscous Burgers' equation \eqref{Navier-Stokes}. Comparing this equation with the nonlinear heat equation \eqref{NH}, we see that 
\begin{equation*}
  \pt_i b^\e_j(t,x) = \pt_j b^\e_i(t,x)
\end{equation*}
if $b^\e(t,x) \ne 0$. The above identity means that the vector field $b^\e$ is closed, and thus, locally exact, i.e., $b^\e$ is locally a gradient field. Suppose $b^\e(t,x) = \nabla S^\e(t,x)$ for some function $S^\e: [0,T] \times U \to \R$ with a domain $U\subset \R^d$, then $S$ satisfies the 2nd-order HJ equation \eqref{2nd-order HJ} on $[0,T] \times U$. This gives a partial converse of Proposition \ref{sol-control}.

The second-order/stochastic geometric interpretation of the 2nd-order HJ equation \eqref{2nd-order HJ-optimal transport} and its canonical relations with stochastic Hamiltonian mechanics have been established in \cite{huang2023second}.

\section{Application III: Entropy production in stochastic thermodynamics}\label{app-3}


In this section, we employ the framework of stochastic thermodynamics to illustrate the applicability of our results in this context. We begin by revisiting several foundational concepts in thermodynamics, such as the probability current and entropy production. Building on these concepts, we proceed to analyze the irreversibility of thermodynamic systems and establish how the second law of thermodynamics emerges as a natural consequence of our findings.

In particular, we provide a rigorous derivation and justification of the fluctuation theorem, which serves as a statistical foundation for understanding how macroscopic thermodynamic laws arise from the stochastic dynamics of microscopic systems. Furthermore, we present a novel decomposition formula for entropy production that applies to more general thermodynamic systems, utilizing the potential energy representation \eqref{phi}.


\subsection{Entropy production}

Stochastic dynamics has become a cornerstone in the modern description of temporal evolution, marking a profound methodological transition in applied mathematics. This paradigm has been anticipated in the quantitative modeling of biological systems~\cite{qian2011nonlinear} and in ongoing theoretical challenges of stochastic thermodynamics~\cite{qian2001mathematical,ao2008emerging,peliti2021stochastic}, where randomness fundamentally alters the traditional deterministic picture of dynamical processes. Within this perspective, the connection between stochastic motion and analytical mechanics has proven particularly fruitful: the Hamilton--Jacobi framework provides a natural analytical formulation for studying large deviation principles and the most probable paths in stochastic systems~\cite{ge2012analytical}. Notably, Miao \emph{et al.}~\cite{miao2024emergence} demonstrated that a Hamilton--Jacobi equation can emerge in the description of entropy evolution for stochastic dynamical systems under observation, in the limit of large information extent and homogeneous space--time. Motivated by these developments, we turn in this section to the framework of \emph{stochastic thermodynamics}, which provides a natural setting to connect stochastic dynamics with macroscopic thermodynamic behavior.

Consider the Markovian stochastic differential equation \eqref{SDE-Markovian} for a fixed $\epsilon > 0$, with a time-independent drift term, given by  
\begin{align}\label{eqn:SDEentropy}
\mathrm{d}X^\epsilon_x(t) = b^\epsilon(X^\epsilon_x(t))\,\mathrm{d}t + \sqrt{\epsilon}\,\mathrm{d}B(t), \quad X^\epsilon_x(0) = x,
\end{align}
where $b^\epsilon : \mathbb{R}^d \to \mathbb{R}^d$ represents the drift. Assume that the solution process $X^\epsilon_x$ admits a family of probability densities $\rho^\epsilon(t, \cdot)$, for $t \in [0, T]$, which satisfy the associated Fokker--Planck equation:  
\begin{align}\label{eqn:FPE}
\partial_t \rho^\epsilon(t, x) = -\nabla \cdot \left[\left(b^\epsilon(x) - \frac{\epsilon}{2} \nabla \log \rho^\epsilon(t, x)\right) \rho^\epsilon(t, x)\right].
\end{align}

To simplify further analysis, we give the following concepts.
\begin{Defi}[Probability current and current velocity]\label{def:currentandv}
    Define the probability current $j^\epsilon: [0,T]\times \R^d \to \R^d$ and the associated velocity field $v^\epsilon: [0,T]\times \R^d \to \R^d$ of system \eqref{eqn:SDEentropy} as follows:
\begin{align*}
j^\epsilon(t, x) &:= b^\epsilon(x)\rho^\epsilon(t, x) - \frac{\epsilon}{2} \nabla \rho^\epsilon(t, x), \\
v^\epsilon(t, x) &:= b^\epsilon(x) - \frac{\epsilon}{2} \nabla \log \rho^\epsilon(t, x).
\end{align*}
\end{Defi}
Here, $j^\epsilon$ represents the flux of the probability density, and $v^\epsilon$ describes the effective velocity field of the probability flow. 

In the limit $T \to \infty$, the system reaches its statistically steady state, characterized by the invariant distribution $\rho^\epsilon_\infty$, where the condition $-\nabla \cdot j^\epsilon = 0$ holds. However, if the drift term $b^\epsilon$ is not expressible as the gradient of some potential function, the probability current $j^\epsilon(x)$ does not vanish, i.e., $j^\epsilon(x) \neq 0$. This implies that the detailed balance condition is not satisfied in the steady state, keeping the system out of thermodynamic equilibrium. Consequently, the entropy production of the system remains nonzero in the steady state, as will be discussed in the next subsection.

Inspired by \cite{seifert2005entropy,huang2026entropy, peliti2021stochastic}, we give the following definition.
\begin{Defi}[Stochastic entropy along trajectories]
    The \emph{stochastic entropy} of the system \eqref{eqn:SDEentropy} along a trajectory $\{\omega(t)\}_{t \in [0, T]}$ is defined as
\begin{align}\label{def:sysEP}
    s_\mathrm{sys}(t, \omega) = -\log \rho^\epsilon(T-t, \omega(t)),
\end{align}
where $\rho^\epsilon(T-t, \omega(t))$ is the probability density associated with the solution process of \eqref{eqn:SDEentropy} at time $T-t$ and at position $\omega(t)$. 
\end{Defi}
\begin{rem}
\normalfont
The definition of stochastic entropy proposed here differs from the conventional formulation, \(-\log\rho^{\epsilon}(t,\omega(t))\), which is commonly used in the physical literature (see, e.g., \cite{seifert2005entropy,peliti2021stochastic}). Our definition is adopted because it uniquely provides a mathematically rigorous characterization of process irreversibility \textit{prior} to the system relaxing to its stationary state. This distinction is crucial for analyzing the thermodynamics of non-stationary regimes. We emphasize, however, that both definitions coincide once the system reaches its steady state.
\end{rem}
The corresponding stochastic differential of $s_\mathrm{sys}$ can then be written, using definition \ref{def:currentandv},
\begin{equation}\label{eqn:entropyrate2}
\begin{aligned}
    \mathrm{d}s_\mathrm{sys}(t, \omega) 
    &= -\mathrm{d} \log \rho^\epsilon(T-t, \omega(t)) \\
    &= \frac{(\partial_t \rho^\epsilon)(T-t, \omega(t))}{\rho^\epsilon(T-t, \omega(t))} \,\mathrm{d}t 
       - \frac{\nabla \rho^\epsilon(T-t, \omega(t))}{\rho^\epsilon(T-t, \omega(t))} \circ \mathrm{d} \omega(t) \\
    &= \frac{(\partial_t \rho^\epsilon)(T-t, \omega(t))}{\rho^\epsilon(T-t, \omega(t))} \,\mathrm{d}t 
       - \frac{2}{\epsilon} \left[b^\epsilon(\omega(t)) - \frac{j^\epsilon(T-t, \omega(t))}{\rho^\epsilon(T-t, \omega(t))}\right] \circ \mathrm{d} \omega(t),
\end{aligned}
\end{equation}
where $\circ$ denotes the Stratonovich differential, which is used to preserve the chain rule in stochastic calculus. The term $\frac{2}{\epsilon} b^\epsilon(\omega(t)) \circ \mathrm{d}\omega(t)$ can be interpreted as the rate of heat dissipation in the medium (see \cite{seifert2005entropy}), which motivates the definition of the \emph{medium entropy production}:
\begin{align}\label{def:mEP}
    \mathrm{d}s_\mathrm{m}(t, \omega) := \frac{2}{\epsilon} b^\epsilon(\omega(t)) \circ \mathrm{d}\omega(t).
\end{align}
The remaining terms of the last equality in \eqref{eqn:entropyrate2} contribute to the \emph{total entropy production}, which is defined as
\begin{align*}
    \mathrm{d}s_\mathrm{tot}(t, \omega) := \frac{(\partial_t \rho^\epsilon)(T-t, \omega(t))}{\rho^\epsilon(T-t, \omega(t))} \,\mathrm{d}t 
    + \frac{2}{\epsilon} \frac{j^\epsilon(T-t, \omega(t))}{\rho^\epsilon(T-t, \omega(t))} \circ \mathrm{d} \omega(t).
\end{align*}

Combining these expressions, we obtain the following decomposition of the entropy production.
\begin{lem}[Entropy production decomposition formula]\label{eqn:entropyrate}
    The entropy production decomposition formula for the stochastic differential equation \eqref{eqn:SDEentropy} is given by:  
\begin{align}\label{eqn:entropydecomposition}
    \underbrace{\frac{(\partial_t \rho^\epsilon)(T-t, \omega(t))}{\rho^\epsilon(T-t, \omega(t))} \,\mathrm{d}t 
    + \frac{2}{\epsilon} \frac{j^\epsilon(T-t, \omega(t))}{\rho^\epsilon(T-t, \omega(t))} \circ \mathrm{d}\omega(t)}_{\mathrm{d}s_\mathrm{tot}(t, \omega)}
    = \mathrm{d}s_\mathrm{sys}(t, \omega) 
    + \underbrace{\frac{2}{\epsilon} b^\epsilon(\omega(t)) \circ \mathrm{d}\omega(t)}_{\mathrm{d}s_\mathrm{m}(t, \omega)}.
\end{align}  
Here, the total entropy production, \(\mathrm{d}s_\mathrm{tot}(t, \omega)\), consists of two components: the system entropy production, \(\mathrm{d}s_\mathrm{sys}(t, \omega)\), and the medium entropy production, \(\mathrm{d}s_\mathrm{m}(t, \omega)\).
\end{lem}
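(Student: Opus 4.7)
The plan is to obtain the decomposition by a direct application of the Stratonovich chain rule to $s_\mathrm{sys}(t,\omega) = -\log \rho^\epsilon(T-t,\omega(t))$, and then to eliminate the gradient term $\nabla\rho^\epsilon/\rho^\epsilon$ in favor of $b^\epsilon$ and $j^\epsilon/\rho^\epsilon$ using Definition \ref{def:currentandv}. The asserted identity is essentially a bookkeeping organization of the computation already begun in \eqref{eqn:entropyrate2}, so the proof is purely algebraic once the correct chain rule is invoked.

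First I would verify enough regularity of $\rho^\epsilon$, say $\rho^\epsilon \in C^{1,2}([0,T]\times\mathbb{R}^d,\mathbb{R}_+)$ so that it solves the Fokker--Planck equation \eqref{eqn:FPE} classically, together with the fact that $\omega$ is a continuous semimartingale satisfying \eqref{eqn:SDEentropy}. Then the Stratonovich chain rule applied to the composition gives
\begin{equation*}
  \mathrm{d}\log \rho^\epsilon(T-t,\omega(t)) = -\frac{(\partial_t\rho^\epsilon)(T-t,\omega(t))}{\rho^\epsilon(T-t,\omega(t))}\,\mathrm{d}t + \frac{\nabla\rho^\epsilon(T-t,\omega(t))}{\rho^\epsilon(T-t,\omega(t))}\circ \mathrm{d}\omega(t),
\end{equation*}
the leading minus sign arising from $\partial_t[\rho^\epsilon(T-t,x)] = -(\partial_t\rho^\epsilon)(T-t,x)$. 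Negating both sides yields $\mathrm{d}s_\mathrm{sys}$. Next, rearranging the definition $j^\epsilon = b^\epsilon\rho^\epsilon - \tfrac{\epsilon}{2}\nabla\rho^\epsilon$ produces $\nabla\rho^\epsilon/\rho^\epsilon = \tfrac{2}{\epsilon}(b^\epsilon - j^\epsilon/\rho^\epsilon)$. Substituting this into the expression for $\mathrm{d}s_\mathrm{sys}$ and then adding $\mathrm{d}s_\mathrm{m} = \tfrac{2}{\epsilon}b^\epsilon(\omega(t))\circ\mathrm{d}\omega(t)$ from \eqref{def:mEP} cancels the $b^\epsilon$ contribution exactly, leaving precisely the two terms defining $\mathrm{d}s_\mathrm{tot}$ on the left-hand side of \eqref{eqn:entropydecomposition}.

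I do not anticipate any substantive obstacle; the computation is entirely routine once the signs are tracked carefully. The one point that warrants care is the choice of Stratonovich rather than It\^o calculus: the Stratonovich convention preserves the ordinary chain rule, avoiding a compensating $\tfrac{\epsilon}{2}\Delta\log\rho^\epsilon$ correction that would otherwise need to be reabsorbed, and it produces the pathwise interpretation of entropy fluxes standard in stochastic thermodynamics. Converting to It\^o form via $Y\circ\mathrm{d}\omega = Y\,\mathrm{d}\omega + \tfrac{\epsilon}{2}\operatorname{tr}(\nabla_x Y)\,\mathrm{d}t$ would yield an equivalent identity but in a less transparent form, which is precisely why the definitions of $\mathrm{d}s_\mathrm{m}$ and $\mathrm{d}s_\mathrm{tot}$ are written in Stratonovich form to begin with.
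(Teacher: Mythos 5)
Your proposal is correct and follows essentially the same route as the paper: apply the Stratonovich chain rule to $-\log\rho^\epsilon(T-t,\omega(t))$, use the identity $\nabla\log\rho^\epsilon = \tfrac{2}{\epsilon}(b^\epsilon - j^\epsilon/\rho^\epsilon)$ from Definition \ref{def:currentandv}, and add $\mathrm{d}s_\mathrm{m}$ to cancel the $b^\epsilon$ term — this is exactly the computation in \eqref{eqn:entropyrate2} combined with the definitions of $\mathrm{d}s_\mathrm{m}$ and $\mathrm{d}s_\mathrm{tot}$. The signs and the $\partial_t[\rho^\epsilon(T-t,x)]=-(\partial_t\rho^\epsilon)(T-t,x)$ subtlety are handled correctly.
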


\subsection{Thermodynamic irreversibility and fluctuation theorems}
The Second Law of Thermodynamics asserts that the total entropy of an isolated system can never decrease over time; it either increases or, in the case of a reversible process, remains constant. This law is intrinsically tied to the irreversibility of natural processes. In real-world systems, energy transformations are inherently inefficient, leading to an increase in entropy  and turning the processes irreversible. For example, heat spontaneously flows from a hotter region to a colder one but never in the reverse direction without external work, exemplifying the natural tendency toward higher entropy and irreversibility. This is in sharp contrast with Schr\"odinger's problem of Sections \ref{subsec:reversal} and \ref{subsec-Born}. 

In what follows, we analyze this phenomenon within the framework of stochastic differential equations, employing the tools of stochastic analysis to provide a rigorous mathematical perspective.

The Fokker--Planck equation \eqref{eqn:FPE} satisfied by the probability densities $\rho^\epsilon(t,x)$ can be rewritten in the following form:  
\begin{align*}
    \partial_t \rho^\epsilon(t,x)
    = -\nabla \cdot \left(\left( b^\epsilon(x) - \epsilon \nabla \log \rho^\epsilon(t,x) \right) \rho^\epsilon(t,x)\right) 
    - \frac{\epsilon}{2} \Delta \rho^\epsilon(t,x),
\end{align*}
where we use the identity  
\[
\epsilon \nabla \cdot (\nabla \rho^\epsilon(t,x)) = -\epsilon \Delta \rho^\epsilon(t,x) + 2\epsilon \nabla \cdot (\nabla \log \rho^\epsilon(t,x) \rho^\epsilon(t,x)).
\]

The time-reversed process $\rev X^\epsilon_x$, defined in Section \ref{subsec:path}, has the probability density  
\begin{align*}
    \rev\rho^\epsilon(t,x) = \rho^\epsilon(T-t,x),
\end{align*}
and this density satisfies the following Fokker--Planck equation:  
\begin{align}\label{eqn:newFPE}
    \partial_t \rev \rho^\epsilon(t,x) 
    = -\nabla \cdot \left(\left( -b^\epsilon(x) + \epsilon \nabla \log \rho^\epsilon(T-t,x) \right) \rev \rho^\epsilon(t,x)\right) 
    + \frac{\epsilon}{2} \Delta \rev \rho^\epsilon(t,x),
\end{align}
where the drift term \(-b^\epsilon(x) + \epsilon \nabla \log \rho^\epsilon(T-t,x)\) is recognized as the new effective drift. Consequently, the process $\rev X^\epsilon_x$ is a weak solution to the following SDE:  
\begin{align}\label{eqn:reversedSDE}
    \mathrm{d} \rev X^\epsilon_x(t) 
    = \left(-b^\epsilon(\rev X^\epsilon_x(t)) + \epsilon \nabla \log \rho^\epsilon(T-t, \rev X^\epsilon_x(t)) \right) \mathrm{d}t 
    + \sqrt{\epsilon} \, \mathrm{d} \widetilde B(t), \quad \rev X^\epsilon_x(0) \sim \rho^\epsilon(T,\cdot),
\end{align}
where $\widetilde B$ is a standard Brownian motion defined on the probability space \((\Omega^R, \mathcal{F}^R, \mathbf{P}^R)\). The above derivation is informal, as the PDE \eqref{eqn:newFPE} only establishes that \eqref{eqn:reversedSDE} shares the same probability flow as the time-reversal of the process \(X^\epsilon_x\). However, this result can be made rigorous, as shown in \cite{anderson1982reverse}.  

It is straightforward to verify that the conditional pushforward measures \(B_*\mathbf{P}(\cdot \mid x_0)\) and \(\widetilde B_*\mathbf{P}^R(\cdot \mid x_0)\) are identical, as both correspond to the Wiener measure \(\mu_{x}\) conditioned on starting at \(x\) in the path space.

The laws of $X^\e_x$ and $\rev{X}^\e_x$ are denoted as $\nu_x^\e$ and $\rev{\nu}^\e_x$ respectively, and the Radon--Nikodym derivative between the pushforward measures $\nu_x^\e$ and $\rev{\nu}^\e_x$ can be derived directly using Girsanov's theorem:
\begin{equation}\label{ean:entropyderivation}
\begin{aligned}
    \frac{\mathrm{d} \nu_x^\e}{\mathrm{d} \rev \nu_x^\e}(\omega)
    =&\ \frac{\mathrm{d} \nu_x^\e}{\mathrm{d} \mu_x} (\omega) 
    \frac{\mathrm{d} \mu_x}{\mathrm{d}\rev{\nu}^\e_x} (\omega) \\
    =&\ \exp\left\{ \frac{1}{\epsilon} \int_0^T b^\epsilon(\omega(t))  \mathrm{d} \omega(t) 
    - \frac{1}{2\epsilon} \int_0^T \left| b^\epsilon(\omega(t)) \right|^2 \, \mathrm{d}t \right. \\
    &\ - \frac{1}{\epsilon} \int_0^T \left( b^\epsilon(\omega(t)) - 2v^\epsilon(T-t, \omega(t)) \right)  \mathrm{d} \omega(t)  \left. + \frac{1}{2\epsilon} \int_0^T \left| b^\epsilon(\omega(t)) - 2v^\epsilon(T-t, \omega(t)) \right|^2 \, \mathrm{d}t \right\} \\
    =&\ \exp\left\{ \frac{2}{\epsilon} \int_0^T v^\epsilon(T-t, \omega(t))  \mathrm{d} \omega(t) - \frac{2}{\epsilon} \int_0^T \left( b^\epsilon(\omega(t)) - v^\epsilon(T-t, \omega(t)) \right)  v^\epsilon(T-t, \omega(t)) \, \mathrm{d}t \right\} \\
  \underset{(*)}{=}&\ \exp\left\{
\frac{2}{\epsilon} \int_0^T v^\epsilon(T-t, \omega(t)) \circ \mathrm{d} \omega(t)
- \frac{1}{\epsilon}\int_0^T \nabla \cdot v^\epsilon(T-t, \omega(t)) \, \mathrm{d}t
\right\}\\
    &\ \left. - \frac{2}{\epsilon} \int_0^T \left( b^\epsilon(\omega(t)) - v^\epsilon(T-t, \omega(t)) \right) \cdot v^\epsilon(T-t, \omega(t)) \, \mathrm{d}t \right\} \\
    =&\ \exp\left\{ \frac{2}{\epsilon} \int_0^T v^\epsilon(T-t, \omega(t)) \circ \mathrm{d} \omega(t) 
    - \int_0^T \frac{\nabla \cdot \left( v^\epsilon(T-t, \omega(t)) \rho^\epsilon(T-t, \omega(t)) \right)}{\rho^\epsilon(T-t, \omega(t))} \, \mathrm{d}t \right\}.
\end{aligned}
\end{equation}
Here, in the equality $(*)$, we convert the Itô integral into the Stratonovich integral. This transformation is necessary because the entropy production terms defined in \eqref{eqn:entropyrate} are formulated within the Stratonovich framework. Note that \eqref{ean:entropyderivation} is a special case of \eqref{eqn:RNnu} in Subsection \ref{subsec:reversal}.

It is easy to see that $\rho^\epsilon(T-t, \omega(t))$ satisfies the following PDE:
\begin{align*}
    -(\partial_t \rho^\epsilon)(T-t, \omega(t)) 
    &= \nabla \cdot \left[ \left( b^\epsilon(\omega(t)) - \frac{\epsilon}{2} \nabla \log \rho^\epsilon(T-t, \omega(t)) \right) \rho^\epsilon(T-t, \omega(t)) \right] \\
    &= \nabla \cdot \left( v^\epsilon(T-t, \omega(t)) \rho^\epsilon(T-t, \omega(t)) \right) \\
    &= \nabla \cdot j^\epsilon(T-t, \omega(t)),
\end{align*}
and
\begin{align*}
    v^\epsilon(T-t, \omega(t)) \circ \mathrm{d} \omega(t) 
    = b^\epsilon(\omega(t)) \circ \mathrm{d} \omega(t)     - \frac{\epsilon}{2} \nabla \log \rho^\epsilon(T-t, \omega(t)) \circ \mathrm{d} \omega(t).
\end{align*}

Recalling the entropy production formula of Lemma \ref{eqn:entropyrate} and combining it with \eqref{ean:entropyderivation}, we deduce that  
\begin{equation}\label{eqn:totalderivation}
\begin{aligned}
    \log \frac{\mathrm{d} \nu^\epsilon_x}{\mathrm{d} \rev{\nu}^\epsilon_x}(\omega)
    =&\ \int_0^T \mathrm{d}s_\mathrm{sys}(t, \omega) + \int_0^T \mathrm{d}s_\mathrm{m}(t, \omega) \\
    =&\ \int_0^T \mathrm{d}s_\mathrm{tot}(t, \omega) \\
    =&\ s_\mathrm{tot}(T, \omega) - s_\mathrm{tot}(0, \omega).
\end{aligned}
\end{equation}
We now establish the following mathematical formulation of the second law of thermodynamics.

\begin{thm}[Second law of thermodynamics]\label{theo:secondlaw}
    For the system \eqref{eqn:SDEentropy} with an arbitrary initial distribution, the ensemble average of the total entropy production over any time interval $[0, T]$ is non-decreasing, i.e.,  
    \begin{equation*}
       \E_{\nu^\epsilon} \left[ s_\mathrm{tot}(T, \omega) - s_\mathrm{tot}(0, \omega) \right] \geq 0.
    \end{equation*}
\end{thm}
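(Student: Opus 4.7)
The plan is to leverage the identity \eqref{eqn:totalderivation} established in the preceding derivation, which expresses the pathwise change in total entropy as the logarithm of a Radon--Nikodym derivative between the forward conditional path measure $\nu_x^\epsilon$ and its time-reversed counterpart $\rev\nu_x^\epsilon$. Once this identification is in hand, the nonnegativity is essentially automatic by Gibbs' inequality for the Kullback--Leibler divergence, with the only remaining work being a disintegration argument to accommodate an arbitrary initial distribution rather than a point mass at $x$.

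First, I would fix $x \in \R^d$ and take the conditional expectation of \eqref{eqn:totalderivation} under $\nu_x^\epsilon$:
\begin{equation*}
\E_{\nu_x^\epsilon}\!\left[s_{\mathrm{tot}}(T,\omega)-s_{\mathrm{tot}}(0,\omega)\right]
= \E_{\nu_x^\epsilon}\!\left[\log \frac{\d \nu_x^\epsilon}{\d \rev\nu_x^\epsilon}\right] = D_{\mathrm{KL}}(\nu_x^\epsilon \,\|\, \rev\nu_x^\epsilon) \ge 0,
\end{equation*}
where the last inequality is Gibbs' inequality (an immediate consequence of Jensen's inequality applied to $-\log$). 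The absolute continuity $\nu_x^\epsilon \ll \rev\nu_x^\epsilon$ is guaranteed by the explicit Girsanov density computed in \eqref{ean:entropyderivation}, so the logarithm is well-defined $\nu_x^\epsilon$-a.s.

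Next, I would disintegrate the total expectation over the initial state. Writing $\nu^\epsilon(\d\omega) = \int_{\R^d} \nu_x^\epsilon(\d\omega)\, \nu^\epsilon|_{t=0}(\d x)$ and applying Fubini's theorem yields
\begin{equation*}
\E_{\nu^\epsilon}\!\left[s_{\mathrm{tot}}(T,\omega)-s_{\mathrm{tot}}(0,\omega)\right]
= \int_{\R^d} D_{\mathrm{KL}}(\nu_x^\epsilon \,\|\, \rev\nu_x^\epsilon)\, \nu^\epsilon|_{t=0}(\d x) \ge 0,
\end{equation*}
since the integrand is pointwise nonnegative. This delivers the claimed inequality for any initial distribution.

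The main obstacle, such as it is, lies not in the algebra but in justifying integrability: one needs $\log(\d \nu_x^\epsilon / \d \rev\nu_x^\epsilon)$ to be $\nu_x^\epsilon$-integrable so that the conditional KL divergence is finite (or at worst $+\infty$, which still preserves the sign), and Fubini must then apply. A sufficient pathway is a Novikov-type integrability condition on $b^\epsilon$ together with smoothness and strict positivity of the density $\rho^\epsilon(t,\cdot)$, under which the Girsanov calculation \eqref{ean:entropyderivation} is rigorous and the Stratonovich-to-It\^o conversion is valid. Under such mild regularity hypotheses, all exchanges of expectation and integration are legitimate, and the theorem follows directly from the nonnegativity of relative entropy.
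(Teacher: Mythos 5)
Your proposal is correct and follows essentially the same route as the paper: identify $s_{\mathrm{tot}}(T,\omega)-s_{\mathrm{tot}}(0,\omega)$ with $\log\frac{\d\nu^\epsilon_x}{\d\rev\nu^\epsilon_x}(\omega)$ via \eqref{eqn:totalderivation} and conclude by nonnegativity of the Kullback--Leibler divergence. The paper's proof is just a terser version of yours, leaving the disintegration over the initial distribution and the integrability caveats implicit.
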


\begin{proof}
    From the previously derived equalities \eqref{eqn:totalderivation} and the nonnegativity of the Kullback--Leibler divergence \eqref{Gibbs-ineq}, we have  
    \begin{equation*}
        \begin{aligned}
            \E_{\nu^\epsilon} \left[ \log \frac{\mathrm{d} \nu^\epsilon_x}{\mathrm{d} \rev{\nu}^\epsilon_x}(\omega) \right]
            = \E_{\nu^\epsilon} \left[ s_\mathrm{tot}(T, \omega) - s_\mathrm{tot}(0, \omega) \right] \geq 0,
        \end{aligned}
    \end{equation*}
    which completes the proof.
\end{proof}

This theorem provides a rigorous mathematical formulation of the \textbf{Second Law of Thermodynamics}. Specifically, it implies that any natural process evolves in suach a way that the total entropy of all systems involved in the process does not decrease on average.

Next, we examine the mathematical structure of entropy production at the microscopic level. While Theorem \ref{theo:secondlaw} describes the macroscopic physical law that entropy production is always nondecreasing on average, at the microscopic level, certain individual trajectories may exhibit decreasing stochastic entropy production. This phenomenon is known as \emph{fluctuation relation} in physics.

To illustrate this, we consider a general setting where the initial distribution of the process \(X^\epsilon\) in \eqref{eqn:SDEentropy} is an arbitrary probability distribution  \(\nu^\epsilon|_{t=0}\). The following holds:  
\begin{equation*}
    \begin{aligned}
        \E_{\nu^\epsilon} \left[ e^{-\log\frac{\mathrm{d} \nu^\epsilon_x}{\mathrm{d} \rev{\nu}^\epsilon_x}(\omega)} \right] 
        &:= \int_{\mathbb{R}^d} \int_\Omega e^{-\log\frac{\mathrm{d} \nu^\epsilon_x}{\mathrm{d} \rev{\nu}^\epsilon_x}(\omega)} \nu^\epsilon_x(\mathrm{d}\omega) \nu^\epsilon|_{t=0}(\mathrm{d}x) \\
        &= \int_{\mathbb{R}^d} \int_\Omega \frac{\mathrm{d} \rev{\nu}^\epsilon_x}{\mathrm{d} \nu^\epsilon_x}(\omega) \nu^\epsilon_x(\mathrm{d}\omega) \nu^\epsilon|_{t=0}(\mathrm{d}x) \\
        &= 1.
    \end{aligned}
\end{equation*}
This result recovers the fluctuation theorem \cite{seifert2005entropy}, which can be formulated as the following statement.
\begin{thm}[Integral fluctuation theorem]\label{theo:IFT}
    For the system \eqref{eqn:SDEentropy} with an arbitrary initial distribution, the stochastic total entropy production over any time interval $[0,T]$ satisfies the following relation:
    \begin{equation*}
    \begin{aligned}
         \E_{\nu^\epsilon} \left[ e^{-\Delta s_\mathrm{tot} } \right] =\E_{\nu^\epsilon} \left[ e^{-(s_\mathrm{tot}(T, \omega) - s_\mathrm{tot}(0, \omega))} \right]  = 1.
    \end{aligned}
\end{equation*}
\end{thm}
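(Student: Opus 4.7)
The plan is to read off the integral fluctuation relation as a direct consequence of the trajectorial identity already assembled in \eqref{ean:entropyderivation}--\eqref{eqn:totalderivation}, which asserts that the total entropy change along a trajectory equals the log-ratio of the forward and backward conditional path measures,
$$\Delta s_\mathrm{tot} = s_\mathrm{tot}(T,\omega) - s_\mathrm{tot}(0,\omega) = \log \frac{\d \nu^\e_x}{\d \rev\nu^\e_x}(\omega).$$
Negating and exponentiating then converts $e^{-\Delta s_\mathrm{tot}}$ into the reciprocal Radon--Nikodym derivative $\frac{\d \rev\nu^\e_x}{\d \nu^\e_x}(\omega)$, which is the only object I actually need to integrate.

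Next, I would disintegrate the path measure along its initial time marginal as in \eqref{mu-tilde-total}, writing $\nu^\e(\d\omega) = \nu^\e_x(\d\omega)\,\nu^\e|_{t=0}(\d x)$, so that
$$\E_{\nu^\e}\!\left[e^{-\Delta s_\mathrm{tot}}\right] = \int_{\R^d} \int_\Omega \frac{\d \rev\nu^\e_x}{\d \nu^\e_x}(\omega)\, \nu^\e_x(\d\omega)\, \nu^\e|_{t=0}(\d x).$$
For each fixed initial point $x$, the inner integral is, by the defining property of the Radon--Nikodym derivative, the total mass $\rev\nu^\e_x(\Omega) = 1$, since $\rev\nu^\e_x$ is a probability measure on the conditional path space of trajectories with $\omega(0)=x$. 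Integrating the constant function $1$ against $\nu^\e|_{t=0}$ then closes the argument and delivers the claimed value $1$.

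The only non-routine check is the pointwise absolute continuity $\rev\nu^\e_x \ll \nu^\e_x$ needed to make the change of density meaningful $\nu^\e|_{t=0}$-a.e. in $x$; but this was already secured in \eqref{ean:entropyderivation}, where Girsanov's theorem applied to the time-reversed SDE \eqref{eqn:reversedSDE} produces the density as an explicit finite exponential of stochastic integrals built from $b^\e$ and $v^\e$, under an appropriate Novikov-type integrability of these fields along the forward trajectory. Once this is in place, the statement reduces to a bookkeeping step in disintegration, and no additional analytic input beyond Section \ref{subsec:reversal} is required. The whole derivation is therefore a one-line corollary of \eqref{eqn:totalderivation}, and the main conceptual content lies not in the present theorem but in the earlier identification of $\Delta s_\mathrm{tot}$ with the forward--backward log-density.
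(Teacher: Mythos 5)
Your proposal is correct and follows essentially the same route as the paper: identify $\Delta s_\mathrm{tot}$ with $\log\frac{\d\nu^\e_x}{\d\rev\nu^\e_x}$ via \eqref{eqn:totalderivation}, disintegrate $\nu^\e$ over its initial marginal, and observe that the inner integral of the reciprocal density $\frac{\d\rev\nu^\e_x}{\d\nu^\e_x}$ against $\nu^\e_x$ is the total mass of the probability measure $\rev\nu^\e_x$, namely $1$. Your added remark about the absolute continuity $\rev\nu^\e_x \ll \nu^\e_x$ being supplied by the Girsanov computation in \eqref{ean:entropyderivation} is a correct and welcome clarification that the paper leaves implicit.
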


{{Beside the integral fluctuation theorem \ref{theo:IFT}, we are able to provide a stronger relation on irreversibility. Recall Corollary~\ref{cor-0}(ii). For the total measure $\nu^\e$, we have
\begin{equation*}
    \begin{aligned}
         \log \frac{\d\nu^\e}{\d\rev\nu^\e}(\omega) =&  \log\frac{\d\nu|_{t=0}}{\d\rev\nu^\e|_{t=0}} (\omega(0)) +\log\frac{\d\nu_x}{\d\rev\nu^\e_x}(\omega) \bigg|_{x = \omega(0)}\\
         =& \log\frac{\rho^\e(0,\omega(0))}{\rho^\e(T,\omega(0))} + \Delta s_\mathrm{tot}(\omega).
    \end{aligned}
\end{equation*}
We introduce the \emph{irreversibility index}:
\begin{equation}\label{IIF}
    \mathcal{R}(\omega):=\log\frac{\rho^\e(0,\omega(0))}{\rho^\e(T,\omega(0))} + \Delta s_\mathrm{tot}(\omega).
\end{equation}
The functional $\mathcal{R}$ satisfies the following detailed fluctuation relation.

\begin{thm}[Detailed fluctuation theorem]\label{theo:DFT}
    For the system \eqref{eqn:SDEentropy} with an arbitrary initial distribution, the irreversibility index $\mathcal{R}$ defined in \eqref{IIF} over any time interval $[0,T]$ satisfies
\begin{equation*}
    \begin{aligned}
         \frac{\rho_\mathcal{R}(\sigma)}{\rho_\mathcal{R}(-\sigma)}=e^{\sigma},\quad \text{for all}\quad \sigma \in \mathbb{R},
    \end{aligned}
\end{equation*}
where $\rho_{\mathcal{R}}$ denotes the probability distribution of the random variable $\mathcal{R}$.
\end{thm}
\begin{proof}
Recall the time-reversal operator $R$ introduced in Appendix~\ref{subsec:path}. We note that
\begin{equation*}
        \begin{aligned}
            \frac{\d\nu^\e}{\d\rev\nu^\e}(\omega) =&  \frac{\d\nu^\e(\omega)}{\d \nu^\e\circ R^{-1}(\omega)}  
            = \frac{\d\nu^\e(\omega)}{\d \nu^\e(\rev\omega)}.
        \end{aligned}
    \end{equation*}
    Hence
    \begin{equation*}
        \mathcal{R}(\omega)=-\mathcal{R}(\rev\omega).
    \end{equation*}
    For any $\sigma\in\mathbb{R}$, we compute
    \begin{equation*}
        \begin{aligned}
            \rho_\mathcal{R}(\sigma)
        =& \int_\Omega \delta(\mathcal{R}(\omega)-\sigma)\d\nu^\e(\omega) \\
        =& \int_\Omega \delta(\mathcal{R}(\omega)-\sigma)e^{\mathcal{R}(\omega)}\d\nu^\e(\rev\omega) \\
        =& e^{\sigma}\int_\Omega \delta(\mathcal{R}(\omega)-\sigma) \d\nu^\e(\rev\omega) \\
        =& e^{\sigma}\int_\Omega \delta(-\mathcal{R}(\rev\omega)-\sigma) \d\nu^\e(\rev\omega) \\
        =& e^{\sigma}\rho_\mathcal{R}(-\sigma).
        \end{aligned}
    \end{equation*}
    This completes the proof.
\end{proof}
}}

\begin{rem}
\normalfont
Note that similar versions of Theorems~\ref{theo:secondlaw},~\ref{theo:IFT} and \ref{theo:DFT} were derived in \cite{seifert2005entropy} via physical arguments. 
We observe that the approaches in \cite{seifert2005entropy} (see equation~(14) therein) and ours constitute two distinct perspectives in stochastic thermodynamics. 
Specifically, in \cite{seifert2005entropy}, time-reversal is applied to trajectories in path space~$C[0,T]$, rather than to the process~$X_{\cdot}$ itself. 
In contrast, our arguments rely entirely on the comparison between forward and reversed processes.
\end{rem}

Now we are ready to calculate the total entropy production in a more computational perspective. 
\begin{thm}[Total entropy production]\label{theo:EPRformula}
    The total entropy production on average of the system \eqref{eqn:SDEentropy} is given as:
    \begin{equation*}
\begin{aligned}
     \E_{\nu^\epsilon} \left[ s_\mathrm{tot}(T, \omega) - s_\mathrm{tot}(0, \omega) \right] 
    = \frac{2}{\epsilon} \int_0^T\int_{\mathbb{R}^d} \left|v^\epsilon(T-t, x) \right|^2 \rho^\e(t,x)\d x\mathrm{d}t,
\end{aligned}
\end{equation*}
where $v^\e(\cdot,\cdot)$ is the current velocity filed given in Definition \ref{def:currentandv} and $\rho(\cdot,\cdot)$ is the probability density of the solution process of \eqref{eqn:SDEentropy}.
\end{thm}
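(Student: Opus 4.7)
The plan is to start from the identity \eqref{eqn:totalderivation}, which already identifies the total entropy production $s_\mathrm{tot}(T,\omega)-s_\mathrm{tot}(0,\omega)$ with the logarithmic Radon--Nikodym derivative $\log\frac{\mathrm d\nu^\epsilon_x}{\mathrm d\rev\nu^\epsilon_x}(\omega)$. The useful representation is the one obtained just before the Stratonovich conversion (the third equality) in \eqref{ean:entropyderivation}, which is written in It\^o form:
\begin{equation*}
\log\frac{\mathrm d\nu^\epsilon_x}{\mathrm d\rev\nu^\epsilon_x}(\omega)=\frac{2}{\epsilon}\int_0^T v^\epsilon(T-t,\omega(t))\cdot\mathrm d\omega(t)-\frac{2}{\epsilon}\int_0^T\bigl(b^\epsilon(\omega(t))-v^\epsilon(T-t,\omega(t))\bigr)\cdot v^\epsilon(T-t,\omega(t))\,\mathrm dt.
\end{equation*}
Taking $\mathbf E_{\nu^\epsilon}$ on both sides is the right move because under $\nu^\epsilon$ the canonical path $\omega$ is a weak solution of \eqref{eqn:SDEentropy}, so we may substitute $\mathrm d\omega(t)=b^\epsilon(\omega(t))\mathrm dt+\sqrt\epsilon\,\mathrm dB(t)$ into the stochastic integral and split it into a Lebesgue piece and a true It\^o martingale piece.

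Next, the It\^o integral $\sqrt\epsilon\int_0^T v^\epsilon(T-t,\omega(t))\cdot\mathrm dB(t)$ vanishes in expectation, provided $v^\epsilon(T-\cdot,\omega(\cdot))$ lies in the appropriate $L^2$ space under $\mathbf P$; this is a standard consequence of the regularity/integrability assumptions underpinning the Girsanov derivation in \eqref{ean:entropyderivation} (effectively Novikov-type bounds on $b^\epsilon$ and $\nabla\log\rho^\epsilon$). After this cancellation, the remaining deterministic term reads
\begin{equation*}
\mathbf E_{\nu^\epsilon}\!\left[\frac{2}{\epsilon}\!\int_0^T\!\! v^\epsilon(T-t,\omega(t))\cdot b^\epsilon(\omega(t))\,\mathrm dt-\frac{2}{\epsilon}\!\int_0^T\!\!\bigl(b^\epsilon(\omega(t))-v^\epsilon(T-t,\omega(t))\bigr)\cdot v^\epsilon(T-t,\omega(t))\,\mathrm dt\right]\!.
\end{equation*}
The two $b^\epsilon\cdot v^\epsilon$ contributions cancel algebraically, leaving exactly $\frac{2}{\epsilon}\mathbf E_{\nu^\epsilon}\!\int_0^T|v^\epsilon(T-t,\omega(t))|^2\mathrm dt$.

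Finally, Fubini's theorem together with the fact that the time-$t$ marginal of $\nu^\epsilon$ has density $\rho^\epsilon(t,\cdot)$ (this is exactly the Fokker--Planck evolution \eqref{eqn:FPE} used to define $\rho^\epsilon$) converts the path expectation into
\begin{equation*}
\frac{2}{\epsilon}\int_0^T\!\!\int_{\mathbb R^d}|v^\epsilon(T-t,x)|^2\,\rho^\epsilon(t,x)\,\mathrm dx\,\mathrm dt,
\end{equation*}
which is the claimed identity. The only delicate step is the vanishing of the martingale part, which requires checking that $\mathbf E_{\nu^\epsilon}\!\int_0^T|v^\epsilon(T-t,\omega(t))|^2\mathrm dt<\infty$; this is where the tacit regularity of $\rho^\epsilon$ (boundedness of $\nabla\log\rho^\epsilon$ on the support of the law) and of $b^\epsilon$ must be invoked, exactly as in the Girsanov step that produced \eqref{ean:entropyderivation}. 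All remaining manipulations are algebraic cancellations and an application of Fubini, so the substantive work is concentrated in that integrability verification.
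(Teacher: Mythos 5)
Your proposal is correct and follows essentially the same route as the paper: both start from the It\^o form of $\log\frac{\mathrm d\nu^\epsilon_x}{\mathrm d\rev\nu^\epsilon_x}$ in \eqref{ean:entropyderivation}, use that under $\nu^\epsilon$ the path solves \eqref{eqn:SDEentropy} so that the stochastic integral against $\mathrm d\omega(t)-b^\epsilon(\omega(t))\mathrm dt=\sqrt\epsilon\,\mathrm dB(t)$ is a mean-zero martingale, cancel the $b^\epsilon\cdot v^\epsilon$ terms, and finish with Fubini and the time marginals $\rho^\epsilon(t,\cdot)$. The only difference is cosmetic (you substitute the SDE and then cancel, the paper regroups into $|v^\epsilon|^2$ plus the martingale increment first), and your extra attention to the $L^2$ integrability needed for the martingale to have zero expectation is a point the paper leaves implicit.
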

\begin{proof}
Recall in \eqref{ean:entropyderivation}, we have 
\begin{equation*}
\begin{aligned}
    \frac{\mathrm{d} \nu_x^\e}{\mathrm{d} \rev \nu_x^\e}(\omega)
    =&\ \exp\left\{ \frac{2}{\epsilon} \int_0^T v^\epsilon(T-t, \omega(t)) \mathrm{d} \omega(t) - \frac{2}{\epsilon} \int_0^T \left( b^\epsilon(\omega(t)) - v^\epsilon(T-t, \omega(t)) \right)   v^\epsilon(T-t, \omega(t)) \, \mathrm{d}t \right\} \\
    =&\ \exp\left\{ \frac{2}{\epsilon} \int_0^T \left|v^\epsilon(T-t, \omega(t)) \right|^2 \mathrm{d}t + \frac{2}{\epsilon} \int_0^T  v^\epsilon(T-t, \omega(t))   (\mathrm{d} \omega(t)- b^\epsilon(\omega(t))\mathrm{d}t) \right\} .
\end{aligned}
\end{equation*}
Under measure $\nu^\e$, $\omega$ is a trajectory of the the solution process for \eqref{eqn:SDEentropy}, thus we have,
\begin{equation*}
\begin{aligned}
     &\ \E_{\nu^\epsilon} \left[ s_\mathrm{tot}(T, \omega) - s_\mathrm{tot}(0, \omega) \right]\\
     =&\ \E_{\nu^\epsilon} \left[\log\frac{\mathrm{d} \nu_x^\e}{\mathrm{d} \rev \nu_x^\e}(\omega)\right]\\
    =&\ \E_{\nu^\epsilon} \left[ \frac{2}{\epsilon} \int_0^T \left|v^\epsilon(T-t, \omega(t)) \right|^2 \mathrm{d}t + \frac{2}{\epsilon} \int_0^T  v^\epsilon(T-t, \omega(t))   (\mathrm{d} \omega(t)- b^\epsilon(\omega(t))\mathrm{d}t) \right] \\
    =&\ \E_{\nu^\epsilon} \left[ \frac{2}{\epsilon} \int_0^T \left|v^\epsilon(T-t, \omega(t)) \right|^2 \mathrm{d}t + \frac{2}{\epsilon} \int_0^T  v^\epsilon(T-t, \omega(t))   \sqrt{\e}\mathrm{d}B(t) \right]\\
    =&\ \frac{2}{\epsilon} \int_0^T\int_{\mathbb{R}^d} \left|v^\epsilon(T-t, x) \right|^2 \rho^\e(t,x)\d x\mathrm{d}t.
\end{aligned}
\end{equation*}
The proof is complete.
\end{proof}

It is readily observed that when $\nu^\epsilon|_{t=0}$ corresponds to the invariant distribution of the system \eqref{eqn:SDEentropy}, the results, Theorems \ref{theo:secondlaw}, \ref{theo:IFT}, \ref{theo:DFT} and \ref{theo:EPRformula}, naturally reduce to those established in \cite{seifert2005entropy,boffi2024deep}. Furthermore, we note that a discrete analogue of the aforementioned derivation has been utilized to analyze the entropy production formula for jump-diffusion processes \cite{huang2026entropy}.

\subsection{Potential energy representation of total entropy production}
In this subsection, we establish a connection between stochastic thermodynamics and the results presented in Section \ref{sec:potentialenergy}. Recall that the irreversibility of a non-equilibrium process is quantified through the time-forward and time-reversed measures, $\nu^\epsilon_x$ and $\rev{\nu}^\epsilon_x$, as shown in \eqref{ean:entropyderivation}. Consequently, the entropy decomposition in \eqref{eqn:entropyrate} is derived by leveraging the Fokker--Planck equation \eqref{eqn:FPE}.

In Section \ref{sec:potentialenergy}, we have discussed time-reversals of general path measures. In particular, when $\Phi^\e$ takes the form \eqref{phi} and $V$ is not explicitly time-dependent, it follows from \eqref{eqn-19} that
\begin{equation}\label{ean:8-v/ve}
  \log \frac{\mathrm{d} \nu^\e_x}{\mathrm{d} \rev\nu^\e_x} (\omega) =  \log \rho^\epsilon(T, \omega(0)) - \log \rho^\epsilon(0, \omega(T)) + \frac{2}{\epsilon} \big[g^\e(\omega(0)) - g^\e(\omega(T))\big].
\end{equation}
As our system \eqref{eqn:SDEentropy} has a time-independent drift, we see from Remark \ref{remark:5.5}-(iii) that to identify $\nu^\epsilon_x$ with the law of \eqref{eqn:SDEentropy} we need the drift $b^\e$ to be the gradient field $b^\e = -\nabla g^\e$, $V$ to be time-independent and $g$ solving the stationary 2nd-order HJ equation \eqref{time-ind-HJ}.
In this case, the medium entropy production \eqref{def:mEP} is given by
\begin{equation*}
  \mathrm{d}s_\mathrm{m}(t, \omega) = \frac{2}{\epsilon} b^\epsilon(\omega(t)) \circ \mathrm{d}\omega(t) = - \frac{2}{\epsilon} \mathrm{d} [g^\e(\omega(t))].
\end{equation*}
Then we have, using the definition \eqref{def:sysEP} of stochastic entropy,
\begin{equation*}
  \begin{aligned}
    \log \frac{\mathrm{d} \nu^\epsilon_x}{\mathrm{d} \rev{\nu}^\epsilon_x} (\omega) 
    &= \log \rho^\epsilon(T, \omega(0)) - \log \rho^\epsilon(0, \omega(T)) + \frac{2}{\epsilon} \big[g^\e(\omega(0)) - g^\e(\omega(T))\big] \\
    &= s_\mathrm{sys}(T, \omega) - s_\mathrm{sys}(0, \omega) + s_\mathrm{m}(T, \omega) - s_\mathrm{m}(0, \omega) \\
    &= s_\mathrm{tot}(T, \omega) - s_\mathrm{tot}(0, \omega),
  \end{aligned}
\end{equation*}
which once again recovers \eqref{eqn:totalderivation}.

On the other hand, consider the stochastic gradient system \eqref{SDE-2} with time-independent gradient field:
\begin{align}\label{SDE-2-time-idpdt}
  \mathrm dX^\e_x(t)= -\nabla S^\e(X^\e_x(t))\mathrm d t + \sqrt\e \mathrm dB(t), \quad X^\e_x (0) =x,
\end{align}
where $S^\e$ satisfies the stationary version of the 2nd-order HJ equation \eqref{2nd-order HJ}.
We can provide a new decomposition of the irreversibility in terms of the pair of 2nd-order HJ equations \eqref{2nd-order HJ} and \eqref{2nd-order HJ-rev}, as follows: recalling \eqref{ean:8-v/ve},
\begin{equation*}
  \log \frac{\mathrm{d} \nu^\e_x}{\mathrm{d} \rev\nu^\e_x} (\omega) = \frac{1}{\e} \left[ \widetilde S^\e(0,\omega(T)) - S^\e(\omega(T)) \right]  - \frac{1}{\e} \left[ \widetilde S^\e(T,x) - S^\e(x) \right].
\end{equation*}
Comparing the last equality with \eqref{eqn:totalderivation}, we can suggest the following definition of the stochastic total entropy:
\begin{equation}\label{total-entropy}
  s_\mathrm{tot}(t, \omega) = \frac{1}{\e} \left[ \widetilde S^\e(T-t,\omega(t)) - S^\e(\omega(t)) \right].
\end{equation}

As the right-hand side of \eqref{ean:8-v/ve} depends solely on the initial and terminal states of the path $\omega \in \mathcal{C}^{d,T}$, but not on the entire trajectory, this `path-independence' property is characteristic of entropy production and medium entropy production for systems with time-independent drifts. Consequently, this provides an alternative interpretation of the irreversibility of thermodynamic systems from the perspective of the Born-type formula for time marginals in Section \ref{subsec-Born}. Specifically, the non-equilibrium property (i.e., the irreversibility) of a stochastic thermodynamic process $\nu^\e_x$ given in \eqref{mu-tilde} is precisely captured by the probabilistic decomposition \eqref{eqn:nu-Born} of its total measure. This perspective establishes a novel bridge between Euclidean quantum mechanics (the analogy suggested long ago by Schr\"odinger \cite{Schrodinger1932,CruzeiroZambrini1991}) and stochastic thermodynamics.
Let us summarize these observations as the following theorem.
\begin{thm}\label{theo:irreversibilityand2ndHJ}
  Under the assumptions of Theorem \ref{Cor-sde}, suppose further that $V\in C^1_b(\mathbb{R}^d)$ and $f^\e, g^\e \in C_b(\mathbb{R}^d)$. Then, the stochastic total entropy of the stochastic gradient system \eqref{SDE-2-time-idpdt} is determined by the difference \eqref{total-entropy} of the associated backward and stationary forward second-order Hamilton-Jacobi equations, \eqref{2nd-order HJ} and \eqref{2nd-order HJ-rev}.
\end{thm}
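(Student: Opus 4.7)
The plan is to pipeline three earlier results: the time-independent case of Theorem \ref{Cor-sde} (Remark (i) following it), the path-independence formula \eqref{ean:8-v/ve} for $\log(\d\nu^\e_x/\d\rev\nu^\e_x)$, and the boundary data of the coupled forward and backward second-order HJ equations \eqref{2nd-order HJ} and \eqref{2nd-order HJ-rev}. Remark (i) identifies the law of \eqref{SDE-2-time-idpdt} with $\nu^\e_x$, provided $V$ is time-independent and $S^\e = -g^\e$ (itself time-independent) satisfies the stationary HJ equation \eqref{time-ind-HJ}. In this regime \eqref{ean:8-v/ve} reads
\begin{equation*}
\log \frac{\d \nu^\e_x}{\d \rev\nu^\e_x}(\omega) = \log Z^\e_{\Psi^\e}(x) - \log Z^\e_{\Phi^\e}(x) + \frac{1}{\e}\bigl[f^\e(\omega(T)) - g^\e(\omega(T))\bigr].
\end{equation*}

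Next I translate the four scalars on the right-hand side into evaluations of $S^\e$ and $\widetilde S^\e$ using the boundary data of the coupled HJ system: the initial condition of \eqref{2nd-order HJ} together with the stationarity of $S^\e$ gives $\e \log Z^\e_{\Phi^\e}(x) = S^\e(x)$; the terminal condition of \eqref{2nd-order HJ-rev} gives $\e \log Z^\e_{\Psi^\e}(x) = \widetilde S^\e(T,x)$; and the remaining two conditions give $g^\e = -S^\e$ and $f^\e = -\widetilde S^\e(0,\cdot)$. Substituting and regrouping yields
\begin{equation*}
\log \frac{\d \nu^\e_x}{\d \rev\nu^\e_x}(\omega) = \frac{1}{\e}\bigl[S^\e(\omega(T)) - \widetilde S^\e(0,\omega(T))\bigr] - \frac{1}{\e}\bigl[S^\e(x) - \widetilde S^\e(T,x)\bigr].
\end{equation*}
Since $\omega(0) = x$, the two bracketed quantities are precisely $\e\, s_\mathrm{tot}(T,\omega)$ and $\e\, s_\mathrm{tot}(0,\omega)$ under the proposed definition \eqref{total-entropy}. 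Combining with \eqref{eqn:totalderivation} then identifies the total entropy production along the path with the jump of $S^\e - \widetilde S^\e(T-\cdot,\cdot)$, establishing the claim.

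As a consistency check I would also reproduce the formula directly from the microscopic decomposition \eqref{eqn:entropydecomposition}. The Cole--Hopf identities $S^\e = \e \log \phi^\e$, $\widetilde S^\e = \e \log \psi^\e$ and the Born-type formula $\rho^\e(t,\cdot) = \e^{-1/2} \phi^\e \psi^\e(t,\cdot)$ of Lemma \ref{lem-Born} give $s_\mathrm{sys}(t,\omega) = \tfrac12 \log \e - \tfrac1\e[S^\e(\omega(t)) + \widetilde S^\e(T-t,\omega(t))]$, while the Stratonovich fundamental theorem of calculus applied to the exact drift $b^\e = \nabla S^\e$ integrates \eqref{def:mEP} to $s_\mathrm{m}(t,\omega) = \tfrac{2}{\e}[S^\e(\omega(t)) - S^\e(x)]$. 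Their sum reproduces \eqref{total-entropy} modulo an $x$-dependent constant that cancels in any entropy difference. The only real obstacle is bookkeeping the time-reversal conventions: $S^\e$ may be taken stationary because $V$ is, whereas $\widetilde S^\e$ must retain its time argument since its boundary datum $f^\e$ fixes the marginal at $t=0$ rather than at $t=T$; once this and the four HJ boundary conditions are threaded consistently, the identity becomes purely algebraic.
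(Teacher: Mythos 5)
Your proposal is correct and follows essentially the same route as the paper: it passes through the path-independent formula \eqref{ean:8-v/ve}, converts $\e\log Z^\e_{\Phi^\e}$, $\e\log Z^\e_{\Psi^\e}$, $g^\e$ and $f^\e$ into boundary values of $S^\e$ and $\widetilde S^\e$ via the HJ boundary conditions, and matches the result against \eqref{eqn:totalderivation} to read off \eqref{total-entropy}. The Cole--Hopf/Born-formula consistency check is a nice addition not carried out in the paper, and your observation that the identification only holds modulo an additive constant that cancels in entropy differences is exactly the right caveat.
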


Furthermore, our measure-theoretical framework is capable of addressing more general processes, far beyond \eqref{SDE-2} and \eqref{SDE-Markovian} where the drift fields explicitly depend on the time variable. It can provide a novel perspective on the decomposition of total entropy production (cf. \eqref{eqn:RNnu}) and elucidates how the path-independent decomposition explicitly relates to the time-dependent nature of drift terms, offering an alternative to the earlier decomposition in \eqref{eqn:entropydecomposition}. From a thermodynamic standpoint, this formulation highlights the intrinsic connection between entropy production and the interplay of forward and backward Schr\"odinger bridges. This connection has the potential to yield new physical insights, which we aim to explore further in future research.

\section{Conclusion}\label{Conclusion}
Although Schr\"odinger observation \cite{Schrodinger1932} was initially designed as a classical statistical physics (or ``Euclidean'') analogy with quantum theory, and allowed, indeed, to turn into rigorous statements some fundamental intuitions of Feynman (for instance his path integral version of the canonical commutation relation between position and momentum observables (cf.~(4.36), (5.25) in the second part of \cite{chung2003introduction}), Schr\"odinger's idea can and should also be considered from its intrinsically statistical physics viewpoint, as was done here.
A major difference between the first and second viewpoints is the role of time-reversal. To make probabilistic sense of Feynman's commutation relation, we do not really use the time-reversed process $\rev X^\e_x$ of Corollary \ref{Cor-sde-initial-reversed}; more precisely, we use this one only to compute its (usual) drift in terms of the backward drift of Schr\"odinger--Bernstein diffusion $X^\e_x$ \cite[Eq.~(5.22)]{chung2003introduction} (where ``backward'' refers to the time increment at $t$). The coexistence of two distinct time derivatives at a given $t$ is what allowed Feynman to deduce that quantum paths are non-differentiable \cite[Eq.~(5.25)]{chung2003introduction}.

As quantum theory itself, its Euclidean analogy suggested by Schr\"odinger is perfectly invariant under time-reversal (when the potential $V$ is time-independent); the exchange of the given boundary data at the r.h.s. of equation \eqref{Sch-sys} provides another diffusion in the same class. 
This may seem surprising since the drifts involved are generally time-dependent. The formulation of the original observation of Schr\"odinger shows clearly that it is concerned with very improbable phenomena (think about the example he considered himself of the one-dimensional heat equation with a final probability far from its usual decayed evolution). It is this final conditioning which produces an optimal reversible evolution on $[0,T]$.

On the other hand, there is a true time-reversal involved in Corollary \ref{Cor-sde-initial-reversed}, which is at the origin of the results of Section \ref{app-3} regarding the irreversibility of non-equilibrium thermodynamics.
After the rediscovery of Schr\"odinger's idea by the communities of (stochastic) geometric mechanics \cite{huang2023second}, mass transportation \cite{leonard2014survey,mikami2021stochastic}, stochastic optimal control \cite{mikami2004monge,mikami2021stochastic}, we are convinced that it has still a lot to teach us in the debate about reversibility/irreversibility in statistical, quantum mechanics and other fields of science \cite{Galichon2018}. The present work was focused on the fundamental aspect of entropy production
but it has been known for a long time that there is a whole analogy between Thermodynamics and classical Mechanics ((Cf  Refs  \cite{antoniou2002caratheodory,peterson1979analogy}, for instance). This means that part of the tools introduced in  Ref \cite{huang2023second} should also be useful in elaborating the relations with stochastic Thermodynamics.
The consequences of the old and almost forgotten idea of Schr\"{o}dinger deserve clearly to be investigated further.

\begin{appendices}

\section{Path space: An introduction}\label{appendixA}

Here we provide a more detailed description of the path-space operations discussed in the main text, along with further explanations of the fundamental concepts used throughout this paper.

\subsection*{Scaling} 

For an $\e>0$, the scaling map $\delta_\e : \mathcal{C}^{d,T} \to \mathcal{C}^{d,T}$ is defined as $\delta_\e \omega = \sqrt\e \omega$. 
We denote by $\mu^\e_0 := (\delta_\e)_* \mu_0 = \mu_0 \circ \delta_\e^{-1}$ the $\e$-scaling of the Wiener measure $\mu_0$. 
The scaling measure $\mu^\e_0$ is the probability distribution of the $\e$-scaled Brownian motion $\sqrt\e W$.

It should be noted that this type of $\e$-scaling notation is not applied to a general measure $\nu$ (as it relies on a special symmetry of the Wiener).

\subsection*{Shifts}

For a path $\gamma \in \mathcal{C}^{d,T}$, the shift map associated with $\gamma$ is the map $T_\gamma: \mathcal{C}^{d,T} \to \mathcal{C}^{d,T}$ defined by $T_\gamma\omega = \omega + \gamma$. For a measure $\nu$ on $\mathcal{C}^{d,T}$, the pushforward $(T_\gamma)_*\nu = \nu \circ T_\gamma^{-1}$ is called the shift measure of $\nu$ by $\gamma$.

Let $\mathcal{H}^{d,T}_0 := H_0^1([0,T]; \R^d)$ be the Hilbert space of all $\gamma\in \mathcal{C}^{d,T}_0$ such that each component of $\gamma(t) = (\gamma^1(t), \cdots, \gamma^d(t))$ is absolutely continuous in $t$ and has square-integrable derivatives, equipped with the norm
\begin{equation}\label{CM-norm}
  \|\gamma\|_{H_0^1} := \|\dot \gamma\|_{L^2} = \int_0^T |\dot \gamma(t)|^2 dt.
\end{equation}
This $\mathcal{H}^{d,T}_0$ is called the Cameron--Martin subspace of $\mathcal{C}^{d,T}_0$. For $\gamma \in \mathcal{H}^{d,T}_0$, the Wiener integral 
\begin{equation*}
  i(\gamma)(\omega) = \int_0^T \dot \gamma(t) \d\omega(t).
\end{equation*}
is well-defined and $i(\gamma) \in L^2 (\mathcal{C}^{d,T}_0,\mathcal B(\mathcal{C}^{d,T}_0), \mu_0)$. Indeed, by It\^o's isometry, we have $\E_{\mu_0} |i(\gamma)|^2 = \|\gamma\|_{H_0^1}^2$, that is, the linear mapping $i: \mathcal{H}^{d,T}_0 \to L^2 (\mathcal{C}^{d,T}_0,\mathcal B(\mathcal{C}^{d,T}_0), \mu_0)$ is an isometry.

The Cameron--Martin theorem states that the shift measure $(T_\gamma)_*\mu_0$ is absolutely continuous with respect to $\mu_0$ if and only if $\gamma \in \mathcal{H}^{d,T}_0$; moreover, the Radon--Nikodym derivative of $(T_\gamma)_*\mu_0$ with respect to $\mu_0$ is
\begin{equation*}
  \frac{\d (T_\gamma)_*\mu_0}{\d \mu_0}(\omega) = \exp \left(i(\gamma)(\omega) - \frac{1}{2} \|\gamma\|_{H_0^1}^2\right).
\end{equation*}

For each $x\in \R^d$, we define
\begin{equation*}
  \mathcal{C}^{d,T}_x := T_x \mathcal{C}^{d,T}_0, \quad \mathcal{H}^{d,T}_x := T_x \mathcal{H}^{d,T}_0,
\end{equation*}
where $T_x$ is the shift map associated with $h\equiv x$.
The norm \eqref{CM-norm} can be extended to the whole space $\mathcal{H}^{d,T} := \cup_{x\in\R^d} \mathcal{H}^{d,T}_x$ as a seminorm. We also denote $\mu_x^\e := (T_x)_* \mu^\e_0 = (T_x)_* (\delta_\e)_* \mu_0$,
that is,
\begin{equation}\label{mu0-scaling}
  \mu_x^\e(A) = \mu_0(\delta_\e^{-1} T_x^{-1} A) = \mu_0((A-x)/\sqrt\e), \quad \forall A \in \mathcal B(\mathcal{C}^{d,T}_x),
\end{equation}
or equivalently, for any measurable $f: \mathcal{C}^{d,T}_x \to \R$,
\begin{equation*}
  \int_{\mathcal{C}^{d,T}_x} f(\omega) \mu_x^\e(\d\omega) = \int_{\mathcal{C}^{d,T}_0} f(\omega_x^\e) \mu_0(\d\omega).
\end{equation*}
This means that the law of $x+\sqrt\e W$ under $\mu_0$ is $\mu_x^\e$. It also implies that the law of $(W-x)/\sqrt\e$ under $\mu_x^\e$ is $\mu_0$, or equivalently, $W = x+\sqrt\e B$ where $B$ is a standard Brownian motion under $\mu_x^\e$.

\subsection*{Time-marginals}

For any $t\in [0,T]$, let $\pi_t : \mathcal{C}^{d,T} \to \R^d$ be the projection map at time $t$, given by $\pi_t(\omega) = \omega(t)$. One can regard each $\pi_t$ as a random vector on $(\mathcal{C}^{d,T},\mathcal B(\mathcal{C}^{d,T}))$.
For a measure $\nu$ on $(\mathcal{C}^{d,T},\mathcal B(\mathcal{C}^{d,T}))$, we define its marginal at time $t$ by $\nu|_t := (\pi_t)_* \nu = \nu \circ \pi_t^{-1}$, as a measure on $\R^d$.

The time marginals of the Wiener measure $\mu_0$ have the following Lebesgue densities, known as heat kernels:
\begin{equation*}
  \rho_0(t, x) := \frac{\mathrm{d} \mu_0|_t (x)}{\mathrm{d} x} = \frac{1}{(2\pi t)^{d/2}}e^{-\frac{|x|^2}{2t}}, \quad (t,x) \in [0,T] \times \R^d.
\end{equation*}
The marginal of  $\mu_x^\e$ at time $t$ has the Lebesgue density $\rho_x^\e(t, \cdot) = \rho_0(\e t, \cdot - x)$.

\subsection*{Conditioning}

A Borel measurable map $f: \mathcal{C}^{d,T} \to \R^d$ can be regarded as a random element on $(\mathcal{C}^{d,T},\mathcal B(\mathcal{C}^{d,T}))$. One can define the conditional expectation of a $\sigma$-finite measure $\nu$ given $f$, denoted as $\E_\nu(\cdot | f) := \E_\nu(\cdot | \sigma(f))$, as well as the regular conditional measure of $\nu$ given $f=x\in\R^d$, denoted as $\nu(\cdot | f = x)$. They satisfy the relation 
\begin{equation}\label{cond-exp}
  \E_\nu(g | f) = \E_{\nu(\cdot | f = x)}(g)|_{x = f}
\end{equation}
for any $\nu$-integrable function $g: \mathcal{C}^{d,T} \to \R$. For this reason, we shall denote the expectation with respect to the regular conditional measure $\nu(\cdot | f = x)$ by
\begin{equation*}
  \E_\nu(g | f = x) := \E_{\nu(\cdot | f = x)}(g).
\end{equation*}

Recall that $\{ \nu(\cdot | f = x): x\in\R^d \}$ is a transition kernel on $\R^d \times \mathcal B(\mathcal{C}^{d,T})$ such that for all $ A \in \mathcal B(\mathcal{C}^{d,T})$ and $U\in \mathcal B(\R^d)$,
\begin{equation*}
  \nu(A \cap f^{-1}(U)) = \int_U \nu(A | f = x) (f_*\nu)(\d x).
\end{equation*}
If $X: (\mathcal{C}^{d,T},\mathcal B(\mathcal{C}^{d,T})) \to (\mathcal{C}^{d,T},\mathcal B(\mathcal{C}^{d,T}))$ is a measurable map, then
\begin{equation*}
  \begin{split}
    (X_* \nu) (A \cap f^{-1} (U)) &= \nu (X^{-1}(A) \cap (f\circ X)^{-1} (U)) \\
    &= \int_U \nu(X^{-1}(A) \mid f\circ X=x) f_* (X_*\nu)(\d x).
  \end{split}
\end{equation*}
This implies the following formula for regular conditional pushforward measures
\begin{equation}\label{reg-cond-push}
  (X_* \nu) (\cdot \mid f=x) = \nu(X^{-1}(\cdot) \mid f\circ X=x).
\end{equation}

The following lemma is adapted from \cite[Theorem 1]{Leo14b}. Cf. also \cite[Theorem D.13]{DZ98}.

\begin{lem}\label{lem-0}
  Let $\nu$ and $\eta$ be two $\sigma$-finite measures on $(\mathcal{C}^{d,T},\mathcal B(\mathcal{C}^{d,T}))$ satisfying $\nu \ll \eta$. Let $f: \mathcal{C}^{d,T} \to \R^d$ be a Borel measurable map. Then
  \begin{itemize}
    \item[(i)] $f_*\nu \ll f_*\eta$ and
    \begin{equation*}
      \frac{\d f_*\nu}{\d f_*\eta} (x) = \E_{\eta(\cdot | f = x)} \left( \frac{\d\nu}{\d\eta} \right), \quad f_*\eta \text{-a.s. } x \in \R^d;
    \end{equation*}
    \item[(ii)] for $f_*\eta$-a.s. $x\in\R^d$, $\nu(\cdot | f = x) \ll \eta(\cdot | f = x)$ and
    \begin{equation*}
      \frac{\d\nu}{\d\eta}(\omega) = \frac{\d f_*\nu}{\d f_*\eta} (f(\omega)) \frac{\d\nu(\cdot | f = x)}{\d\eta(\cdot | f = x)}(\omega) \bigg|_{x = f(\omega)}, \quad \eta \text{-a.s. } \omega \in \mathcal{C}^{d,T}.
    \end{equation*}
  \end{itemize}
\end{lem}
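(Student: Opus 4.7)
The plan is to derive both parts directly from the change-of-variables identity for pushforward measures combined with the (regular) disintegration of $\eta$ along $f$. This disintegration is available because $\mathcal{C}^{d,T}$ is Polish (and hence so is $\R^d$) and both measures are $\sigma$-finite.

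For part (i), absolute continuity $f_*\nu \ll f_*\eta$ is immediate: any Borel $U\subset\R^d$ with $f_*\eta(U)=\eta(f^{-1}(U))=0$ forces $f_*\nu(U)=\nu(f^{-1}(U))=0$ by $\nu\ll\eta$. To identify the density, set $\rho:=\mathrm d\nu/\mathrm d\eta$, fix any Borel $U\subset\R^d$, and compute
\[
f_*\nu(U)=\int_{f^{-1}(U)}\rho\,\mathrm d\eta=\int_U \E_{\eta(\cdot|f=x)}[\rho]\,(f_*\eta)(\mathrm dx),
\]
where the second equality disintegrates $\eta$ along $f$ applied to the integrand $\ind_{f^{-1}(U)}\rho=(\ind_U\circ f)\cdot\rho$. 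Uniqueness of the Radon--Nikodym derivative then yields the claimed formula.

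For part (ii), the strategy is to exhibit an explicit conditional measure and appeal to essential uniqueness of regular conditional distributions. Let $h(x):=\E_{\eta(\cdot|f=x)}[\rho]$ be the density obtained in (i), and note that $\{h=0\}$ carries no $f_*\nu$-mass since $f_*\nu=h\cdot f_*\eta$. On $\{h>0\}$, define
\[
\tilde\nu_x(A):=\frac{1}{h(x)}\int_A \rho(\omega)\,\eta(\mathrm d\omega|f=x),\quad A\in\mathcal B(\mathcal{C}^{d,T}),
\]
which is a probability measure, Borel measurable in $x$, and absolutely continuous with respect to $\eta(\cdot|f=x)$ with density $\rho/h(x)$. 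Applying disintegration once more yields, for all Borel $A\subset\mathcal{C}^{d,T}$ and $U\subset\R^d$,
\[
\nu(A\cap f^{-1}(U))=\int_{A\cap f^{-1}(U)}\rho\,\mathrm d\eta=\int_U h(x)\tilde\nu_x(A)\,(f_*\eta)(\mathrm dx)=\int_U\tilde\nu_x(A)\,(f_*\nu)(\mathrm dx),
\]
so $\{\tilde\nu_x\}$ is a version of the regular conditional distribution of $\nu$ given $f$. Essential uniqueness gives $\nu(\cdot|f=x)=\tilde\nu_x$ for $f_*\nu$-a.e.\ $x$; reading off the density of $\tilde\nu_x$ with respect to $\eta(\cdot|f=x)$ and substituting the formula from (i) for $h(x)$ yields the asserted factorization.

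The main technical delicacy is the exceptional set $\{h=0\}$, on which $\rho=0$ $\eta(\cdot|f=x)$-a.s.\ and the candidate $\tilde\nu_x$ above is ill-defined; since this set carries no $f_*\nu$-mass, one redefines $\nu(\cdot|f=x)$ there arbitrarily (e.g., as $\eta(\cdot|f=x)$) to maintain absolute continuity $f_*\eta$-a.s. Beyond this, the argument reduces to Borel measurability of $x\mapsto\tilde\nu_x(A)$ and the uniqueness part of the disintegration theorem on Polish spaces, both standard; the only conceptually nonroutine ingredient is that the Radon--Nikodym derivative ``commutes'' with disintegration, which is precisely the content encapsulated by $\E_{\eta(\cdot|f=x)}[\rho]$.
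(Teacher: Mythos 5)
Your argument is correct in substance, and it is worth noting that the paper does not actually prove this lemma: it is quoted as an adaptation of Theorem~1 in L\'eonard's \emph{Some properties of path measures} (the reference \texttt{Leo14b}), with a pointer to Theorem~D.13 of Dembo--Zeitouni. Your route --- disintegrate $\eta$ along $f$, identify $h(x)=\E_{\eta(\cdot|f=x)}[\rho]$ as the density of $f_*\nu$ by testing against Borel sets $U$, then exhibit $\tilde\nu_x = h(x)^{-1}\rho\,\eta(\cdot|f=x)$ as a disintegration of $\nu$ and invoke essential uniqueness of regular conditional distributions --- is precisely the standard argument underlying the cited result, so you are supplying the omitted proof rather than taking a genuinely different path. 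Your handling of the exceptional set $\{h=0\}$ is also right: it is $f_*\nu$-null, $\rho$ vanishes $\eta(\cdot|f=x)$-a.s.\ there, and the factorization in (ii) holds trivially ($0=0$) for $\eta$-a.e.\ $\omega$ with $f(\omega)\in\{h=0\}$, whatever version of $\nu(\cdot|f=x)$ one fixes.

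One small point of care: $\sigma$-finiteness of $\eta$ on the path space does not by itself guarantee that $f_*\eta$ is $\sigma$-finite on $\R^d$, and the existence of the regular conditional (probability) measures $\eta(\cdot|f=x)$ --- hence the disintegration you use at the outset --- requires $f_*\eta$ to be $\sigma$-finite. This hypothesis is implicit in the lemma as stated (the conditional expectations $\E_{\eta(\cdot|f=x)}$ would otherwise be undefined) and is explicit in L\'eonard's formulation; it holds in all the applications in the paper, where $f$ is a time projection and the pushforwards are Lebesgue-type measures. You should state it rather than attribute the disintegration solely to the Polish structure and $\sigma$-finiteness of $\eta$ itself.
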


As the time $t$ projection $\pi_t : \mathcal{C}^{d,T} \to \R^d$, $t\in[0,T]$, is a random vector on $(\mathcal{C}^{d,T},\mathcal B(\mathcal{C}^{d,T}))$, the conditional expectation $\E_\nu(\cdot | \pi_t) = \E_\nu(\cdot | \omega(t))$ of $\nu$ given $\pi_t$ is well-defined, so is the regular conditional measure $\nu(\cdot | \omega(t) = x)$ given $\pi_t(\omega) = \omega(t) = x\in\R^d$.
In particular, for the time 0 projection $\pi_0$, we denote the regular conditional measure
\begin{equation*}
  \nu_x(\d\omega) := \nu( \d\omega | \omega(0) = x).
\end{equation*}
The disintegration theorem, implied by \eqref{cond-exp}, says that
\begin{equation}\label{disint}
  \nu(\d\omega) = \int_{\R^d} \nu_x(\d\omega) \nu|_{t=0}(\d x).
\end{equation}
where $\nu|_{t=0}$ is the marginal of $\nu$ at time $t=0$.
In other words, the measure $\nu$ is determined by its `initial measure' $\nu|_{t=0}$ and `transition measures' $\nu_x$.
Consequently, Lemma \ref{lem-0} implies that

\begin{cor}\label{cor-0}
  Let $\nu$ and $\eta$ be two $\sigma$-finite measures on $(\mathcal{C}^{d,T},\mathcal B(\mathcal{C}^{d,T}))$ satisfying $\nu \ll \eta$. Then
  \begin{itemize}
    \item[(i)] for every $t\in[0,T]$, $\nu|_t \ll \eta|_t$ and
    \begin{equation*}
      \frac{\d\nu|_t}{\d\eta|_t} (x) = \E_{\eta(\cdot | \omega(t) = x)} \left( \frac{\d\nu}{\d\eta} \right), \quad \eta|_t \text{-a.s. } x \in \R^d;
    \end{equation*}
    \item[(ii)] for $\eta|_{t=0}$-a.s. $x\in\R^d$, $\nu_x \ll \eta_x$ and
    \begin{equation*}
      \frac{\d\nu}{\d\eta}(\omega) = \frac{\d\nu|_{t=0}}{\d\eta|_{t=0}} (\omega(0)) \frac{\d\nu_x}{\d\eta_x}(\omega) \bigg|_{x = \omega(0)}, \quad \eta \text{-a.s. } \omega \in \mathcal{C}^{d,T}.
    \end{equation*}
  \end{itemize}
\end{cor}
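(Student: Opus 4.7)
The plan is to derive both parts as direct specializations of Lemma \ref{lem-0} to the projection maps $\pi_t:\mathcal{C}^{d,T}\to\R^d$, $\pi_t(\omega)=\omega(t)$. Each $\pi_t$ is continuous (hence Borel measurable) on $(\mathcal{C}^{d,T},\mathcal B(\mathcal{C}^{d,T}))$, since the Borel $\sigma$-field on $\mathcal{C}^{d,T}$ contains all coordinate projections. This is the only measurability check needed before invoking the lemma.

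For part (i), I take $f=\pi_t$ in Lemma \ref{lem-0}. By the very definition of time-marginals, $(\pi_t)_*\nu = \nu|_t$ and $(\pi_t)_*\eta=\eta|_t$, so Lemma \ref{lem-0}-(i) immediately yields $\nu|_t\ll\eta|_t$ together with
$$
  \frac{\d\nu|_t}{\d\eta|_t}(x)=\E_{\eta(\cdot\mid\pi_t=x)}\!\left(\frac{\d\nu}{\d\eta}\right),\qquad \eta|_t\text{-a.s.\ }x\in\R^d.
$$
Since by the definition introduced just before Corollary \ref{cor-0} we have $\eta(\cdot\mid\pi_t=x)=\eta(\cdot\mid\omega(t)=x)$, this is exactly the displayed identity of (i).

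For part (ii), I take $f=\pi_0$. Then $(\pi_0)_*\nu=\nu|_{t=0}$ and $(\pi_0)_*\eta=\eta|_{t=0}$, and by the definition of the transition measures, $\nu(\cdot\mid\pi_0=x)=\nu_x$ and $\eta(\cdot\mid\pi_0=x)=\eta_x$. Lemma \ref{lem-0}-(ii) then gives the absolute continuity $\nu_x\ll\eta_x$ for $\eta|_{t=0}$-a.s.\ $x$, together with the factorization
$$
  \frac{\d\nu}{\d\eta}(\omega)=\frac{\d(\pi_0)_*\nu}{\d(\pi_0)_*\eta}\!\bigl(\pi_0(\omega)\bigr)\,\frac{\d\nu_x}{\d\eta_x}(\omega)\bigg|_{x=\omega(0)},\qquad \eta\text{-a.s.\ }\omega\in\mathcal{C}^{d,T},
$$
which after replacing $\pi_0(\omega)$ by $\omega(0)$ and identifying the pushforwards as the time-zero marginals reproduces precisely the claimed formula.

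No substantive obstacle arises: the statement is a textbook-style specialization, and the only cosmetic step is matching the notation $\eta(\cdot\mid\omega(t)=x)$ and $\eta_x$ with the abstract conditional measures $\eta(\cdot\mid f=x)$ appearing in Lemma \ref{lem-0}. If desired, one could alternatively give a self-contained argument by verifying the defining integral identity $\int_A (\d\nu/\d\eta)\,\d\eta=\nu(A)$ on preimage sets of the form $A=\pi_t^{-1}(U)$ using the disintegration formula \eqref{disint}, but invoking Lemma \ref{lem-0} is cleaner and avoids redoing the proof at a slightly less general level.
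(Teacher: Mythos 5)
Your proposal is correct and follows exactly the paper's route: the paper introduces the corollary with ``Consequently, Lemma \ref{lem-0} implies that,'' i.e.\ it likewise obtains (i) by taking $f=\pi_t$ in Lemma \ref{lem-0}-(i) and (ii) by taking $f=\pi_0$ in Lemma \ref{lem-0}-(ii), identifying the pushforwards with the time marginals and the regular conditional measures with $\nu_x$, $\eta_x$. Nothing is missing.
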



\subsection*{Kullback--Leibler divergence}

Given two measures $\nu$ and $\eta$ on $(\mathcal{C}^{d,T}, \mathcal B(\mathcal{C}^{d,T}))$, the Kullback--Leibler (KL) divergence (or relative entropy) of $\nu$ with respect to $\eta$ is defined by
\begin{equation*}
  D_{\mathrm{KL}}\left(\nu \| \eta\right) := 
  \begin{cases}
    \E_\nu\left[ \log \left(\frac{\mathrm d \nu}{\mathrm d \eta}\right) \right], & \nu \ll \eta, \\
    \infty, & \text{otherwise}.
  \end{cases}
\end{equation*}
We quote Gibbs' inequality, which states that the above KL divergence takes values in $[0, \infty]$, as
\begin{equation}\label{Gibbs-ineq}
  \E_\nu\left[ \log \left(\frac{\mathrm d \nu}{\mathrm d \eta}\right) \right] = - \E_\nu\left[ \log \left(\frac{\mathrm d \eta}{\mathrm d \nu}\right) \right] \ge \E_\nu\left( 1- \frac{\mathrm d \eta}{\mathrm d \nu} \right) = 0.
\end{equation}
Moreover, the KL divergence equals zero if and only if $\frac{\mathrm d \eta}{\mathrm d \nu} = 1$, i.e., $\nu = \eta$.

The following lemma is taken from \cite[Theorem D.13]{DZ98}. See also \cite[Eq.~(72)]{Leo14b}.

\begin{lem}\label{lemma-KL}
When $\nu \ll \eta$, 
\begin{equation*}
  D_{\mathrm{KL}}\left(\nu \| \eta\right) = D_{\mathrm{KL}}\left(\nu|_{t=0} \| \eta|_{t=0}\right) + \int_{\R^d} D_{\mathrm{KL}}\left(\nu_x \| \eta_x\right) \nu|_{t=0}(\d x).
\end{equation*}
\end{lem}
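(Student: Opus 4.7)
The plan is to derive the additive decomposition of the KL divergence from the multiplicative factorization of Radon--Nikodym derivatives already established in Corollary \ref{cor-0}-(ii), combined with the disintegration \eqref{disint}. Since $\nu \ll \eta$, Corollary \ref{cor-0}-(ii) gives $\nu_x \ll \eta_x$ for $\nu|_{t=0}$-a.s.\ $x \in \R^d$, together with the identity
\begin{equation*}
  \frac{\d\nu}{\d\eta}(\omega) = \frac{\d\nu|_{t=0}}{\d\eta|_{t=0}}(\omega(0)) \cdot \frac{\d\nu_{x}}{\d\eta_{x}}(\omega) \bigg|_{x = \omega(0)}, \qquad \eta\text{-a.s. } \omega \in \mathcal{C}^{d,T}.
\end{equation*}
On the $\nu$-full-measure set where $\d\nu/\d\eta > 0$, both factors on the right are strictly positive, so taking logarithms yields the additive decomposition
\begin{equation*}
  \log \frac{\d\nu}{\d\eta}(\omega) = \log \frac{\d\nu|_{t=0}}{\d\eta|_{t=0}}(\omega(0)) + \log \frac{\d\nu_{\omega(0)}}{\d\eta_{\omega(0)}}(\omega), \qquad \nu\text{-a.s. } \omega.
\end{equation*}

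Next I would integrate both sides against $\nu$. The first summand depends on $\omega$ only through $\pi_0(\omega) = \omega(0)$, so by the change-of-variables formula its $\nu$-integral equals $\int_{\R^d} \log \tfrac{\d\nu|_{t=0}}{\d\eta|_{t=0}}(x)\, \nu|_{t=0}(\d x) = D_{\mathrm{KL}}(\nu|_{t=0} \| \eta|_{t=0})$. For the second summand, I would apply the disintegration \eqref{disint}, writing $\nu(\d\omega) = \int_{\R^d} \nu_x(\d\omega) \nu|_{t=0}(\d x)$, and invoke Fubini--Tonelli to obtain
\begin{equation*}
  \int_{\mathcal{C}^{d,T}} \log \frac{\d\nu_{\omega(0)}}{\d\eta_{\omega(0)}}(\omega)\, \nu(\d\omega) = \int_{\R^d} \left( \int_{\mathcal{C}^{d,T}} \log \frac{\d\nu_x}{\d\eta_x}(\omega)\, \nu_x(\d\omega) \right) \nu|_{t=0}(\d x) = \int_{\R^d} D_{\mathrm{KL}}(\nu_x \| \eta_x)\, \nu|_{t=0}(\d x).
\end{equation*}
Summing the two pieces gives the claimed identity.

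The main technical subtlety will be the legitimacy of splitting the $\nu$-integral when some of the divergences could be infinite and $\log(\d\nu/\d\eta)$ can take negative values. To handle this cleanly, I would decompose $\log = \log^+ - \log^-$ and treat each part separately: Gibbs' inequality \eqref{Gibbs-ineq} ensures that the conditional divergences $D_{\mathrm{KL}}(\nu_x \| \eta_x)$ and $D_{\mathrm{KL}}(\nu|_{t=0} \| \eta|_{t=0})$ are all non-negative, so Tonelli's theorem applies unrestrictedly to the positive parts. For the negative parts, the elementary bound $y \log^- y \le 1/e$ for $y > 0$ makes $\log^-(\d\nu/\d\eta)$ (and analogously each factor, viewed as a density against $\eta$, $\eta|_{t=0}$, or $\eta_x$ respectively) integrable, so Fubini is fully justified. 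In particular, if $D_{\mathrm{KL}}(\nu \| \eta) = \infty$, the factorization forces at least one of the two summands on the right to be infinite as well, and the identity persists as an equality in $[0, \infty]$.
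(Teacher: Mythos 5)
Your proposal is correct and follows essentially the same route as the paper: take the logarithm of the factorization in Corollary \ref{cor-0}-(ii), condition on (equivalently, disintegrate over) $\omega(0)$ via \eqref{disint}, and identify the two resulting terms as the marginal and conditional divergences. The only difference is that you additionally justify the interchange of integrals via the $\log^{+}/\log^{-}$ splitting and the bound $y\log^{-}y\le 1/e$, a point the paper's four-line computation passes over silently; this is a welcome refinement rather than a different argument.
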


The proof follows directly from \eqref{disint} and Corollary \ref{cor-0}-(ii). Indeed, 
\begin{equation*}
  \begin{split}
    D_{\mathrm{KL}}\left(\nu \| \eta\right) &= \E_\nu\left[ \log \left( \frac{\d\nu}{\d\eta} \right) \right] \\
    &= \int_{\R^d} \E_\nu \left[ \log \left( \frac{\d\nu}{\d\eta} \right) \bigg| \omega(0)=x \right] \nu|_{t=0}(\d x) \\
    &= \int_{\R^d} \left\{ \log \frac{\d\nu|_{t=0}}{\d\eta|_{t=0}} (x) + \E_\nu \left[ \log \left( \frac{\d\nu_x}{\d\eta_x} \right) \right] \right\} \nu|_{t=0}(\d x) \\
    &= D_{\mathrm{KL}}\left(\nu|_{t=0} \| \eta|_{t=0}\right) + \int_{\R^d} D_{\mathrm{KL}}\left(\nu_x \| \eta_x\right) \nu|_{t=0}(\d x).
  \end{split}
\end{equation*}

\subsection*{Time-reversal operator} 

We define the reverse-time operator $R: \mathcal{C}^{d,T} \to \mathcal{C}^{d,T}$ by $R(\omega) = \omega(T- \cdot)$.
The operator $R$ is clearly a Banach isometry, as well as an involution, i.e., $R^2 = \id$.
For a measure $\nu$ on $(\mathcal{C}^{d,T},\mathcal B(\mathcal{C}^{d,T}))$, we define its time-reversal as the pushforward measure by $R$, 
$$\rev\nu := R_* \nu = \nu \circ R^{-1}.$$

Recall that a continuous process $X = \{X(t)\}_{t\in[0,T]}$ on $(\Omega, \mathcal F, \mathbf P)$ can be regarded as a random variable valued in $(\mathcal{C}^{d,T},\mathcal B(\mathcal{C}^{d,T}))$. We define the time-reversal of $X$ as the process 
$$\rev X := R\circ X = \{X(T - t)\}_{t\in[0,T]}.$$
If the law of $X$ is $\nu$, i.e., $\nu = \P\circ X^{-1}$, then the law of $\rev X$ is $\rev \nu$, since
\begin{equation*}
  \P\circ \rev X^{-1} = \P\circ X^{-1} \circ R^{-1} = \nu \circ R^{-1} = \rev \nu.
\end{equation*}

\subsection*{Stochastic differential equations}

Suppose we are given a  process $b: [0,T] \times \mathcal{C}^{d,T} \to \R^d$, adapted with respect to the canonical filtration $\{ \mathcal B_t(\mathcal{C}^{d,T}) \}_{t\in[0,T]}$. Consider the following functional stochastic differential equation:
\begin{equation}\label{SDE}
  \mathrm dX(t)= b(t,X)\mathrm d t + \mathrm dB(t),
\end{equation}
By a (weak) solution of SDE \eqref{SDE}, we mean a triple $X$, $B$, $(\Omega, \mathcal F, \mathbf P, \{\mathcal F_t\}_{t\in[0,T]})$, where
\begin{itemize}
  \item[(i)] $(\Omega, \mathcal F, \mathbf P, \{\mathcal F_t\}_{t\in[0,T]})$ is a filtered probability space satisfying the usual conditions, equipped with a $d$-dimensional $\{\mathcal F_t\}_{t\in[0,T]}$-Brownian motion $B$,
  \item[(ii)] $\{X(t)\}_{t\in[0,T]}$ is a continuous, adapted $\R^d$-valued process,
  \item[(iii)] $\int_0^T |b(t,X)| \mathrm dt < \infty$ $\P$-a.s.,
  \item[(iv)] the following integral version of \eqref{SDE} holds $\P$-a.s.,
  \begin{equation*}
    X(t)= X(0) + \int_0^t b(s,X)\mathrm d s + B(t), \quad t\in [0,T].
  \end{equation*}
\end{itemize}

In the case where the drift $b$ is given by $b(t, \omega) = b(t, \omega(t))$ for some time-dependent vector field $b$ on $\R^d$, equation \eqref{SDE} then has the form
\begin{equation*}
  \mathrm dX(t)= b(t,X(t))\mathrm d t + \mathrm dB(t),
\end{equation*}
and is said
to be of the Markovian-type.


The following Girsanov theorem, taken from \cite[Theorem 3.5.1]{karatzas2012brownian}, generalizes Cameron--Martin theorem to stochastic drifts.

\begin{lem}\label{Gir-thm}
Let $(\Omega, \mathcal F, \mathbf P, \{\mathcal F_t\}_{t\in[0,T]})$ be a filtered probability space satisfying the usual condition, equipped with a standard $d$-dimensional Brownian motion $\{B(t)\}_{t\in[0,T]}$. Assume: 

(i) $\beta = \{\beta(t)\}_{t\in[0,T]}$ is a measurable $\{\mathcal F_t\}_{t\in[0,T]}$-adapted process; 

(ii) the following Novikov's condition holds,
\begin{align*}
  \E_\P \left[\exp \left( \frac{1}{2} \int_{0}^{T} |\beta(t)|^2\mathrm{d} t\right)\right]<\infty.
\end{align*}
Then, the process $B -\int_0^\cdot \beta(s) \mathrm ds$ is a standard Brownian motion under the probability measure $\mathbf Q$ with density
$$
\frac{\mathrm{d} \mathbf Q}{\mathrm{d} \P}(\omega) = \exp \left(\int_{0}^{T} \beta(t, \omega) \mathrm{d}B(t,\omega)-\frac{1}{2} \int_{0}^{T} |\beta(t, \omega)|^{2} \mathrm{d} t\right).
$$
\end{lem}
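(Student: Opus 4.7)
The plan is to prove Girsanov's theorem in three stages: (i) construct the candidate Radon--Nikodym density as an exponential local martingale, (ii) promote it to a true martingale using Novikov's condition, and (iii) identify the shifted process as a Brownian motion under $\mathbf Q$ via L\'evy's characterization.

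First, I would introduce
\[
  Z(t) := \exp\left( \int_0^t \beta(s) \cdot dB(s) - \frac{1}{2} \int_0^t |\beta(s)|^2\, ds \right), \quad t \in [0,T].
\]
It\^o's formula gives $dZ(t) = Z(t)\, \beta(t) \cdot dB(t)$, so $Z$ is a continuous nonnegative local martingale with $Z(0)=1$, hence a supermartingale with $\E_\P [Z(T)] \le 1$. The crucial step is to show $\E_\P[Z(T)] = 1$; only then does $\mathrm d\mathbf Q / \mathrm d\P := Z(T)$ define a genuine probability measure on $\mathcal F_T$. This is precisely where Novikov's condition enters, via a uniform integrability argument described below.

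Second, with $\mathbf Q$ a bona fide probability measure, I would set $\widetilde B(t) := B(t) - \int_0^t \beta(s)\, ds$ and verify its $\mathbf Q$-martingale property through the Bayes rule: a process $M$ is a $\mathbf Q$-martingale iff $Z M$ is a $\P$-local martingale (modulo integrability). Applying It\^o's product rule with $dZ = Z \beta \cdot dB$, $d\widetilde B^j = dB^j - \beta^j\, dt$, and cross-variation $d\langle Z, \widetilde B^j \rangle = Z \beta^j\, dt$, the finite-variation terms cancel and each component of $Z \widetilde B$ is manifestly a $\P$-local martingale. Hence $\widetilde B$ is a continuous $\mathbf Q$-martingale. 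Finally, the quadratic co-variation $\langle \widetilde B^i, \widetilde B^j \rangle_t = \delta_{ij}\, t$ is a pathwise quantity, invariant under the equivalent measure change $\P \to \mathbf Q$, so L\'evy's characterization theorem identifies $\widetilde B$ as a standard $d$-dimensional $\mathbf Q$-Brownian motion.

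The hard part will be step (ii): promoting the exponential local martingale $Z$ to a true martingale. A continuous nonnegative local martingale is automatically a supermartingale, but without further integrability the inequality $\E_\P [Z(T)] \le 1$ may be strict, i.e., mass can be lost at infinity along the localizing sequence. Novikov's criterion $\E_\P\bigl[\exp\bigl(\tfrac{1}{2}\int_0^T |\beta(t)|^2\, dt\bigr)\bigr] < \infty$ is precisely the sharp integrability needed to rule this out; the standard proof proceeds by considering the family $Z^\lambda$ for $\lambda < 1$, using H\"older's inequality together with the Novikov bound to control moments uniformly, thereby establishing uniform integrability of $\{Z(t)\}_{t \in [0,T]}$, and then passing to the limit $\lambda \to 1^-$.
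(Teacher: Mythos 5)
Your outline is correct and is precisely the standard argument: exponential local martingale via It\^o's formula, Novikov's criterion (proved through the $\lambda<1$ H\"older/uniform-integrability trick) to get $\E_\P[Z(T)]=1$, then the Bayes rule plus L\'evy's characterization to identify $B-\int_0^\cdot\beta(s)\,\mathrm ds$ as a $\mathbf Q$-Brownian motion. The paper does not prove this lemma at all --- it quotes it directly from Karatzas--Shreve \cite[Theorem 3.5.1]{karatzas2012brownian} --- and your sketch reproduces exactly the proof given in that reference, so there is nothing to compare beyond noting that your plan is the intended one.
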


The following version of Girsanov theorem, which is a partial converse of Lemma \ref{Gir-thm}, is taken from \cite[Theorem 5.72]{baudoin2014diffusion}.

\begin{lem}\label{Gir-inv}
Let $(\Omega, \mathcal F, \mathbf P, \{\mathcal F_t\}_{t\in[0,T]})$ be a filtered probability space satisfying the usual condition, equipped with a standard $d$-dimensional Brownian motion $\{B(t)\}_{t\in[0,T]}$. Assume that $\mathbf Q$ is equivalent to $\mathbf P$. Then there exists a progressively measurable process $\beta = \{\beta(t)\}_{t\in[0,T]}$ such that

(i) $\beta$ is $\mathbf P$-almost surely squared-integrable, i.e., $\int_{0}^{T} |\beta(t,\omega)|^{2} \mathrm{d} t < \infty$ for $\mathbf P$-a.s. $\omega$;

(ii) the process $B-\int_0^{\cdot}\beta(s)ds$ is a standard Brownian motion under $\mathbf Q$; 

(iii) $\mathbf Q$ has density
$$
\frac{\mathrm{d} \mathbf Q}{\mathrm{d} \mathbf P}(\omega)=\exp \left(\int_{0}^{T} \beta(t,\omega) \mathrm{d}B(t,\omega)-\frac{1}{2} \int_{0}^{T} |\beta(t,\omega)|^{2} \mathrm{d} t\right).
$$
\end{lem}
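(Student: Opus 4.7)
The plan is to produce $\beta$ by representing the Radon--Nikodym density of $\mathbf{Q}$ against $\mathbf{P}$ as a stochastic exponential, and then to verify the Brownian property of $B - \int_0^{\cdot} \beta(s)\,\d s$ under $\mathbf{Q}$ through a Girsanov-type change-of-measure identity rather than through Lemma~\ref{Gir-thm}. The whole argument rests on two ingredients: the strict positivity furnished by the equivalence $\mathbf{Q} \sim \mathbf{P}$, and the martingale representation theorem on the Brownian filtration.

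Concretely, I would first define the $\mathbf{P}$-martingale $M_t := \E_{\mathbf{P}}[\d\mathbf{Q}/\d\mathbf{P} \mid \mathcal{F}_t]$, $t\in[0,T]$, which is strictly positive $\mathbf{P}$-a.s., since $\mathbf{Q} \sim \mathbf{P}$ forces $\d\mathbf{Q}/\d\mathbf{P} > 0$. Assuming that $\{\mathcal{F}_t\}$ is the augmented natural filtration of $B$, the martingale representation theorem produces a progressively measurable $\R^d$-valued process $\gamma$ with $\int_0^T |\gamma_s|^2\,\d s < \infty$ a.s., and $M_t = 1 + \int_0^t \gamma_s \cdot \d B(s)$. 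I would then set $\beta_t := \gamma_t/M_t$, which is well-defined and progressively measurable by the strict positivity and continuity of $M$. Applying It\^o's formula to $\log M_t$ yields
\[
\log M_t = \int_0^t \beta_s \cdot \d B(s) - \frac{1}{2}\int_0^t |\beta_s|^2\,\d s,
\]
which at $t = T$ is precisely the claimed density formula (iii), while (i) follows from the $\mathbf{P}$-a.s.\ finiteness of $\log M_T$ together with the It\^o-integrability of $\gamma$.

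For (ii), I would deliberately avoid invoking Lemma~\ref{Gir-thm}, since no Novikov bound on $\beta$ is at hand. Instead, I would use the general Girsanov change-of-measure rule: whenever $M$ is a strictly positive $\mathbf{P}$-martingale with $\mathbf{Q} = M_T\,\mathbf{P}$, every continuous $\mathbf{P}$-local martingale $N$ becomes, under $\mathbf{Q}$, a continuous local martingale after subtracting $\int_0^{\cdot} \d\langle N, M\rangle_s / M_s$. Taking $N = B$, one computes $\d\langle B, M\rangle_s = \gamma_s\,\d s = \beta_s M_s\,\d s$, so $\widetilde B_t := B_t - \int_0^t \beta_s\,\d s$ is a continuous $\mathbf{Q}$-local martingale. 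Since the quadratic variation is invariant under an absolutely continuous change of measure, $\langle \widetilde B\rangle_t = t\,\id$, and L\'evy's characterization concludes that $\widetilde B$ is a standard $d$-dimensional $\mathbf{Q}$-Brownian motion.

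The step I expect to be the main obstacle is exactly the unavailability of Novikov's criterion: the argument must exploit that $M$ is a \emph{genuine} $\mathbf{P}$-martingale (because $\E_{\mathbf{P}}[M_T] = 1$), rather than hope that the stochastic exponential of $\beta$ is \emph{a priori} a true martingale. A secondary subtlety is the filtration hypothesis behind the martingale representation step; if $\{\mathcal{F}_t\}$ is strictly larger than the augmented natural filtration of $B$, $\gamma$ may carry a component orthogonal to $B$, in which case the exponential representation in (iii) and the Brownian property in (ii) need not hold as stated, and one has to impose this filtration condition implicitly.
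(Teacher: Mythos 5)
The paper does not prove this lemma at all: it is quoted verbatim from Baudoin's book (Theorem 5.72 there), so there is no in-paper argument to compare against. Your proof is the standard one for this converse Girsanov statement and is essentially the proof in that reference: strict positivity of the density martingale $M_t=\E_{\mathbf P}[\d\mathbf Q/\d\mathbf P\mid\mathcal F_t]$, martingale representation $M=1+\int\gamma\cdot\d B$, the choice $\beta=\gamma/M$, It\^o's formula on $\log M$ for (iii), and the Girsanov--Meyer drift $\int_0^{\cdot}\d\langle B,M\rangle_s/M_s$ plus L\'evy's characterization for (ii). All of these steps are sound. Two small points. First, for (i) the clean justification is that a nonnegative continuous $\mathbf P$-martingale with $M_T>0$ a.s.\ satisfies $\inf_{[0,T]}M>0$ a.s.\ (it is absorbed at $0$ once it hits $0$), whence $\int_0^T|\beta_s|^2\,\d s\le(\inf_{[0,T]}M)^{-2}\int_0^T|\gamma_s|^2\,\d s<\infty$; deducing it from the finiteness of $\log M_T$ alone is a bit loose. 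Second, you are right that the martingale representation step forces $\{\mathcal F_t\}$ to be the augmented natural filtration of $B$; this hypothesis is implicit in the lemma as stated (and explicit in the cited source), and without it the conclusion can genuinely fail, so flagging it is appropriate rather than a gap in your argument.
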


\section{Overdamped Langevin equations and nonlinear heat equation}

Now we consider the potential energy functional $\Phi^\e: \mathcal{C}^{d,T} \rightarrow \mathbb{R}$ of the form \eqref{phi}.

\begin{lem}
Under the assumptions of Lemma \ref{lem-1}, if the law of $X^\e_x$ is $\nu^\e_x$, then for $s\in [0,T]$ and $\mu_0$-a.s. $\omega\in \mathcal{C}^{d,T}_0$,
\begin{equation}\label{Sto-int1}
  \begin{split}
    & \E_{\mu_0} [D_s g^\e(\omega_x^\e(T)) \mid \mathcal B_s(\mathcal{C}^{d,T}_0)] + \sqrt\e b^\e(s, \omega_x^\e) \\
    &= \int_s^T\E_{\mu_0}[ D_s b^\e(t,\omega_x^\e) \cdot b^\e(t,\omega_x^\e) - D_sV(t, \omega_x^\e(t)) \mid \mathcal B_s(\mathcal{C}^{d,T}_0)] \mathrm{d} t.
  \end{split}
\end{equation}
\end{lem}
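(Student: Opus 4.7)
The plan is to differentiate the pointwise identity \eqref{eqn-13} of Lemma \ref{lem-1} in the sense of Malliavin calculus, and then take a conditional expectation at time $s$ to kill the martingale increment produced when one differentiates the It\^o integral. Under Assumption \ref{asmp-Phi-cost}, \eqref{eqn-13} reads
\begin{equation*}
\e \log Z^\e_{\Phi^\e}(x) + \int_0^T V(t, x+\sqrt\e\omega(t))\mathrm dt + g^\e(x+\sqrt\e\omega(T)) = -\sqrt\e \int_0^T b^\e(t, x+\sqrt\e\omega)\,\mathrm d\omega(t) + \tfrac{1}{2}\int_0^T |b^\e(t, x+\sqrt\e\omega)|^2 \mathrm dt,
\end{equation*}
valid for $\mu_0$-a.e.\ $\omega$. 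Applying $D_s$ on both sides, the constant $\e\log Z^\e_{\Phi^\e}(x)$ drops out; the chain rule (combined with Fubini for the Malliavin derivative) gives $D_s V(t, x+\sqrt\e\omega(t))=\sqrt\e\nabla V(t, x+\sqrt\e\omega(t))\ind_{\{s\le t\}}$ and $D_s g^\e(x+\sqrt\e\omega(T))=\sqrt\e\nabla g^\e(x+\sqrt\e\omega(T))$; and the commutation formula between $D_s$ and It\^o integration yields
\begin{equation*}
D_s\!\int_0^T\! b^\e(t,\cdot)\mathrm d\omega(t) = b^\e(s, \cdot) + \int_s^T D_s b^\e(t,\cdot)\mathrm d\omega(t), \qquad D_s\tfrac12\!\int_0^T |b^\e(t,\cdot)|^2\mathrm dt = \int_s^T b^\e(t,\cdot)\cdot D_s b^\e(t,\cdot)\mathrm dt.
\end{equation*}

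Next, I take $\E_{\mu_0}[\,\cdot\,|\mathcal B_s(\mathcal{C}^{d,T}_0)]$ on both sides of the differentiated identity. The crucial observation is that, because $b^\e(t,\cdot)$ is $\mathcal B_t$-adapted, for each $s\le t$ the Malliavin derivative $D_s b^\e(t,\cdot)$ depends on $\omega$ only through its restriction to $[0,t]$, so it remains $\mathcal B_t$-measurable. Hence $r\mapsto\int_s^{r} D_s b^\e(t,\cdot)\mathrm d\omega(t)$ is a genuine It\^o integral and a $(\mathcal B_r)_{r\ge s}$-martingale vanishing at $r=s$, so its conditional expectation given $\mathcal B_s$ is zero. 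Moving $\sqrt\e b^\e(s,x+\sqrt\e\omega)$ to the left, collecting the remaining $[s,T]$-integrals by Fubini, and regrouping yields exactly \eqref{Sto-int1}.

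The main obstacle is the regularity required for the Malliavin-calculus steps, which goes beyond the standing hypotheses of Lemma \ref{lem-1} (adaptedness and Novikov's condition): one needs $b^\e(t,\cdot)\in\mathbb D^{1,2}$ uniformly in $t$ with $(s,t)\mapsto D_s b^\e(t,\cdot)$ square-integrable, together with $V(t,\cdot)\in C^1$ and $g^\e\in C^1$ of controlled growth, in order to invoke the chain rule and the commutation formula between $D_s$ and the It\^o integral in the form above. Once these hypotheses are in place, the argument reduces to the termwise $D_s$-computation and the conditioning step described above; no extra ingredient beyond classical Malliavin calculus is needed.
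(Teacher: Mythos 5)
Your proposal is correct and follows essentially the same route as the paper: apply the Malliavin derivative $D_s$ termwise to the pathwise identity \eqref{eqn-13}, use adaptedness of $b^\e$ so that $D_s b^\e(t,\cdot)=0$ for $t<s$, and then condition on $\mathcal B_s(\mathcal{C}^{d,T}_0)$ so that the It\^o-integral term $\int_s^T D_s b^\e\,\mathrm d\omega$ vanishes as a martingale increment. Your additional remark on the Malliavin regularity needed for $b^\e$, $V$, $g^\e$ is a fair observation about hypotheses the paper leaves implicit, but it does not change the argument.
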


\begin{proof}
We take the Malliavin derivative $D_s$ \cite{nualart2006malliavin} to both sides of \eqref{eqn-13}. Since $b^\e$ is adapted to the filtration $\{ \mathcal B_t(\mathcal{C}^{d,T}) \}_{t\in[0,T]}$, we have $D_s V(t,x+ \sqrt\e \omega(t)) = 0$ and $D_s b^\e(t,x+ \sqrt\e \omega)=0$ for $s>t$ \cite[Corollary 1.2.1]{nualart2006malliavin}. The Malliavin derivative of the l.h.s. of \eqref{eqn-13} is
\begin{equation*}
  \begin{split}
    D_s [\text{l.h.s.}] &= \int_{0}^{T} D_s V(t, \omega_x^\e(t)) \mathrm d t + D_s g^\e(\omega_x^\e(T)) \\
    &= \int_{s}^{T} D_s V(t, \omega_x^\e(t)) \mathrm d t + D_s g^\e(\omega_x^\e(T)),
  \end{split}
\end{equation*}
while that of the r.h.s. is
\begin{equation*}
\begin{aligned}
D_s [\text{r.h.s.}] &= -\sqrt\e b^\e(s, \omega_x^\e) - \sqrt\e \int_0^T D_s b^\e(t, \omega_x^\e) \mathrm{d} \omega(t) + \int_0^T D_s b^\e(t, \omega_x^\e) \cdot b^\e(t, \omega_x^\e) \mathrm{d} t \\
&= - \sqrt\e b^\e(s, \omega_x^\e) - \sqrt\e \int_s^T D_s b^\e(t, \omega_x^\e) \mathrm{d} \omega(t) + \int_s^T D_s b^\e(t, \omega_x^\e) \cdot b^\e(t, \omega_x^\e) \mathrm{d} t.
\end{aligned}
\end{equation*}
Then, taking the conditional expectation $\E_{\mu_{0}}[ \cdot \mid \mathcal B_s(\mathcal{C}^{d,T}_0)]$ to the above Malliavin derivatives, we obtain the desired result.
\end{proof}


We then consider the case of the Markovian-type SDEs, particularly relevant to this paper.

\begin{lem}\label{Cor-sde-lemma}
Let Assumptions \ref{asmp-Phi-1} and \ref{asmp-Phi-cost} hold. Fix $\e>0$ and $x\in\R^d$. Let $X^\e_x$, $B$, $(\Omega, \mathcal F, \mathbf P, \{\mathcal F_t\}_{t\in[0,T]})$ be a weak solution of the following SDE
\begin{equation}\label{SDE-Markovian}
\mathrm dX^\e_x(t)= b^\e(t,X^\e_x(t))\mathrm d t + \sqrt\e \mathrm dB(t), \quad X^\e_x (0) =x,
\end{equation}
where the vector field $b^\e\in C_b^{1,2}([0,T]\times\mathbb{R}^d; \mathbb{R}^d)$ satisfies 
\begin{equation*}
   \E_{\mu_x^\e} \left[\exp \left(\frac{1}{2\e} \int_{0}^{T} |b^\e(t, \omega(t))|^{2} \mathrm{d} t\right)\right]<\infty. 
\end{equation*}
Suppose $V\in C^{0,2}_b([0,T]\times\mathbb{R}^d)$ and $g^\e\in C^2_b(\R^d)$. If the law of $X^\e_x$ is $\nu^\e_x$, then the following assertions hold:
\begin{itemize}
  \item[(i)] $b^\e$ satisfies the following time-reversed nonlinear heat (NH) equation:
\begin{equation}\label{NH}
  \begin{cases}
    \partial_t b^\e_i (t,y) + \sum_{j=1}^d b^\e_j(t,y)\pt_i b^\e_j(t,y)+\frac{\e}{2}\Delta b^\e_i(t,y)= \pt_i V(t,y), & (t,y)\in [0,T)\times\R^d, \\
    b^\e(T,y)= -\nabla g^\e(y), & y\in\R^d.
  \end{cases}
\end{equation}
  \item[(ii)] If moreover, the vector field $b^\e$ is a gradient field, i.e., $b^\e = \nabla S^\e$ for some potential function $S^\e\in C^{1,3}_b([0,T]\times\mathbb{R}^d)$. Then $S^\e$ is determined (up to a function depending only on time) by the following second-order Hamilton--Jacobi (2nd-order HJ) equation:
\begin{equation}\label{2nd-order HJ1}
  \begin{cases}
    \partial_t S^\e(t,y)-\frac{1}{2} |\nabla S^\e(t,y)|^2 + \frac{\e}{2}\Delta S^\e(t,y) = -V(t,y), & (t,y)\in [0,T)\times\R^d, \\
    S^\e(T,y) =  g^\e(y), & y\in\R^d.
  \end{cases}
\end{equation}
\end{itemize}
\end{lem}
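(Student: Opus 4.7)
The plan is to turn the pathwise identity \eqref{Sto-int1} into a pointwise PDE on $\mathbb{R}^d$ by specializing to the Markovian case and exploiting the Markov property of $\omega$ under $\mu_0$. Since $b^\e(t,\omega)=b^\e(t,\omega(t))$ with $b^\e\in C_b^{1,2}$, each Malliavin derivative appearing in \eqref{Sto-int1} collapses to a scalar gradient times $\sqrt\e$: for instance $D_s g^\e(x+\sqrt\e\omega(T))=\sqrt\e\,\nabla g^\e(x+\sqrt\e\omega(T))$ and analogously for $V$ and for each $b^\e_i$. Dividing through by $\sqrt\e$, the $k$-th component of \eqref{Sto-int1} becomes
\begin{equation*}
b^\e_k(s,x+\sqrt\e\omega(s))=-\E_{\mu_0}\!\left[\pt_k g^\e(x+\sqrt\e\omega(T))\,\big|\,\mathcal B_s(\mathcal{C}^{d,T}_0)\right]+\int_s^T\E_{\mu_0}\!\left[\textstyle\sum_i b^\e_i\,\pt_k b^\e_i-\pt_k V\,\big|\,\mathcal B_s(\mathcal{C}^{d,T}_0)\right]\mathrm dt,
\end{equation*}
with the integrand evaluated at $(t,x+\sqrt\e\omega(t))$.

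Next I would use the Markov property. Setting $y:=x+\sqrt\e\omega(s)$ and writing $P_r^\e f(y):=\E_{\mu_0}[f(y+\sqrt\e\omega(r))]$ for the heat semigroup with generator $\tfrac\e2\Delta$, each conditional expectation turns into a semigroup action evaluated at $y$. Since $y$ ranges over a set of full Lebesgue measure as $\omega$ varies and both sides are continuous in $y$, the identity extends to all $y\in\mathbb{R}^d$, yielding the Feynman--Kac-type representation
\begin{equation*}
b^\e_k(s,y)=-P_{T-s}^\e[\pt_k g^\e](y)+\int_s^T P_{t-s}^\e\!\left[\textstyle\sum_i b^\e_i\,\pt_k b^\e_i(t,\cdot)-\pt_k V(t,\cdot)\right]\!(y)\,\mathrm dt.
\end{equation*}
Differentiating in $s$ via $\pt_r P_r^\e=\tfrac\e2\Delta P_r^\e$, both pieces contribute $-\tfrac\e2\Delta$ applied to the bracketed combination, i.e.\ to $b^\e_k(s,y)$ itself, plus the boundary value $-h_k(s,y)$ with $h_k:=\sum_i b^\e_i\,\pt_k b^\e_i-\pt_k V$. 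Rearranging gives the $k$-th component of \eqref{NH}, and the representation at $s=T$ collapses to $b^\e(T,\cdot)=-\nabla g^\e$, which is the terminal condition. This establishes (i).

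Part (ii) then reduces to a purely algebraic step. When $b^\e=\nabla S^\e$, symmetry of mixed second derivatives gives $\sum_i b^\e_i\pt_k b^\e_i=\pt_k\tfrac12|\nabla S^\e|^2$, so \eqref{NH} rewrites as $\pt_k\!\big[\pt_s S^\e+\tfrac12|\nabla S^\e|^2+\tfrac\e2\Delta S^\e-V\big]=0$ for every $k$. Hence the bracketed quantity depends only on $s$; absorbing it into $S^\e$ as an additive function of time yields \eqref{2nd-order HJ1}, and the terminal condition $S^\e(T,\cdot)=-g^\e$ (up to a constant) follows by integrating $\nabla S^\e(T,\cdot)=-\nabla g^\e$.

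The main technical obstacle I anticipate is justifying the differentiation in $s$ under the heat-semigroup representation and the exchange of conditional expectation with the $\mathrm dt$ integral. These rely on the precise regularity assumed ($b^\e\in C_b^{1,2}$, $V\in C_b^{0,2}$, $g^\e\in C_b^2$) to ensure $P_{T-s}^\e[\pt_k g^\e]\in C^{1,2}$ and to apply dominated convergence to the Duhamel-type integral. A secondary point is verifying that both sides of \eqref{eqn-13} lie in the Malliavin-Sobolev space $\mathbb D^{1,2}$ so that \eqref{Sto-int1} itself is legitimate under the present hypotheses; this is routine given the boundedness assumed on $b^\e$, $V$, $g^\e$, but should be recorded explicitly.
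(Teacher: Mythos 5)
Your proposal is correct and follows essentially the same route as the paper: specialize the Malliavin-derivative identity \eqref{Sto-int1} to the Markovian case, use the Markov property and full support of Brownian motion to obtain the heat-semigroup (Feynman--Kac-type) representation of $b^\e(s,y)$, differentiate in $s$ via the heat equation for the kernel (the paper's Lemma \ref{lemma-heat-kernel-itg}) to get \eqref{NH}, and then obtain \eqref{2nd-order HJ1} by quadrature. The technical points you flag (regularity needed to differentiate under the integral) are exactly those the paper addresses with the stated $C_b$ assumptions and Lemma \ref{lemma-heat-kernel-itg}.
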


\begin{proof}

We have $D_s g^\e(\omega_x^\e(T)) = \sqrt\e \nabla g^\e(\omega_x^\e(T))$, $D_s b^\e(t,\omega_x^\e(t)) = \sqrt\e \nabla b^\e(t,\omega_x^\e(t)) \mathbf{1}_{[s,T]}(t)$ and $D_s V(t,\omega_x^\e(t)) = \sqrt\e \nabla V(t,\omega_x^\e(t)) \mathbf{1}_{[s,T]}(t)$. Then equation \eqref{Sto-int1} becomes
\begin{align*}
0&= \E_{\mu_0} \left[ \nabla g^\e(\omega_x^\e(T)) \mid \mathcal B_s(\mathcal{C}^{d,T}_0) \right] + b^\e(s, \omega_x^\e(s)) \\
&\quad -\int_s^T\E_{\mu_0} \left[ \nabla b^\e(t,\omega_x^\e(t)) \cdot b^\e(t,\omega_x^\e(t)) - \nabla V(t, \omega_x^\e(t)) \mid \mathcal B_s(\mathcal{C}^{d,T}_0) \right] \mathrm{d} t\\
&= \E_{\mu_0} \left[ \nabla g^\e(y+\sqrt\e \omega(T-s)) \right] \big|_{y=\omega_x^\e(s)} + b^\e(s, \omega_x^\e(s)) \\
&\quad -\int_s^T\E_{\mu_0} \left[ (\nabla b^\e(t) \cdot b^\e(t) - \nabla V(t)) (y+\sqrt\e\omega(t-s)) \right] \big|_{y=\omega_x^\e(s)}\mathrm{d} t \\
&= \int_{\R^d} \nabla g^\e(y+z) \rho_0^\e(T-s,z)\mathrm{d}z \big|_{y=\omega_x^\e(s)} + b^\e(s, \omega_x^\e(s)) \\
&\quad - \int_s^T\int_{\R^d} [\nabla b^\e(t) \cdot b^\e(t) - \nabla V(t)](y+z) \rho_0^\e(t-s,z) \mathrm{d}z \mathrm{d}t \big|_{y=\omega_x^\e(s)},
\end{align*}
where $\rho_0^\e(t, \cdot)$ is the Lebesgue density of $\sqrt\e W(t)$.
Since the canonical Brownian motion $W(t)$ has full support on $\mathbb{R}^d$, we obtain that for all $y\in \R^d$,
\begin{align*}
b^\e(s,y) &= \int_s^T\int_{\R^d} [\nabla b^\e(t) \cdot b^\e(t) - \nabla V(t)](y+z) \rho_0^\e(t-s,z) \mathrm{d}z \mathrm{d}t - \int_{\R^d} \nabla g^\e(y+z) \rho_0^\e(T-s,z)\mathrm{d}z \\
&= \int_s^T \int_{\R^d} [\nabla b^\e(t) \cdot b^\e(t) - \nabla V(t)](z) \rho_0^\e(t-s,z-y) \mathrm{d}z \mathrm{d}t - \int_{\R^d} \nabla g^\e(z) \rho_0^\e(T-s,z-y)\mathrm{d}z.
\end{align*}
It is then clear that $b^\e(T) = -\nabla g^\e$. 
As $\nabla g\in C_b^1(\R^d)$ and $\nabla b^\e \cdot b^\e$, $\nabla V\in C_b^{0,1}([0,T] \times \R^d)$, we apply Lemma \ref{lemma-heat-kernel-itg} and obtain
\begin{equation*}
  \begin{split}
    \pt_s b^\e(s,y) 
    &= - \nabla b^\e(s,y) \cdot b^\e(s,y) + \nabla V(s,y) - \frac{\e}{2} \Delta b^\e(s,y).
  \end{split}
\end{equation*}
These prove (i). (ii) follows by quadrature.
\end{proof}

\begin{lem}\label{lemma-heat-kernel-itg}
(i). Let $f_1\in C_b^1(\R^d)$. Define
\begin{equation*}
  J_1(s,y) := \int_{\R^d} f_1(z) \rho_0^\e(T-s,z-y)\mathrm{d}z, \quad (s,y) \in [0,T] \times \R^d.
\end{equation*}
Then $J_1\in C^{1,\infty}([0,T] \times \R^d)$ and
\begin{equation*}
  \pt_s J_1(s,y) = - \frac{\e}{2} \Delta J_1(s,y).
\end{equation*}
(ii). Let $f_2\in C_b^{0,1}([0,T] \times \R^d)$. Define
\begin{equation*}
  J_2(s,y) :=\int_s^T \int_{\R^d} f_2(t,z) \rho_0^\e(t-s,z-y) \mathrm{d}z \mathrm{d}t, \quad (s,y) \in [0,T] \times \R^d.
\end{equation*}
Then $J_2\in C^{1,2}([0,T] \times \R^d)$ and
\begin{equation*}
  \pt_s J_1(s,y) = - \frac{\e}{2} \Delta J_1(s,y) - f_2(s,y).
\end{equation*}
\end{lem}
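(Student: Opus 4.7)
My plan is to prove both parts by exploiting that the Gaussian density $\rho_0^\e(t,\cdot)$ is smooth on $\R^d$ for every $t>0$, acts as an approximate identity as $t\downarrow 0$, and satisfies the heat equation $\partial_t \rho_0^\e(t,w) = \frac{\e}{2}\Delta_w \rho_0^\e(t,w)$. Since $\rho_0^\e(t,z-y)$ depends on $z$ and $y$ only through $z-y$, we have $\Delta_z \rho_0^\e(t,z-y)=\Delta_y \rho_0^\e(t,z-y)$, which is the geometric mechanism behind both identities.

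For part (i), after the change of variable $w=z-y$ I rewrite $J_1(s,y)=\int_{\R^d} f_1(y+w)\,\rho_0^\e(T-s,w)\,dw$. Because, for any positive time, $\rho_0^\e$ is a Schwartz function in $w$ whose derivatives of every order are uniformly $L^1$-bounded on compact time intervals bounded away from zero, and since $f_1$ is bounded, I may differentiate under the integral arbitrarily many times in $y$, yielding $J_1 \in C^\infty$ in $y$. For the time derivative, differentiating the integral once and invoking the backward heat equation $\partial_s \rho_0^\e(T-s,w)=-\frac{\e}{2}\Delta_w \rho_0^\e(T-s,w)$, then converting $\Delta_w$ to $\Delta_y$ via translation invariance, gives $\partial_s J_1(s,y) = -\frac{\e}{2}\Delta_y J_1(s,y)$. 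Joint continuity in $(s,y)\in[0,T]\times\R^d$ follows from dominated convergence; at the endpoint $s=T$ the factor $\rho_0^\e(T-s,\cdot)$ collapses to a Dirac mass, so $J_1(T,y)=f_1(y)$, consistent with the claim.

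For part (ii), the new difficulty is the singularity of $\rho_0^\e(t-s,\cdot)$ as $t\downarrow s$. To handle it cleanly I would introduce the regularization
\begin{equation*}
  J_2^{\delta}(s,y) := \int_{s+\delta}^{T} \int_{\R^d} f_2(t,z)\,\rho_0^\e(t-s,z-y)\,dz\,dt, \qquad \delta>0,
\end{equation*}
for which Leibniz differentiation under the integral is legitimate and yields, after converting $\Delta_z$ to $\Delta_y$ as in part (i),
\begin{equation*}
  \partial_s J_2^{\delta}(s,y) = -\int_{\R^d} f_2(s+\delta,z)\,\rho_0^\e(\delta,z-y)\,dz \;-\; \frac{\e}{2}\Delta_y J_2^{\delta}(s,y).
\end{equation*}
Sending $\delta \downarrow 0$, the boundary term converges to $-f_2(s,y)$ by continuity of $f_2$ and the approximate-identity property of $\rho_0^\e(\delta,\cdot)$; the bulk term and $J_2^{\delta}$ itself converge to their $\delta=0$ counterparts by dominated convergence. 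To produce two spatial derivatives of $J_2$ I would use the identity $\nabla_y \rho_0^\e(u,z-y) = -\nabla_z \rho_0^\e(u,z-y)$ and integrate by parts in $z$ so that one derivative lands on $f_2$, using the hypothesis $f_2 \in C_b^{0,1}$; this removes one factor of $u^{-1/2}$ singularity at $u=0$ and leaves an $L^1$-integrable singularity for the remaining derivative, giving $J_2 \in C^{1,2}([0,T]\times\R^d)$ together with the announced PDE.

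The main obstacle I foresee is exactly the interchange of limits and derivatives near the diagonal $t=s$: the kernel $\nabla_y \rho_0^\e(u,\cdot)$ has an $L^1$-norm blowing up like $u^{-1/2}$, and $\Delta_y \rho_0^\e(u,\cdot)$ like $u^{-1}$, so a naive second spatial differentiation under the integral fails. The remedy is the integration-by-parts trick above, which trades one bad factor of $u^{-1/2}$ for a harmless $\partial_{z_i} f_2$; this is the only place where the $C^{0,1}$ regularity of $f_2$ in the spatial variable is actually used. Once this is in place, all the dominated-convergence arguments used to pass $\delta \downarrow 0$ and to establish joint continuity of $\partial_s J_2$ and $\Delta_y J_2$ are routine. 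A conceptually cleaner alternative, which I would mention but not rely on, is to identify $J_2(s,y)=\E_{\mu_0}\!\left[\int_s^T f_2(t,y+\sqrt\e W(t-s))\,dt\right]$ and quote the standard solvability of the inhomogeneous backward heat equation with zero terminal data; the direct argument above is preferable because it is self-contained within the tools already developed in the paper.
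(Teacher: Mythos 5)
Your proposal is correct. The paper itself gives essentially no proof here: part (i) is dispatched with the words ``dominated convergence theorem'' and part (ii) is delegated to the volume-potential theorems in Friedman's book (Chapter~1, Section~1.6). What you have written is a correct, self-contained reconstruction of exactly that classical machinery. The two points that actually require care are the ones you identify: (a) the $\delta$-truncation of the time integral to justify Leibniz differentiation and to extract the boundary term $-f_2(s,y)$ via the approximate-identity property of $\rho_0^\e(\delta,\cdot)$, and (b) the fact that $\|\Delta_y\rho_0^\e(u,\cdot)\|_{L^1}\sim u^{-1}$ is not integrable near $u=0$, so the second spatial derivative cannot be put on the kernel twice; your integration by parts $\partial_{y_j}\rho_0^\e(u,z-y)=-\partial_{z_j}\rho_0^\e(u,z-y)$, which moves one derivative onto $f_2$ and leaves only an integrable $u^{-1/2}$ singularity, is precisely the step for which the hypothesis $f_2\in C_b^{0,1}$ is needed, and is the content of the cited Friedman theorems. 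Two minor remarks: the identity $\partial_s J_1=-\frac{\e}{2}\Delta J_1$ and the smoothness in $y$ only hold for $s<T$ (at $s=T$ the kernel collapses to a Dirac mass and $J_1(T,\cdot)=f_1$ is merely $C^1$), which you correctly flag; and the displayed PDE in part (ii) of the statement should read $\partial_s J_2=-\frac{\e}{2}\Delta J_2-f_2$ (the statement has a typo writing $J_1$), which your derivation implicitly corrects.
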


\begin{proof}
The first statement follows from the dominated convergence theorem. The second result follows from \cite[Chapter 1, Theorems 2, 3, 4 and 5, Section 1.6]{friedman1964partial}.
\end{proof}

\begin{rem}\label{remark:5.5}
\normalfont
(i). Under the assumption of assertion (ii), the vector field $b^\e=\nabla S^\e$ satisfies the following time-reversed viscous Burgers' equation:
\begin{equation}\label{Navier-Stokes}
  \begin{cases}
    \partial_t b^\e (t,y) + (b^\e(t,y) \cdot \nabla) b^\e(t,y) + \frac{\e}{2}\Delta b^\e(t,y)= \nabla V(t,y), & (t,y)\in [0,T)\times\R^d, \\
    b^\e(T,y)= -\nabla g^\e(y), & y\in\R^d.
  \end{cases}
\end{equation}

(ii). In Lemma \ref{Cor-sde-lemma}, we represent the Radon--Nikodym derivative \eqref{mu-tilde} as the Girsanov form \eqref{Gis-tra}, where the drift field of equation \eqref{SDE-Markovian} needs to satisfy the time-reversed nonlinear heat equations \eqref{NH}.
As a comparison, in \cite[Theorem 2.1]{truman2012link}, the Radon--Nikodym derivative  of the Girsanov form \eqref{Gis-tra} for $\e=1$ can be represented by $\exp\{S(T,Y(T))-S(0,Y(0))\}$ with a function $S\in C^{1,2}([0,\infty)\times \mathbb{R}^d)$ for every $T\ge0$, where $Y$ is a solution of the SDE 
\begin{equation*}
  \mathrm dY(t) = b^1 (t,Y(t))\mathrm d t + \mathrm dB(t), \quad Y (0) =0,
\end{equation*}
if and only if $b^1=-\nabla S$ and $S$ satisfies the following Hamilton--Jacobi equation
\begin{align*}
\partial_t S -\frac{1}{2} |\nabla S|^2 +\frac{1}{2} \Delta S=0.  
\end{align*}
They did not need to assume $b^1$ as a gradient field, because they required $T$ to vary in $[0,\infty)$ which allowed them to compare two continuous semimartingales by the uniqueness of Doob--Meyer's decomposition.

(iii). Theorem \ref{Cor-sde}, if imposing strong conditions that $V\in C^{0,2}_b([0,T]\times\mathbb{R}^d)$ and $g^\e\in C^2_b(\R^d)$, can be implied by Lemma \ref{Cor-sde-lemma} by plugging equations \eqref{2nd-order HJ1} into \eqref{eqn-20}.

(iv). When $b^\e$ is not explicitly time-dependent, the law of $X^\e_x$ is $\nu^\e_x$ implies $b^\e = -\nabla g^\e$, and $V$ is time-independent and satisfy (up to a constant for $g^\e$) the time-independent 2nd-order HJ equation \eqref{time-ind-HJ}.
\end{rem}

\section{Some remarks for Assumption \ref{asmp-Phi-3}}\label{app-rmk}

(i). A sufficient condition for Assumption \ref{asmp-Phi-3}-(i) is that there exist $r_0<\frac{1}{2}$ and $M_0 \in\R$, such that for all $\omega \in \mathcal{C}^{d,T}$, 
$$  
\Phi^0(\omega) \ge M_0 - r_0 \|\omega\|_{H_0^1}^2.
$$
In particular, a bounded below $\Phi^0$ is sufficient.
To prove the sufficiency, we first note that the function $I_{\Phi^0}^x$ defined in \eqref{rate-func} takes values in $[0,\infty]$ and is lower semicontinuous, since $I$ is a rate function and $\Phi^0$ is continuous.
Next, we show the goodness of $I_{\Phi^0}^x$, that is, for all $\beta\ge0$, the level set $\{\omega \in \mathcal{C}^{d,T}_x: I_{\Phi^0}^x (\omega) \le \beta\}$ is compact. It follows from Assumption \ref{asmp-Phi-3} that
\begin{equation*}
  \{\omega \in \mathcal{C}^{d,T}_x: I_{\Phi^0}^x (\omega) \le \beta\} \subset \{\omega \in \mathcal{H}^{d,T}_x: (\textstyle{\frac{1}{2}} - r_0) \|\omega\|_{H_0^1}^2 \le \beta - M_0 + \inf_{\omega \in \mathcal{C}^{d,T}_x} [\Phi^0(\omega) + I(\omega)]\},
\end{equation*}
where the latter set is compact in $\mathcal{C}^{d,T}_x$, and the former is closed. The compactness of the former follows.

(ii). A bounded below $\Phi^0$ is sufficient for condition \eqref{tail-cond}.

(iii). Sufficient conditions for condition \eqref{exp-lim-Phi} are that 
\begin{equation}\label{cond-Phi0}
  \lim _{M \rightarrow \infty} \limsup _{\epsilon \rightarrow 0} \epsilon \log \E_{\mu_x^\e} \left[e^{(\Phi^0 - \Phi^\e) / \epsilon} \ind_{\left\{\Phi^0  - \Phi^\e \geq M\right\}}\right] = -\infty,
\end{equation}
and either one of the following conditions holds: 
\begin{itemize}
  \item[a)] as $\e\to0$, $\Phi^\e$ exponentially good approximates $\Phi^0$ under $\mu_x^\e$, in the sense that for every $\delta>0$,
  \begin{equation*}
    \limsup _{\epsilon \rightarrow 0} \epsilon \log \mu_x^\e \left(\Phi^0 - \Phi^\e >\delta \right) = -\infty;
  \end{equation*}
  \item[b)] as $\e\to0$, $\Phi^\e$ converges compactly to $\Phi^0$.
\end{itemize}
Indeed, we note that it suffices to consider the case when $\Phi^0 - \Phi^\e < M$ for some $M>0$ and all $0<\e\ll 1$, by virtue of condition \eqref{cond-Phi0}. We first prove the sufficiency of a): for any $\delta>0$,
\begin{equation*}
  \begin{split}
    \epsilon \log \E_{\mu_x^\e} \left[e^{(\Phi^0 - \Phi^\e) / \epsilon} \right] \le &\ \epsilon \log \E_{\mu_x^\e} \left[e^{(\Phi^0 - \Phi^\e) / \epsilon} \ind_{\left\{\Phi^0  - \Phi^\e \le \delta\right\}}\right] \vee \epsilon \log \E_{\mu_x^\e} \left[e^{(\Phi^0 - \Phi^\e) / \epsilon} \ind_{\left\{\Phi^0  - \Phi^\e > \delta\right\}}\right] \\
    \le&\ \delta \vee M \epsilon \log \mu_x^\e \left(\Phi^0 - \Phi^\e >\delta \right),
  \end{split}
\end{equation*}
which implies condition \eqref{exp-lim-Phi} by taking the limits $\e\to 0$ and $\delta\to 0$. Then we verify the sufficiency of b), as follows. One the one hand, fix a compact neighborhood $K\subset \mathcal{C}^{d,T}_x$ of the constant path $\omega_x \equiv x$, we have
\begin{equation*}
  \begin{split}
    \liminf_{\e\to0} \epsilon \log \E_{\mu_x^\e} \left[e^{(\Phi^0 - \Phi^\e) / \epsilon} \right] &\ge \liminf_{\e\to0} \epsilon \log \E_{\mu_x^\e} \left[e^{(\Phi^0 - \Phi^\e) / \epsilon} \ind_K\right] \\
    &\ge -\liminf_{\e\to 0} \sup_{\omega\in K} |\Phi^0(\omega) - \Phi^\e(\omega)| + \liminf_{\e\to0} \epsilon \log \mu_x^\e(K) \\
    &\ge 0 - \inf_{\omega\in K^\circ} I(\omega) \ge 0 -I(\omega_x) = 0.
  \end{split}
\end{equation*}
On the other hand, for every $\alpha>0$, the goodness of the rate function $I$ of $\{\mu_x^\e: \e>0\}$ implies that the level set $K_\alpha := \{I\le \alpha\}$ is compact. Then
\begin{equation*}
  \begin{split}
    \limsup_{\e\to0} \epsilon \log \E_{\mu_x^\e} \left[e^{(\Phi^0 - \Phi^\e) / \epsilon} \right] &\le \limsup_{\e\to0} \epsilon \log \E_{\mu_x^\e} \left[e^{(\Phi^0 - \Phi^\e) / \epsilon} \ind_{K_\alpha}\right] \vee \limsup_{\e\to0} \epsilon \log \E_{\mu_x^\e} \left[e^{(\Phi^0 - \Phi^\e) / \epsilon} \ind_{K_\alpha^c}\right] \\
    &\le \limsup_{\e\to0} \sup_{\omega\in K_\alpha} |\Phi^0(\omega) - \Phi^\e(\omega)| \vee \left[ M + \limsup_{\e\to0} \epsilon \log \mu_x^\e(K_\alpha^c) \right] \\
    &\le 0 \vee \left[ M - \inf_{\omega\in \overline{K_\alpha^c}} I(\omega) \right] \le 0 \vee (M - \alpha),
  \end{split}
\end{equation*}
which goes to zero by taking the limit $\alpha \to \infty$.

(iv). Combining the above three remarks, one can summarize a set of sufficient conditions for Assumption \ref{asmp-Phi-3}: $\Phi^0$ is bounded and $\Phi^\e$ is bounded below uniformly in $0<\e\ll 1$, and either a) or b) in the last remark holds.

\end{appendices}

\addcontentsline{toc}{section}{Acknowledgments}
\section*{Acknowledgments}
J. Hu acknowledges support from the School of Physical and Mathematical Sciences at Nanyang Technological University, and support from MOE AcRF Tier 1 under Grant No. 04MNP004255C230OST02.
The work of Q.~Huang is supported by the National Natural Science
Foundation of China under Grant No. 12501241, the Basic Research Program of Jiangsu under Grant No. BK20251280, the Zhishan Young Scholar Program of Southeast University, the Start-up Research Fund of Southeast University under Grant No. RF1028624194 and the Jiangsu Provincial Scientific Research Center of Applied Mathematics under Grant No. BK20233002. Y. Huang would like to thank the support from his research groups in the National University of Singapore and the City University of Hong Kong during his postdoctoral period. The authors acknowledge helpful discussions with Prof. Jong-Min Park. 

\addcontentsline{toc}{section}{Statements and Declarations}
\section*{Statements and Declarations}

\paragraph{Data availability.} We do not analyze or generate any datasets, because our work proceeds within a theoretical and mathematical approach.

\paragraph{Competing interests.} The authors have no competing interests to declare that are relevant to the content of this article.

\paragraph{Declaration of generative AI and AI-assisted technologies in the manuscript preparation process.} During the preparation of this work, the authors used Gemini and ChatGPT in order to correct grammar mistakes. After using these tools, the authors reviewed and edited the content as needed and take full responsibility for the content of the published article.

\addcontentsline{toc}{section}{References}
\bibliographystyle{abbrv}
\bibliography{Refs}

@book{ikeda2014stochastic,
  title={{Stochastic Differential Equations and Diffusion Processes}},
  author={Ikeda, Nobuyuki and Watanabe, Shinzo},
  volume={24},
  year={2014},
  publisher={Elsevier}
}

@article{Schrodinger1926,
  author  = {E. Schr{\"o}dinger},
  title   = {Quantisierung als Eigenwertproblem},
  journal = {Ann. Phys.},
  volume  = {79},
  year    = {1926},
  pages   = {361--376}
}

@article{Beurling1960,
  author  = {A. Beurling},
  title   = {An automorphism of product measures},
  journal = {Ann. Math.},
  volume  = {72},
  number  = {1},
  year    = {1960},
  pages   = {189--200}
}

@article{CruzeiroZambrini1991,
  author  = {A. B. Cruzeiro and J.-C. Zambrini},
  title   = {Malliavin calculus and {Euclidean} quantum mechanics},
  journal = {J. Funct. Anal.},
  volume  = {96},
  number  = {1},
  year    = {1991},
  pages   = {62--95}
}

@incollection{CruzeiroFollmerZambrini2006,
  author    = {A. B. Cruzeiro and H. F{\"o}llmer and J.-C. Zambrini},
  title     = {Bernstein processes associated with a {Markov} process},
  booktitle = {Stochastic Analysis and Mathematical Physics},
  editor    = {R. Rebolledo},
  publisher = {Birkh{\"a}user},
  year      = {2006},
  pages     = {41--72}
}

@article{Schrodinger1932,
  author  = {E. Schr{\"o}dinger},
  title   = {Sur la th{\'e}orie relativiste de l'{\'e}lectron et l'interpr{\'e}tation de la m{\'e}canique quantique},
  journal = {Ann. Inst. H. Poincar{\'e}},
  volume  = {2},
  year    = {1932},
  pages   = {269--310}
}

@article{Zambrini1986,
  author  = {J.-C. Zambrini},
  title   = {Variational processes and stochastic versions of mechanics},
  journal = {J. Math. Phys.},
  volume  = {27},
  number  = {9},
  year    = {1986},
  pages   = {2307--2330}
}

@book{peliti2021stochastic,
  title={{Stochastic Thermodynamics: An Introduction}},
  author={Peliti, Luca and Pigolotti, Simone},
  year={2021},
  publisher={Princeton University Press}
}

@book{mikami2021stochastic,
  title={{Stochastic Optimal Transportation: Stochastic Control with Fixed Marginals}},
  author={Mikami, Toshio},
  year={2021},
  publisher={Springer Nature}
}

@inproceedings{Bernstein1932,
  author    = {S. Bernstein},
  title     = {Sur les liaisons entre les grandeurs al{\'e}atoires},
  booktitle = {Proc. Int. Congr. Math.},
  address   = {Z{\"u}rich},
  year      = {1932},
  volume    = {1},
  pages     = {288--309}
}

@article{PrivaultYangZambrini2016,
  author  = {N. Privault and X. Yang and J.-C. Zambrini},
  title   = {Large deviations for Bernstein bridges},
  journal = {Stochastic Process. Appl.},
  volume  = {126},
  number  = {5},
  year    = {2016},
  pages   = {1285--1308}
}

@book{Galichon2018,
  author    = {A. Galichon},
  title     = {Optimal Transport Methods in Economics},
  publisher = {Princeton Univ. Press},
  year      = {2018}
}

@book{hollander2000large,
  title={Large Deviations},
  author={Hollander, Frank},
  volume={14},
  year={2000},
  publisher={American Mathematical Soc.}
}

@article{selk2024smallnoise,
  title={The Small-Noise Limit of the Most Likely Element is the Most Likely Element in the Small-Noise Limit},
  author={Selk, Zachary and Honnappa, Harsha},
  journal={ALEA, Lat. Am. J. Probabil. Math. Stat.},
  volume={21},
  pages={849--862},
  year={2024},
  doi={10.30757/ALEA.v21-35}
}

@article{du2021graph,
  title={The graph limit of the minimizer of the {O}nsager-{M}achlup functional and its computation},
  author={Du, Qiang and Li, Tiejun and Li, Xiaoguang and Ren, Weiqing},
  journal={Science China Mathematics},
  volume={64},
  pages={239--280},
  year={2021},
  publisher={Springer}
}

@article{selk2021information,
  title={Information projection on {B}anach spaces with applications to state independent {KL}-weighted optimal control},
  author={Selk, Zachary and Haskell, William and Honnappa, Harsha},
  journal={Applied Mathematics and Optimization},
  volume={84},
  number={1},
  pages={805--835},
  year={2021},
  publisher={Springer}
}

@article{leonard2014survey,
  title={A survey of the {S}chr{\"o}dinger problem and some of its connections with optimal transport},
  author={L{\'e}onard, Christian},
  journal={Discrete \& Continuous Dynamical Systems},
  volume={34},
  number={4},
  pages={1533--1574},
  year={2014},
  publisher={American Institute of Mathematical Sciences (AIMS)}
}

@article{durr1978onsager,
  title={The {O}nsager-{M}achlup function as {L}agrangian for the most probable path of a diffusion process},
  author={D{\"u}rr, Detlef and Bach, Alexander},
  journal={Communications in Mathematical Physics},
  volume={60},
  number={2},
  pages={153--170},
  year={1978},
  publisher={Springer}
}

@article{dashti2013map,
  title={{MAP} estimators and their consistency in {B}ayesian nonparametric inverse problems},
  author={Dashti, Masoumeh and Law, Kody J H and Stuart, Andrew M and Voss, Jochen},
  journal={Inverse Problems},
  volume={29},
  number={9},
  pages={095017},
  year={2013},
  publisher={IOP Publishing}
}

@article{pinski2015kullback,
  title={Kullback--{L}eibler approximation for probability measures on infinite dimensional spaces},
  author={Pinski, Francis J and Simpson, Gideon and Stuart, Andrew M and Weber, Hendrik},
  journal={SIAM Journal on Mathematical Analysis},
  volume={47},
  number={6},
  pages={4091--4122},
  year={2015},
  publisher={SIAM}
}

@article{lu2017gaussian,
  title={Gaussian approximations for transition paths in {B}rownian dynamics},
  author={Lu, Yulong and Stuart, Andrew and Weber, Hendrik},
  journal={SIAM Journal on Mathematical Analysis},
  volume={49},
  number={4},
  pages={3005--3047},
  year={2017},
  publisher={SIAM}
}

@book{baudoin2014diffusion,
  title={Diffusion Processes and Stochastic Calculus},
  author={Baudoin, Fabrice},
  year={2014},
  publisher={European Mathematical Society}
}

@book{nualart2006malliavin,
  title={The Malliavin Calculus and Related Topics},
  author={Nualart, David},
  volume={1995},
  year={2006},
  publisher={Springer}
}

@book{Kallenberg2021FoundationsOM,
  title={Foundations of Modern Probability},
  author={Kallenberg, Olav},
  year={1997},
  publisher={Springer}
}

@book{friedman1964partial,
  author    = {Avner Friedman},
  title     = {Partial Differential Equations of Parabolic Type},
  publisher = {Prentice-Hall},
  year      = {1964},
  address   = {Englewood Cliffs, NJ},
}

@article{truman2012link,
  title={A link of stochastic differential equations to nonlinear parabolic equations},
  author={Truman, Aubrey and Wang, FengYu and Wu, JiangLun and Yang, Wei},
  journal={Science China Mathematics},
  volume={55},
  pages={1971--1976},
  year={2012},
  publisher={Springer}
}

@book{karatzas2012brownian,
  title={Brownian Motion and Stochastic Calculus},
  author={Karatzas, Ioannis and Shreve, Steven},
  volume={113},
  year={2012},
  publisher={Springer Science \& Business Media}
}

@article{huang2023second,
  title={From second-order differential geometry to stochastic geometric mechanics},
  author={Huang, Qiao and Zambrini, Jean-Claude},
  journal={Journal of Nonlinear Science},
  volume={33},
  number={4},
  pages={67},
  year={2023},
  publisher={Springer}
}

@inproceedings{huang2023gauge,
  title={Gauge Transformations in Stochastic Geometric Mechanics},
  author={Huang, Qiao and Zambrini, Jean-Claude},
  booktitle={International Conference on Geometric Science of Information},
  pages={583--591},
  year={2023},
  organization={Springer}
}

@book{DZ98,
   author = {Amir Dembo and Ofer Zeitouni},
   city = {Berlin, Heidelberg},
   isbn = {978-3-642-03310-0},
   publisher = {Springer Berlin Heidelberg},
   title = {Large Deviations Techniques and Applications},
   volume = {38},
   year = {2010}
}

@book{fleming2006controlled,
   author = {Wendell H. Fleming and H. Mete Soner},
   city = {New York},
   doi = {10.1007/0-387-31071-1},
   isbn = {0-387-26045-5},
   publisher = {Springer-Verlag},
   title = {Controlled Markov Processes and Viscosity Solutions},
   volume = {25},
   year = {2006}
}

@article{jamison1974reciprocal,
  title={Reciprocal processes},
  author={Jamison, Benton},
  journal={Zeitschrift f{\"u}r Wahrscheinlichkeitstheorie und Verwandte Gebiete},
  volume={30},
  number={1},
  pages={65--86},
  year={1974},
  publisher={Springer}
}

@incollection{Leo14b,
  title={Some properties of path measures},
  author={L{\'e}onard, Christian},
  booktitle={S{\'e}minaire de Probabilit{\'e}s XLVI},
  pages={207--230},
  year={2014},
  publisher={Springer}
}

@book{bogachev1998gaussian,
  title={Gaussian Measures},
  author={Bogachev, Vladimir Igorevich},
  number={62},
  year={1998},
  publisher={American Mathematical Soc.},
  series={Placeholder Series}
}

@article{seifert2005entropy,
  title={Entropy production along a stochastic trajectory and an integral fluctuation theorem},
  author={Seifert, Udo},
  journal={Physical Review Letters},
  volume={95},
  number={4},
  pages={040602},
  year={2005},
  publisher={APS}
}

@book{Nel01,
  title={Dynamical Theories of Brownian Motion},
  author={Nelson, Edward},
  edition={2nd},
  volume={106},
  year={2001},
  publisher={Princeton University Press}
}

@article{anderson1982reverse,
  title={Reverse-time diffusion equation models},
  author={Anderson, Brian DO},
  journal={Stochastic Processes and their Applications},
  volume={12},
  number={3},
  pages={313--326},
  year={1982},
  publisher={Elsevier}
}

@article{boffi2024deep,
  title={Deep learning probability flows and entropy production rates in active matter},
  author={Boffi, Nicholas M and Vanden-Eijnden, Eric},
  journal={Proceedings of the National Academy of Sciences},
  volume={121},
  number={25},
  pages={e2318106121},
  year={2024},
  publisher={National Acad Sciences}
}

@article{ge2012analytical,
  title={Analytical mechanics in stochastic dynamics: Most probable path, large-deviation rate function and {Hamilton--Jacobi} equation},
  author={Ge, Hao and Qian, Hong},
  journal={International Journal of Modern Physics B},
  volume={26},
  number={24},
  pages={1230012},
  year={2012},
  publisher={World Scientific}
}

@book{arnol2013mathematical,
  title={{Mathematical Methods of Classical Mechanics}},
  author={Arnol'd, Vladimir Igorevich},
  volume={60},
  year={2013},
  publisher={Springer Science \& Business Media}
}

@article{ao2008emerging,
  title={Emerging of stochastic dynamical equalities and steady state thermodynamics from Darwinian dynamics},
  author={Ao, Ping},
  journal={Communications in Theoretical Physics},
  volume={49},
  number={5},
  pages={1073},
  year={2008},
  publisher={IOP Publishing}
}

@article{qian2011nonlinear,
  title={Nonlinear stochastic dynamics of mesoscopic homogeneous biochemical reaction systems—an analytical theory},
  author={Qian, Hong},
  journal={Nonlinearity},
  volume={24},
  number={6},
  pages={R19},
  year={2011},
  publisher={IOP Publishing}
}

@misc{hairer2011signal,
  title={Signal processing problems on function space: Bayesian formulation, stochastic {PDEs} and effective {MCMC} methods},
  author={Hairer, Martin and Stuart, Andrew M and Voss, Jochen},
  journal={The Oxford handbook of nonlinear filtering},
  pages={833--873},
  year={2011},
  publisher={Oxford Univ. Press, Oxford}
}

@article{stuart2010inverse,
  title={Inverse problems: a {B}ayesian perspective},
  author={Stuart, Andrew M},
  journal={Acta Numerica},
  volume={19},
  pages={451--559},
  year={2010},
  publisher={Cambridge University Press}
}

@article{todorov2009efficient,
  title={Efficient computation of optimal actions},
  author={Todorov, Emanuel},
  journal={Proceedings of the National Academy of Sciences},
  volume={106},
  number={28},
  pages={11478--11483},
  year={2009},
  publisher={National Academy of Sciences}
}

@article{peng1992stochastic,
  title={Stochastic {H}amilton--{J}acobi--{B}ellman equations},
  author={Peng, Shige},
  journal={SIAM Journal on Control and Optimization},
  volume={30},
  number={2},
  pages={284--304},
  year={1992},
  publisher={SIAM}
}

@incollection{pardoux1999bsdes,
  title={{BSDEs}, weak convergence and homogenization of semilinear {PDE}s},
  author={Pardoux, {\'E}tienne},
  booktitle={Nonlinear analysis, differential equations and control},
  pages={503--549},
  year={1999},
  publisher={Springer}
}

@article{crandall1984some,
  title={Some properties of viscosity solutions of {H}amilton-{J}acobi equations},
  author={Crandall, Michael G and Evans, Lawrence C and Lions, P-L},
  journal={Transactions of the American Mathematical Society},
  volume={282},
  number={2},
  pages={487--502},
  year={1984}
}

@article{bismut1976linear,
  title={Linear quadratic optimal stochastic control with random coefficients},
  author={Bismut, Jean-Michel},
  journal={SIAM Journal on Control and Optimization},
  volume={14},
  number={3},
  pages={419--444},
  year={1976},
  publisher={SIAM}
}

@article{bismut1973conjugate,
  title={Conjugate convex functions in optimal stochastic control},
  author={Bismut, Jean-Michel},
  journal={Journal of Mathematical Analysis and Applications},
  volume={44},
  number={2},
  pages={384--404},
  year={1973},
  publisher={Academic Press}
}

@article{pardoux1990adapted,
  title={Adapted solution of a backward stochastic differential equation},
  author={Pardoux, Etienne and Peng, Shige},
  journal={Systems \& Control Letters},
  volume={14},
  number={1},
  pages={55--61},
  year={1990},
  publisher={Elsevier}
}

@article{mikami2004monge,
  title={Monge’s problem with a quadratic cost by the zero-noise limit of h-path processes},
  author={Mikami, Toshio},
  journal={Probability Theory and Related Fields},
  volume={129},
  number={2},
  pages={245--260},
  year={2004},
  publisher={Springer}
}

@article{leonard2012schrodinger,
  title={From the {Schr\"o}dinger problem to the {M}onge--{K}antorovich problem},
  author={L{\'e}onard, Christian},
  journal={Journal of Functional Analysis},
  volume={262},
  number={4},
  pages={1879--1920},
  year={2012},
  publisher={Elsevier}
}

@article{qian2001mathematical,
  title={Mathematical formalism for isothermal linear irreversibility},
  author={Qian, Hong},
  journal={Proceedings of the Royal Society of London. Series A: Mathematical, Physical and Engineering Sciences},
  volume={457},
  number={2011},
  pages={1645--1655},
  year={2001},
  publisher={The Royal Society}
}

@article{onsager1953fluctuations,
  title={Fluctuations and irreversible processes},
  author={Onsager, Lars and Machlup, Stefan},
  journal={Physical Review},
  volume={91},
  number={6},
  pages={1505},
  year={1953},
  publisher={APS}
}

@article{machlup1953fluctuations,
  title={Fluctuations and irreversible process. {II}. Systems with kinetic energy},
  author={Machlup, Stefan and Onsager, Lars},
  journal={Physical Review},
  volume={91},
  number={6},
  pages={1512},
  year={1953},
  publisher={APS}
}

@article{miao2024emergence,
  title={Emergence of Newtonian deterministic causality from stochastic motions in continuous space and time},
  author={Miao, Bing and Qian, Hong and Wu, Yong-Shi},
  journal={arXiv preprint arXiv:2406.02405},
  year={2024}
}

@article{huang2023stochastic,
  title={Stochastic geometric mechanics in nonequilibrium thermodynamics: {Schr\"odinger meets Onsager}},
  author={Huang, Qiao and Zambrini, Jean-Claude},
  journal={Journal of Physics A: Mathematical and Theoretical},
  volume={56},
  number={13},
  pages={134003},
  year={2023},
  publisher={IOP Publishing}
}

@inproceedings{huang2022hamilton,
  title={{Hamilton--Jacobi--Bellman} equations in stochastic geometric mechanics},
  author={Huang, Qiao and Zambrini, Jean-Claude},
  booktitle={Physical Sciences Forum},
  volume={5},
  pages={37},
  year={2022},
  organization={MDPI}
}

@article{huang2026entropy,
  title={Entropy production in non-Gaussian active matter: A unified fluctuation theorem and deep learning framework},
  author={Huang, Yuanfei and Liu, Chengyu and Miao, Bing and Zhou, Xiang},
  journal={Physical Review Letters},
  volume={136},
  number={6},
  pages={068302},
  year={2026},
  publisher={APS}
}

@book{giaquinta1996calculus,
  title={Calculus of Variations I, II},
  author={Giaquinta, Mariano and Hildebrandt, Stefan},
  volume={310, 311},
  year={1996},
  publisher={Springer-Verlag Berlin Heidelberg}
}

@book{chung2003introduction,
  title={Introduction to Random Time and Quantum Randomness},
  author={Chung, Kai Lai and Zambrini, Jean-Claude},
  edition={new},
  volume={1},
  year={2003},
  publisher={World Scientific}
}

@article{leonard2011stochastic,
  title={Stochastic derivatives and generalized $h$-transforms of {Markov} processes},
  author={L{\'e}onard, Christian},
  journal={arXiv preprint arXiv:1102.3172},
  year={2011}
}

@article{leonard2022feynman,
  title={{Feynman-Kac} formula under a finite entropy condition},
  author={L{\'e}onard, Christian},
  journal={Probability Theory and Related Fields},
  volume={184},
  number={3},
  pages={1029--1091},
  year={2022},
  publisher={Springer}
}

@article{antoniou2002caratheodory,
  title={Caratheodory and the foundations of thermodynamics and statistical physics},
  author={Antoniou, Ioannis E},
  journal={Foundations of Physics},
  volume={32},
  number={4},
  pages={627--641},
  year={2002},
  publisher={Springer}
}

@article{peterson1979analogy,
  title={Analogy between thermodynamics and mechanics},
  author={Peterson, Mark A},
  journal={American Journal of Physics},
  volume={47},
  number={6},
  pages={488--490},
  year={1979},
  publisher={AIP Publishing}
}

\end{document}